\newtheorem{definition}{Definition}[section]
\newtheorem{theorem}{Theorem}[section]
\newtheorem{proposition}{Proposition}[section]
\newtheorem{corollary}{Corollary}[section]
\newtheorem{remark}{Remark}[section]
\newtheorem{lemma}{Lemma}[section]
\numberwithin{equation}{section}
\title{Hopf Algebras Concerning Matrices or Finite Sets
        and Their Application to Star Product of Scalar fields}
\author{Zhou Mai \footnote{address:Colleague of Mathematical Science, Nankai University, Weijin Road, Tianjin City, Republic China;
            email address: zhoumai@nankai.edu.cn}}
\begin{document}

\maketitle

\begin{abstract}
In this article the Hopf algebra structure concerning the finite 
sets is presented. Here the crucial stage is the operation
of quotient about finite subsets, the sequences of disjoint subsets  
and power sets of some finite set following the ideas of
Connes-Kreimer Hopf algebra (\cite{1,2}), hence,  
the construction in the present article is the generalization of Connes-Kreimer
Hpof algebra consisting of Feynmam diagrams.  
As applications based on the construction concerning
abstract finite sets, the Hopf algebras consisted of matrices,
or, scalar fields under the star product is constructed
as well.
\end{abstract}

\tableofcontents

\section{Introduction}

In the present article we generalise Connes-Kreimer
Hpof algebra (see A. Connes an D. Kreimer \cite{1,2})
consisting of Feynmam diagrams to the situations of abstract
finite sets, matrices and star product of scalar field,
where the construction for the case of finite sets is essential.
The heartening observation in 
H. Figueroa and J.M. Gracia-Bondia\cite{3}
(see J. M. Gracia-Bondia, J. C. Varilly and
H. Figueroa \cite{4} also)
shows that Connes-Kreimer's coproduct of
Feynman diagrams can be admitted to subgraphs
such that the co-associativity is available still.
A subgraph of a connected Feynman diagram is a
subdiagram determined by its vertices completely.
Somehow we can centre on vertices for the structure
of Hopf algebra of Feynman diagrams.
This is our motivation to generalise the structure of Hopf
algebra about Feynman diagrams to more general cases.
Roughly speaking, a abstract finite
set can be viewed as a set of "vertices".

In our setting the key issue is the notation of quotient
which is the generalisation of similar notation of 
Feynman diagrams.
In order to generalise the structures concerning Feynman
diagrams we need to "translate" some other notations of Feynman
diagrams into language of set theory, or, matrices and star product. 
For example, subgraphs are translated as subsets or sequences
of disjoint union subsets, factorisations of Feynman amplitudes
are translated to be partitions of some subset. 
We establish those notations
in a formal way from pure algebraic viewpoint,
actually, under our consideration there is not sub-divergence
to be considered. In our setting the
crucial part is to construct the coproduct which is
co-associative and co-nilpotent such that the
tenser algebra and symmetric tenser algebra of
the coalgebra become Hopf algebra.

The article is organized as following. In the section 2
the notations of quotient or collapsing for 
subsets are discussed in details.
In the section 3 we construct two types of coproducts
in finite sets. 
In the section 4 we discuss the case of matrices.
Finally, in the section 5 we discuss star product.

\section{Quotient and collapsing of finite sets}

In this article every issue can be reduced to the case of
finite sets, therefore, we restrict our consideration in the 
case of finite sets only.

\subsection{Partitions}

For an abstract finite set $A$, $\#A=d$($d$ is a positive integer), let
$(I_{1},\cdots, I_{l})$ be a sequence of disjoint non-trivial
subsets in $A$,
it is also denoted by $(I_{i})$ for short, we can always regard $(I_{i})$ as a
partition of some set. Actually, let $I=\bigcup_{i=1}^{l}I_{i}$, then
$(I_{i})$ is a partition of $I$, i.e. $(I_{i})\in \mathbf{Part}(I)$,
where $\mathbf{Part}(B)$ denotes the set of all partitions of
some finite set $B$, i.e.

$$
\mathbf{Part}(B)=\{\{I_{i}\}|\,\bigcup_{i}I_{i}=B,
I_{i}\cap I_{i^{\prime}}=\emptyset,i\not= i^{\prime}\}.
$$
Thus, for simplicity, we call the sequence of disjoint non-trivial subsets
of $A$ the partition in $A$ below. For a partition $(I_{i})$ and a
subset $U\subset A$ we say $(I_{i})$ is in $U$, denoted by 
$(I_{i})\subset U$, if $\bigcup_{i}I_{i}\subset U$.

For two partitions $(I_{1},\cdots, I_{l})$
and $(J_{1},\cdots, J_{k})$ in $A$, let $I=\bigcup_{i=1}^{l}I_{i}$
and $J=\bigcup_{j=1}^{k}J_{j}$, we need the following notations:

\begin{itemize}
  \item \textbf{Joint:} $(I_{i}\cap J_{j})$ is a partition of $I\cap J$
  denoted by $(I_{i})\cap(J_{j})$. If $I\cap J=\emptyset$ we say $(I_{i})$ and
  $(J_{j})$ are disjoint denoted by $(I_{i})\cap(J_{j})=\emptyset$.
  \item \textbf{Union:} If $(I_{i})\cap(J_{j})=\emptyset$, 
  $(I_{1},\cdots,I_{l};J_{1},\cdots,J_{k})$ is a partiton of $I\cup J$ denoted
  by $(I_{i})\cup(J_{j})$.
  \item \textbf{Inclusion:} If for $\forall\,I_{i},\exists\,J_{j}$, such that
  $I_{i}\subset J_{j}$, we say $(J_{j})$ includes $(I_{i})$ denoted by
  $(I_{i})\subset(J_{j})$. In this case we call $(I_{i})$ is a sub-partition
  of $(J_{j})$.
\end{itemize}

Moreover, we introduce some notations as follows:

\begin{itemize}
  \item The issues in this article involve the power-set closely,
  let $\mathcal{P}(A)$ denote the power-set of $A$, we set
  $$
  \mathcal{P}^{k+1}(A)=\mathcal{P}(\mathcal{P}^{k}(A)),
  \mathcal{P}^{1}(A)=\mathcal{P}(A),
  $$
  where $k$ is a positive integer. Then we have
  $$
  B\in\mathcal{P}^{k+1}(A)\,\Longleftrightarrow\,
  B\subset\mathcal{P}^{k}(A).
  $$
  If $B\in\mathcal{P}^{k}(A)$ we say $B$ is provided
  with power degree $k$.
  \item Let $\mathcal{P}^{2}_{dis}(A)$ denote the set
  of partitions in $A$,
  $$
  \mathcal{P}^{2}_{dis}(A)=\{\{I_{1},\cdots,I_{l}\}\in\mathcal{P}^{2}(A)
  |I_{i}\in A,\,I_{i}\cap I_{j}=\emptyset,\,i\neq j\}.
  $$
  We assume $\{\emptyset\}\in\mathcal{P}^{2}_{dis}(A)$.
  
  We define a map from $\mathcal{P}^{2}_{dis}(A)$ to $\mathcal{P}(A)$
  as following:
  
  \begin{equation}
  \mathcal{R}:\mathcal{P}^{2}_{dis}(A)\longrightarrow\mathcal{P}(A),\,
  \mathcal{R}:\{I_{1},\cdots,I_{l}\}\mapsto \bigcup\limits_{i=1}^{l}I_{i},
  \end{equation}
  where $\{I_{i}\}\in\mathcal{P}^{2}_{dis}(A)$. We call $\mathcal{R}$
  the \textbf{reversion map} which decreases the power degree of
  a set. 
  \item Let $\mathcal{P}(A)\times_{dis}\mathcal{P}^{2}_{dis}(A)$
  denote a subset of $\mathcal{P}(A)\times\mathcal{P}^{2}_{dis}(A)$,
  for $\{U\}\in\mathcal{P}(A)$ and 
  $\{I_{1},\cdots,I_{l}\}\in\mathcal{P}^{2}_{dis}(A)$,
  
  $$
  (\{U\},\{I_{i}\})\in\mathcal{P}(A)\times_{dis}
  \mathcal{P}^{2}_{dis}(A)\Longleftrightarrow
  U\cap\mathcal{R}(\{I_{i}\})=\emptyset.
  $$
  $\mathcal{P}(A)\times_{dis}\mathcal{P}^{2}_{dis}(A)$
  is also denoted by $\Xi_{A}$ for short.
  $\Xi_{A}$ will play the important role in this
  article.
  Usually we denote $(\{U\},\{I_{i}\})$ by $U\cup\{I_{i}\}$
  without confusion, actually, $\{U\}$ and $\{I_{i}\}$
  are provided with different power degree, thus
  here the disjoint union occurs only.
  Moreover, the reversion map $\mathcal{R}$ can be
  extended to a map from $\Xi_{A}$ to
  $\mathcal{P}(A)$,

\begin{equation}
\mathcal{R}_{1}(U\cup\{I_{i}\})=U\cup\mathcal{R}(\{I_{i}\}).
\end{equation}

  \item We define $\mathcal{P}_{dis}(\Xi_{A})
  \subset\mathcal{P}(\Xi_{A})$ as following:
  
  $$
  \begin{array}{c}
  \{I_{1}\cup J_{1},\cdots,I_{l}\cup J_{l}\}\in \mathcal{P}_{dis}(\Xi_{A})
  \,\Longleftrightarrow \\
  J_{i}\in\mathcal{P}_{dis}^{2}(A),\,i=1,\cdots,l,
  \{I_{i}\},\,\{\mathcal{R}(J_{i})\}\in\mathcal{P}_{dis}^{2}(A),\,
  \mathcal{R}(\{I_{i}\})\cap\mathcal{R}(\{\mathcal{R}(J_{i})\})
  =\emptyset.
  \end{array}
  $$
  We can extend the reversion map to $\mathcal{P}_{dis}(\Xi_{A})$,
  denoted by $\mathcal{R}_{1}$ also,
  \begin{equation}
  \mathcal{R}_{1}(\{I_{i}\cup J_{i}\})
  =\{\mathcal{R}_{1}(I_{i}\cup J_{i})\}
  =\{I_{i}\cup\mathcal{R}(J_{i})\}.
  \end{equation}
  $\mathcal{R}_{1}$ is map from $\mathcal{P}_{dis}(\Xi_{A})$ to
  $\mathcal{P}_{dis}^{2}(A)$ which decreases the power degree
  of sets also.
  
  \item Let $U\cup\{K_{\lambda}\}_{1\leq\lambda\leq l}\in\Xi_{A}$,
  $\{I_{i}\cup J_{i}\}_{1\leq i\leq k}\in\mathcal{P}_{dis}(\Xi_{A})$,
  we say $\{I_{i}\cup J_{i}\}\subset U\cup\{K_{\lambda}\}$ if
  $\mathcal{R}(\{I_{i}\})\subset U$, and 
  $\mathcal{R}(\{J_{i}\})\subset\{K_{\lambda}\}$, 
  where $\{K_{1},\cdots,K_{l}\}$ is regarded
  as a subset of $\mathcal{P}(A)$.
  
  \item Let $\{I_{i}\cup J_{i}\},\{K_{\lambda}\cup 
  L_{\lambda}\}\in\mathcal{P}_{dis}(\Xi_{A})$, we say
  $\{K_{\lambda}\cup L_{\lambda}\}\subset\{I_{i}\cup J_{i}\}$, 
  if for $\forall\lambda$, $\exists i$,
  such that $K_{\lambda}\cup L_{\lambda}\subset I_{i}\cup J_{i}$
  (i.e. $K_{\lambda}\subset I_{i}$ and $L_{\lambda}\subset J_{i}$).
\end{itemize}

\begin{remark}
$\\$
\begin{itemize}
\item In above discussions there are two ways to express
the partitions, which are $(I_{i})$ and $\{I_{i}\}$. In this article
the symbol $(\cdot)$ prefers the sequences of subsets,
and the symbol $\{\cdot\}$ prefer the subsets or elements
in power-set, for example, let $U$ be a subset of $A$, then
we have $U\subset A$ and $\{U\}\in \mathcal{P}(A)$.
\item Because $\mathcal{P}(A)$ and $\mathcal{P}^{2}(A)$
are completely different sets, we can identify 
$\mathcal{P}(A)\times\mathcal{P}^{2}(A)$ with
$\mathcal{P}(A)\cup\mathcal{P}^{2}(A)$ for convenience.
The more general cases are similar.
\item We can always allow that a partition consists of single set,
thus, in this sense, we are able to think of $\Xi_{A}$ as a subset
of $\mathcal{P}_{dis}(\Xi_{A})$.
\end{itemize}
\end{remark}

\subsection{Basic definition and properties of quotient}

For a given finite set $A$($\#[A]=d>0$),
let $U,I$ be subsets of $A$, we want to construct an operation
called quotient or collapsing which can be regarded
as a map:

$$
\begin{array}{ccc}
         & collapsing &    \\
\mathcal{P}(A)\times \mathcal{P}(A)& \longrightarrow  &  
\Xi_{A}=\mathcal{P}(A)\times_{dis} \mathcal{P}^{2}_{dis}(A). 
\end{array}
$$
We define the quotient
of $U$ by $I$ denoted by $U\diagup I$ in the following way.

\begin{definition}
Let $U,I\in A$.

\begin{itemize}
\item 

\begin{equation}
U\diagup I:=(U\setminus I)\cup \{U\cap I\}.
\end{equation}
\item Particularly,

$$
U\diagup I=
\left\{
\begin{array}{cc}
U\cup\{\emptyset\}, & I\cap U=\emptyset, \\
\emptyset\cup\{U\}, & U\subset I,
\end{array}
\right.
$$
where $\{\emptyset\}\in\mathcal{P}^{2}_{dis}(A)$.
\end{itemize}

In the above statements the symbol $\{\cdot\}$ denotes
an element in power-set $\mathcal{P}(A)$.
\end{definition}

\begin{remark}
$\\$
\begin{itemize}
  \item We call procedure from pair $(U,I)$ to quotient $U\diagup I$
  the collapsing. Roughly speaking, 
  the subset $I\cap U$ collapses to a new "ideal element".
  The notations of quotient and collapsing are motivated by the quotient and
  collapsing of Feynman diagrams.
  \item Particularly, we have
  $$
  U\diagup U=\emptyset\cup\{U\},\,U\diagup \emptyset=U\cup\{\emptyset\}.
  $$
  In the situation of $U\diagup\{\emptyset\}$, we identify
  $U\cup\{\emptyset\}$ with $U$.
  \item The definition 2.1 shows that the quotient $U\diagup I$
  is determined by $U\cap I$ solely. Actually,
  we can take $I^{\prime}=I\cap U$ instead of
  $I$, where we identify the $I$ with $I^{\prime}$ as  same "ideal element".
  The key idea is that the subset $I\cap U$ of $U$ collapses to a "ideal element".
  Generally, for two sets $I_{1}$, $I_{2}$,
  if $I_{1}\cap U=I_{2}\cap U$ we have $U\diagup I_{1}=U\diagup I_{2}$.
  In summary, the quotient of set concerns its subsets really.
  Without loss of generality, when we discuss the quotient 
  $U\diagup I$, we will assume $I\subset U$ in 
  discussion below.
  \item In definition 2.1 we express the quotient as $(U\setminus I)\cup \{I\cap U\}$,
  where the union involve the sets of different type, it is disjoint union always.
  Therefore, we can regard that as a pair $(U\setminus I,\{I\cap U\})$ belonging
  to $\mathcal{P}(A)\times\mathcal{P}^{2}(A)$.  
\end{itemize}
\end{remark}

For three finite sets $U,V,I\subset A$,
$I\subset U\cap V$,
it is obvious that we have
\begin{equation}
(U\diagup I)\cap(V\diagup I)=(U\cap V)\diagup I,
\end{equation}
and
\begin{equation}
(U\diagup I)\cup(V\diagup I)=(U\cup V)\diagup I.
\end{equation}
Particularly, if $I\subset U\subset V$ we have
$$
U\diagup I\subset V\diagup I.
$$

Now we consider the case of making quotient repeatedly.
Let $I\subset U\subset A$ and $J\subset U\diagup I$,  
we can make collapsing two times. By definition 2.1 we have

$$
(U\diagup I)\diagup J=(((U\setminus I)\cup \{I\})\setminus J)\cup \{J\}.
$$
Let $J^{\prime}=J\cap U$,
we discuss the different cases as follows:
\begin{itemize}
  \item $\{I\}\subset J$: $J=J^{\prime}\cup \{I\}$, noting
  $J^{\prime}\cap I=\emptyset$ due to $J\subset U\diagup I$
  we have
  $$
  (U\diagup I)\diagup J=((U\setminus I)\setminus J^{\prime})\cup \{J\}
  =((U\setminus (I\cup J^{\prime}))\cup \{J\},
  $$
  where $J\in \mathcal{P}(A)\times \mathcal{P}^{2}(A)$, i.e.
  there is a component with power degree 2 in $J$.
   \item $\{I\}\notin J$: In this case $J=J^{\prime}$
   and $J\cap I=\emptyset$, We have
   $$
   (U\diagup I)\diagup J=(U\setminus (I\cup J))\cup \{I,J\}.
   $$
   
   More generally, for tow subsets $I,\,J$, $I,J\in U$,
   $I\cap J=\emptyset$, we have
   
   $$
   (U\diagup I)\diagup J=(U\diagup J)\diagup I
   =(U\setminus (I\cup J))\cup\{I,J\}.
   $$
   Additionally, 
  if $U\cap V=\emptyset$, and $I\subset U$, $J\subset V$ we have
  
  \begin{equation}
  (U\diagup I)\cup(V\diagup J)=((U\cup V)\diagup I)\diagup J.
  \end{equation}
\end{itemize}

\subsection{Quotient by partitions}

Furthermore, we consider the situation of making
collapsing many times.

\paragraph{The quotient of one subset by partition}

Here we are interested in two special situations

\textbf{The first case:}
Let $I_{1},\cdots,$ $I_{l}\subset U$, here $U\subset A$,
in addition, we assume 
$$
I_{2}\subset U\diagup I_{1},\,
\cdots,\,I_{l}\subset (((U\diagup I_{1})\diagup I_{2})
\diagup\cdots)\diagup I_{l-1},
$$
above assumption means $I_{i}\cap I_{j}=\emptyset$
($i\neq j$).
Thus $(I_{i})$ is a partition in $U$.
We define
\begin{equation}
U\diagup (I_{i}):=(\cdots((U\diagup I_{1})\diagup I_{2}\cdots)\diagup I_{l}.
\end{equation}
By definition 2.1 we have
$$
U\diagup (I_{i})=(U\setminus I)\cup\{I_{i}\cap U|\,\#(I_{i}\cap U)>1,\,1\leqslant i\leqslant l\},
$$
where $I=\mathcal{R}(\{I_{i}\})$,
thus the quotient (2.8) dose not depend on the order of $I_{i}$.
There are formulas similar to (2.5), (2.6) and (2.7) in the case of partitions.

For general sequence of subsets 
$I_{1},\cdots,$ $I_{l}\subset U$ which satisfies
$I_{i}\setminus\bigcup_{j\neq i}I_{j}\neq\emptyset$
for esch $i$, we set $I_{i}^{\prime}=I_{i}\setminus
\bigcup_{j<i}I_{j}$ ($i>1$), $I_{1}^{\prime}=I_{1}$,
then $(I_{i}^{\prime})$ is a partition in $U$, and
it is easy to check that we have
$$
U\diagup (I_{i}^{\prime})
=(\cdots((U\diagup I_{1})\diagup I_{2}\cdots)\diagup I_{l}.
$$

\begin{remark}
$\\$
\begin{itemize}
\item Above discussion shows that when we discuss
the quotient by a sequence of subsets, we can always
reduce the situation to the one of partitions.
Actually, in this article we centre on the quotient by
partitions mainly.
\item
For two partitions $(I_{i})_{1\leqslant i\leqslant k}$
 and $(J_{i})_{1\leqslant i\leqslant k}$, if
$U\cap I_{i}=U\cap J_{i}$ ($1\leqslant i\leqslant k$),
we have $U\diagup (I_{i})=U\diagup (J_{i})$ due to similar reason in
remark 2.1.
\end{itemize}
\end{remark}

\textbf{The second case:}
We consider the following sequence of quotient
$$
U\diagup I_{1},\,(U\diagup I_{1})\diagup I_{2}
\cdots,\, (((U\diagup I_{1})\diagup I_{2})
\diagup\cdots)\diagup I_{l},
$$
but here the sequence of sets $\{I_{1},\cdots,I_{l}\}$
satisfies
$$
 \{I_{1}\}\subset I_{2},\cdots, \{I_{l-1}\}\subset I_{l}.
$$
Let
$$
I_{i}^{\prime}=I_{i}\cap((((U\diagup I_{1})\diagup I_{2})
\diagup\cdots)\diagup I_{i-1})
$$
By induction we can prove that
$$
\begin{array}{c}
I_{i}^{\prime}\subset U\setminus(I_{1}\cup I_{2}^{\prime}\cup
\cdots\cup I_{i-1}^{\prime}),  \\
(((U\diagup I_{1})\diagup I_{2})
\diagup\cdots)\diagup I_{i}=
(U\setminus(I_{1}\cup I_{2}^{\prime}\cup
\cdots\cup I_{i}^{\prime}))\cup\{I_{i}\}.
\end{array}
$$
Furthermore, we have
$$
I_{i}=I_{i}^{\prime}\cup\{I_{i-1}^{\prime}\cup\{\cdots
\cup\{I_{2}^{\prime}\cup\{I_{1}\}\}\cdots\}\},\,
i=2,\cdots,l.
$$

\paragraph{The quotient of the partition by partition}

We now turn to the discussion of the quotient of partitions by partitions.
Let $(I_{1},\cdots,I_{l})$ and $(J_{1},\cdots,J_{k})$ be two
partitions in $A$, we consider the quotient $(I_{i})\diagup (J_{j})$
which is defined as following 

$$
(I_{i})\diagup (J_{j})\doteq
(I_{i}\diagup (J_{j})_{J_{j}\subset I_{i}}).
$$
Due to remark 2.2 and 2.3, without loss of generality,
we can assume $(J_{j})\subset (I_{i})$
which means for $\forall j,\,\exists i$
such that $J_{j}\subset I_{i}$.
Let $(\tilde{I}_{i})=(I_{i})\diagup (J_{j})$,
$I=\mathcal{R}(\{I_{i}\})$ and $J=\mathcal{R}(\{J_{j}\})$.
Then $(\tilde{I}_{i})=(I_{i^{\prime}})
\cup(I_{i^{\prime\prime}}\diagup(J_{j}))$,
where $I_{i^{\prime}}\cap J=\emptyset$,
$I_{i^{\prime\prime}}\cap J\not=\emptyset$,
and we identify $I_{i^{\prime}}\cup\{\emptyset\}$
with $I_{i^{\prime}}$. Now we assume 
$I_{i}\cap J\not=\emptyset$ for any $i$.
There is a decomposition
of the partition $(J_{j})$

$$
(J_{j})=\bigcup\limits_{i=1}^{l}J^{(i)},\,
J^{(i)}=(J_{\lambda_{ij}})\subset I_{i},\,i=1,\cdots,l.
$$
Then we have 

$$
\tilde{I}_{i}=I_{i}\diagup (J_{j})_{J_{j}\subset I_{i}}
=I_{i}\diagup (J_{\lambda_{ij}})
=(I_{i}\setminus J)\cup\{J_{\lambda_{ij}}\}.
$$

Now we turn to a more complicated situation. 
At first, we introduce two notations:

\begin{itemize}
	\item  $\mathbf{\mathcal{P}_{dis}^{2}(A)\times_{dis}
		\mathcal{P}_{dis}(\Xi_{A}):}$ It is 
	a subset of $\mathcal{P}_{dis}^{2}(A)\times
	\mathcal{P}_{dis}(\Xi_{A})$.
	An elment 
	$(\{I_{i}\},\{L_{j}\})\in\mathcal{P}_{dis}^{2}(A)\times_{dis}
	\mathcal{P}_{dis}(\Xi_{A})$ means
	$(I_{i})\cap(\mathcal{R}_{1}(\{L_{j}\}))=\emptyset$,
	where $\{I_{i}\}\in \mathcal{P}_{dis}^{2}(A)$ and
	$\{L_{j}\}\in \mathcal{P}_{dis}(\Xi_{A})$.
	Generally, we do not distinguish $(\{I_{i}\},\{L_{j}\})$
	from $\{I_{i}\}\cup\{L_{j}\}$, because $\{I_{i}\}$ and
	$\{L_{j}\}$ are elements in the sets of completely 
	different types.
	
	\item $\mathbf{\Xi_{A}\times_{dis} \mathcal{P}_{dis}(\Xi_{A}):}$ 
	It is a subset of
	$\Xi_{A}\times \mathcal{P}_{dis}(\Xi_{A})$.
	An element $((\{U\},\{K_{\lambda}\}),\{I_{i}\cup J_{i}\})$ in
	$\Xi_{A}\times_{dis} \mathcal{P}_{dis}(\Xi_{A})$
	denoted by 
	$U\cup\{K_{\lambda}\}\cup\{I_{i}\cup J_{i}\}$ also, 
	where $\{U\}\in \mathcal{P}(A)$,
	$\{K_{\lambda}\}\in \mathcal{P}^{2}_{dis}(A)$ and 
	$\{I_{i}\cup J_{i}\}\in \mathcal{P}_{dis}(\Xi_{A})$, 
	satisfies
	
	$$
	\begin{array}{c}
	\{K_{\lambda}\}\cup\{I_{i}\cup J_{i}\}\in
	\mathcal{P}_{dis}^{2}(A)
	\times_{dis}\mathcal{P}_{dis}(\Xi_{A}) \\    
	U\cap \mathcal{R}
	(\{K_{\lambda}\}\cup\{I_{i}\cup\mathcal{R}_{1}(\{J_{i}\})\})=\emptyset.
	\end{array}
	$$
\end{itemize}

Let
$U\subset A$, $\{K_{\lambda}\}\in\mathcal{P}^{2}_{dis}(A)$,
$(K_{\lambda})\subset U$, $\{I_{i}\cup J_{i}\}
\in\mathcal{P}_{dis}(\Xi_{A})$,
$(I_{i}\cup J_{i})\subset U\diagup(K_{\lambda})$.
Let $I=\mathcal{R}(\{I_{i}\})$,
$J=\mathcal{R}(\{J_{i}\})$, 
$K=\mathcal{R}(\{K_{\lambda}\})$.
Noting

$$
U\diagup(K_{\lambda})=
(U\setminus K)\cup\{K_{\lambda}\},
$$
thus the quotient of a set by partition
is a map

$$
\mathcal{Q}_{1}:\mathcal{P}(A)\times\mathcal{P}_{dis}^{2}(A)
\longrightarrow\Xi_{A}.
$$
Recalling the contents in section 2.1,
we know that
$(I_{i}\cup J_{i})\subset U\diagup(K_{\lambda})$
means that $\mathcal{R}(\{I_{i}\})\subset 
U\setminus\mathcal{R}(\{K_{\lambda}\})$, and $(I_{i})$
is a partition in $U\setminus\mathcal{R}(\{K_{\lambda}\})$,
moreover, $(J_{i})$ is a partition in $\{K_{\lambda}\}$,
where $\{K_{\lambda}\}$ is regarded as a subset
in $\mathcal{P}_{dis}^{2}(A)$.
Thus, we have

$$
\begin{array}{c}
((U\diagup(K_{\lambda}))\diagup(I_{i}\cup J_{i}) \\
=((U\setminus K)\cup\{K_{\lambda}\})\diagup(I_{i})\cup(J_{i})
=((U\setminus K)\diagup(I_{i}))\cup
(\{K_{\lambda}\}\diagup(J_{i})) \\
=(U\setminus(K\cup I))\cup
\{I_{i}\}\cup(\{K_{\lambda}\}\setminus J)\cup\{J_{i}\} \\
=(U\setminus(K\cup I))
\cup(\{K_{\lambda}\}\setminus J)\cup\{I_{i}\cup J_{i}\}.
\end{array}
$$
The quotient $((U\setminus K)
\cup\{K_{\lambda}\})\diagup(I_{i}\cup J_{i})$
is a map

$$
\mathcal{Q}_{2}:\Xi_{A}\times \mathcal{P}_{dis}(\Xi_{A})
\longrightarrow
\Xi_{A}\times_{dis}\mathcal{P}_{dis}(\Xi_{A}).
$$

There is a separation $\{I_{i}\cup J_{i}\}=\{I_{i^{\prime}}\cup\emptyset\}
\cup\{I_{i^{\prime\prime}}\cup J_{i^{\prime\prime}}\}$,
where $J_{i^{\prime\prime}}\neq\emptyset$. 
It is obvious that 
$$
(U\diagup(K_{\lambda}))\diagup(I_{i}\cup J_{i})=
(U\diagup((K_{\lambda})\cup(I_{i^{\prime}})))
\diagup(I_{i^{\prime\prime}}\cup J_{i^{\prime\prime}})
$$
if we identify $\{I_{i^{\prime}}\}$ with
$\{I_{i^{\prime}}\cup\emptyset\}$.

\begin{remark}
In general, for $U\subset A$, 
$\{K_{\lambda}\}\in\mathcal{P}^{2}_{dis}(A)$,
$(K_{\lambda})\subset U$, $\{I_{i}\cup J_{i}\}
\in\mathcal{P}_{dis}(\Xi_{A})$,
we define 

$$
(U\diagup(K_{\lambda}))\diagup(I_{i}\cup J_{i})
\doteq(U\diagup(K_{\lambda}))\diagup(I_{i}\cup J_{i})
_{I_{i}\cup J_{i}\subset U\diagup(K_{\lambda})}.
$$
\end{remark}

\begin{definition}
An element $U\cup\{K_{\lambda}\}\cup\{I_{i}\cup J_{i}\}
\in\Xi_{A}\times_{dis}\mathcal{P}_{dis}(\Xi_{A})$
will be devided
into two parts according to the power degree:

\begin{itemize}
	\item We call $\{U\}$ (or $U$ simply) the \textbf{original part}
	of $U\cup\{K_{\lambda}\}\cup\{I_{i}\cup J_{i}\}$ denoted by
	$$
	[U\cup\{K_{\lambda}\}\cup\{I_{i}\cup J_{i}\}]_{or}.
	$$ 
	\item We call $\{K_{\lambda}\}\cup\{I_{i}\cup J_{i}\}$
	the \textbf{ideal part} of $U\cup\{K_{\lambda}\}\cup\{I_{i}\cup J_{i}\}$
	denoted by 
	$$
	[U\cup\{K_{\lambda}\}\cup\{I_{i}\cup J_{i}\}]_{id}.
	$$
\end{itemize}
\end{definition}

\subsection{Reversion map and induced quotient}

\paragraph{Reversion map}

In this subsection we discuss the issues concerning 
the reversion map and the power degree in details.
Here we centre on the situation of quotient by partitions.
The reversion map introduced in subsection 2.1
describes the inverse procedure of quotient and
decreases the power degree of a set. 
It is obvious that

$$
\mathcal{R}(\{I_{i}\})\diagup (I_{i})=\emptyset\cup\{I_{i}\}.
$$
Generally, the quotient of a subset in $A$ by a partition
is the map from $\mathcal{P}(A)\times\mathcal{P}^{2}_{dis}(A)$
to $\Xi_{A}$. Conversely, for a given 
$(\{U\},\{I_{i}\})\in \Xi_{A}$ we have

$$
(U\cup\mathcal{R}(\{I_{i}\}))\diagup (I_{i})
=U\cup(\mathcal{R}(\{I_{i}\}\diagup (I_{i})))=U\cup\{I_{i}\}.
$$
Moreover, we have the following lemma:

\begin{lemma}(\textbf{The uniqueness of "molecule"}) 
Let $V\subset A$, $\{U\}\cup\{I_{i}\}\in\Xi_{A}$,
we have
$$
V\diagup(I_{i})=U\cup\{I_{i}\}\,\Longleftrightarrow
V=U\cup\mathcal{R}(\{I_{i}\}).
$$
\end{lemma}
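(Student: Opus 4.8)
The plan is to observe that the implication $\Leftarrow$ is already recorded in the computation immediately preceding this lemma, namely $(U\cup\mathcal{R}(\{I_i\}))\diagup(I_i)=U\cup\{I_i\}$; hence the genuine content is the uniqueness statement, i.e. the implication $\Rightarrow$. For that I would compare components inside $\Xi_A=\mathcal{P}(A)\times_{dis}\mathcal{P}^2_{dis}(A)$, using the closed formula for quotient by a partition established above (after (2.7)): for $V\subset A$ and $\{I_i\}\in\mathcal{P}^2_{dis}(A)$ with $I=\bigcup_i I_i$ one has $V\diagup(I_i)=(V\setminus I)\cup\{I_i\cap V\}$, which is to be read as the pair $(V\setminus I,\{I_i\cap V\})\in\mathcal{P}(A)\times\mathcal{P}^2_{dis}(A)$ (a disjoint union of sets of different power degree). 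Likewise $U\cup\{I_i\}$ is the pair $(U,\{I_i\})$, legitimate because $(U,\{I_i\})\in\Xi_A$. Since $\mathcal{P}(A)$ and $\mathcal{P}^2(A)$ are disjoint, equality of these two pairs amounts to the two equalities $V\setminus I=U$ and $\{I_i\cap V\}=\{I_i\}$ holding simultaneously.

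Granting this, I would argue as follows. From $\{I_i\cap V\}=\{I_i\}$, together with pairwise disjointness and non-triviality of the $I_i$: since $I_i\cap V\subseteq I_i$ and distinct $I_j$ are disjoint and non-empty, $I_i\cap V$ can coincide with some $I_j$ only when $j=i$; hence $I_i\cap V=I_i$, i.e. $I_i\subseteq V$, for every $i$, so $I\subseteq V$. Combining $V\setminus I=U$ with the disjointness $U\cap I=\emptyset$ (which is exactly the hypothesis $\{U,I_1,\dots,I_l\}\in\mathcal{P}^2_{dis}(A)$, i.e. $(U,\{I_i\})\in\Xi_A$) gives $V=U\cup I=U\cup\mathcal{R}(\{I_i\})$, as claimed. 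The degenerate partition $\{\emptyset\}$ is harmless: there $\mathcal{R}(\{\emptyset\})=\emptyset$ and $V\diagup\{\emptyset\}=V$ by Assumption $\varPi$, so both sides of the equivalence reduce to $V=U$.

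The only delicate point—and the step I would be most careful with—is the legitimacy of this component comparison: one must verify that the identifications introduced earlier (Assumption $\varPi$, and the identification of $\mathcal{P}(A)\times\mathcal{P}^2(A)$ with $\mathcal{P}(A)\cup\mathcal{P}^2(A)$) do not merge the "set part" with the "partition part", so that an element of $\Xi_A$ genuinely pins down both coordinates. Once that is granted, the remainder is a direct unwinding of Definition 2.1, the formula (2.7) for quotient by a partition, and the definition of the reversion map $\mathcal{R}$; no further estimates or auxiliary constructions are required.
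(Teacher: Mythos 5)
Your proposal is correct and follows essentially the same route as the paper's own proof: unwind the closed formula $V\diagup(I_{i})=(V\setminus\mathcal{R}(\{I_{i}\}))\cup\{V\cap I_{i}\}$, compare the original part and the ideal part separately to get $V\setminus\mathcal{R}(\{I_{i}\})=U$ and $I_{i}\subset V$, and conclude $V=U\cup\mathcal{R}(\{I_{i}\})$. Your added care about why the two components can be compared independently (disjointness of power degrees) and about the degenerate partition $\{\emptyset\}$ is a welcome tightening of what the paper leaves implicit, but it is not a different argument.
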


\begin{proof}
By definition of quotient we have
$$
V\diagup(I_{i})=(V\setminus\mathcal{R}(\{I_{i}\}))
\cup\{V\cap I_{i}\}.
$$
Noting $\{V\cap I_{i}\}=\{I_{i}\}$, we know that
$I_{i}\subset V$, thus, $(I_{i})\subset V$.
On the other hand $V\setminus\mathcal{R}(\{I_{i}\})
=U$, so the conclusion of lemma has been proved.
\end{proof}

In general, we have a generalization of lemma 2.1.

\begin{lemma}
	Let $U\cup\{K_{\lambda}\}\cup\{I_{i}\cup J_{i}\}
	\in\Xi_{A}\times_{dis} \mathcal{P}_{dis}(\Xi_{A})$,
	$I=\mathcal{R}(\{I_{i}\})$, $J=\mathcal{R}(\{J_{i}\})$,
	$K=\mathcal{R}(\{K_{\lambda}\})$, we have
	\begin{equation}
	\begin{array}{c}
	U\cup\{K_{\lambda}\}\cup\{I_{i}\cup J_{i}\} \\
	=((U\cup I\cup K\cup\mathcal{R}(J))
	\diagup((K_{\lambda})\cup J))\diagup
	(I_{i}\cup J_{i}).
	\end{array}
	\end{equation}
\end{lemma}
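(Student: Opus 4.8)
The plan is to evaluate the right-hand side of (2.8) directly, applying in turn the two quotient formulas already at our disposal: the partition-quotient formula of Subsection 2.3 (the ``first case'', $V\diagup(L_{i})=(V\setminus\mathcal{R}(\{L_{i}\}))\cup\{L_{i}\cap V\}$) for the outer quotient by $(K_{\lambda})\cup J$, and the higher-power-degree quotient formula stated at the end of Subsection 2.3 for the subsequent quotient by $(I_{i}\cup J_{i})$. All the disjointness relations needed are extracted from the two membership hypotheses. Since $U\cup\{K_{\lambda}\}\in\Xi_{A}$, the sets $U,K_{1},K_{2},\dots$ are pairwise disjoint, so $U\cap K=\emptyset$. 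Since $U\cup\{K_{\lambda}\}\cup\{I_{i}\cup J_{i}\}\in\Xi_{A}\times_{dis}\mathcal{P}_{dis}(\Xi_{A})$, unwinding the defining condition gives $(U\cup K)\cap\bigl(I\cup\mathcal{R}(J)\bigr)=\emptyset$, hence $U\cap I=U\cap\mathcal{R}(J)=K\cap I=K\cap\mathcal{R}(J)=\emptyset$. Finally, since $\{I_{i}\cup J_{i}\}\in\mathcal{P}_{dis}(\Xi_{A})$, we have $\{I_{i}\},\{\mathcal{R}(J_{i})\}\in\mathcal{P}^{2}_{dis}(A)$ and $I\cap\mathcal{R}(J)=\emptyset$. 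In particular $U,I,K,\mathcal{R}(J)$ are mutually disjoint, $(K_{\lambda})\cup J$ is a genuine partition in $A$ (as $K\cap\mathcal{R}(J)=\emptyset$), and $J=\mathcal{R}(\{J_{i}\})=\bigcup_{i}J_{i}\in\mathcal{P}^{2}_{dis}(A)$.

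First I would set $W=U\cup I\cup K\cup\mathcal{R}(J)$, a subset of $A$, and compute $W\diagup\bigl((K_{\lambda})\cup J\bigr)$. Because $W\supseteq K\cup\mathcal{R}(J)=\mathcal{R}\bigl(\{K_{\lambda}\}\cup J\bigr)$, every part of the partition $(K_{\lambda})\cup J$ is contained in $W$, so each part survives the intersection with $W$ untruncated, and the partition-quotient formula gives $W\diagup\bigl((K_{\lambda})\cup J\bigr)=\bigl(W\setminus(K\cup\mathcal{R}(J))\bigr)\cup\{K_{\lambda}\}\cup J$, where $\{K_{\lambda}\}$ together with the parts of $J$ now form the power-degree-$2$ ideal part. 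Using that $U$ and $I$ are both disjoint from $K\cup\mathcal{R}(J)$, we get $W\setminus(K\cup\mathcal{R}(J))=U\cup I$, so the intermediate object is $W_{1}:=(U\cup I)\cup\{K_{\lambda}\}\cup J$.

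Next I would apply $\diagup(I_{i}\cup J_{i})$ to $W_{1}$, checking first that $(I_{i}\cup J_{i})\subset W_{1}$: each $I_{i}\subseteq I\subseteq U\cup I$ sits in the original part (using $I\cap(K\cup\mathcal{R}(J))=\emptyset$), and each $J_{i}\subseteq\bigcup_{i}J_{i}=J$ sits in the power-degree-$2$ ideal part. Then the higher-power-degree quotient formula of Subsection 2.3 applies verbatim with $W$ playing the role of its $U$, with $(K_{\lambda})\cup J$ playing the role of its $(K_{\lambda})$, and with $(I_{i}\cup J_{i})$ playing the role of its $(I_{i}\cup J_{i})$: the sets $I_{i}$ collapse $\mathcal{R}(\{I_{i}\})=I$ out of the original part, leaving $(U\cup I)\setminus I=U$ since $U\cap I=\emptyset$; the partitions $J_{i}$ collapse $\mathcal{R}(\{J_{i}\})=J$ out of the power-degree-$2$ part, leaving $(\{K_{\lambda}\}\cup J)\setminus J=\{K_{\lambda}\}$ (here one uses that no $K_{\lambda}$ coincides with a part of $J$, which follows from $K\cap\mathcal{R}(J)=\emptyset$); and a new power-degree-$3$ component $\{I_{i}\cup J_{i}\}$ is produced. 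Hence $W_{1}\diagup(I_{i}\cup J_{i})=U\cup\{K_{\lambda}\}\cup\{I_{i}\cup J_{i}\}$, which is exactly the left-hand side, establishing (2.8). Alternatively one could deduce (2.8) by applying Lemma 2.3 twice, first peeling off the power-degree-$3$ layer $\{I_{i}\cup J_{i}\}$ and then the power-degree-$2$ layer $\{K_{\lambda}\}$, but the direct computation above is more transparent.

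The main obstacle is not any single computation — each step is a substitution into a formula from Subsections 2.2--2.3 — but the bookkeeping of power degrees and identifications: one must apply Assumption $\varPi$ and the identifications of Remark 2.1 consistently, keep straight which subsets reside in the original part and which in the degree-$2$ and degree-$3$ ideal parts after each collapsing, and, most delicately, verify at each stage that the partition being quotiented by is genuinely contained in the set it acts on, so that no part is cut down by the intersection and the formulas apply as written. Once the disjointness relations drawn from the hypotheses are recorded at the outset, the rest is routine.
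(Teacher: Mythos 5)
Your proof is correct and takes essentially the same route as the paper's: both pass through the same intermediate object $U\cup I\cup\{K_{\lambda}\}\cup J$ and rest on the same two quotient identities, the only difference being that you evaluate the right-hand side forward while the paper rewrites the left-hand side backward by peeling off the degree-$3$ and then the degree-$2$ layer. Your explicit recording of the disjointness relations extracted from the two membership hypotheses is a useful supplement, since the paper's proof leaves them implicit.
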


\begin{proof}
Let $I=\mathcal{R}(\{I_{i}\})$ and
$J=\mathcal{R}(\{J_{i}\})$, we have
$$
I\cup J=\mathcal{R}(\{I_{i}\cup J_{i}\})
=\mathcal{R}(\{I_{i}\})\cup\mathcal{R}(\{J_{i}\}).
$$
Then we have
$$
\begin{array}{c}
U\cup\{K_{\lambda}\}\cup\{I_{i}\cup J_{i}\} \\
=((U\cup I\cup\{K_{\lambda}\}\cup J)\setminus (I\cup J))
\cup\{I_{i}\cup J_{i}\} \\
=(U\cup I\cup\{K_{\lambda}\}\cup J)\diagup(I_{i}\cup J_{i}).
\end{array}
$$
Furthermore, let $K=\mathcal{R}(\{K_{\lambda}\})$,
noting that
$$
\begin{array}{c}
U\cup I\cup\{K_{\lambda}\}\cup J \\
=((U\cup I\cup K\cup\mathcal{R}(J))
\setminus(K\cup\mathcal{R}(J)))
\cup(\{K_{\lambda}\}\cup J) \\
=(U\cup I\cup K\cup\mathcal{R}(J))
\diagup((K_{\lambda})\cup J),
\end{array}
$$

\end{proof}

We need to pay attention to an interesting fact:

\begin{lemma}
	Let $\{I_{i}\}\in\mathcal{P}_{dis}^{2}(A)$, 
	$\{J_{j}\}=\{J_{j}^{\prime}\cup 
	J_{j}^{\prime\prime}\}\in\mathcal{P}_{dis}(\Xi_{A})$,
	where $(J_{j}^{\prime\prime})$ is a partition
	in $\{I_{i}\}$,  
	then
	
	\begin{equation}
	(J_{j})=(\mathcal{R}_{1}(J_{j}))
	\diagup(I_{i})_{I_{i}\in
		\mathcal{R}(\{J_{j}^{\prime\prime}\})}.
	\end{equation}
\end{lemma}

\begin{proof}
	We know that $(J_{j}^{\prime\prime})$ is
	a partition in $\{I_{i}\}$. Thus each 
	$J_{j}^{\prime\prime}$ is subset of $\{I_{i}\}$
	which means $J_{j}^{\prime\prime}=
	\{I_{i}\}_{I_{i}\in J_{j}^{\prime\prime}}$.
	Then we have
	
	$$
	J_{j}^{\prime\prime}=
	\mathcal{R}(J_{j}^{\prime\prime})
	\diagup J_{j}^{\prime\prime}=
	\mathcal{R}(J_{j}^{\prime\prime})\diagup
	(I_{i})_{I_{i}\in J_{j}^{\prime\prime}}.
	$$
	Furthermore, 
	
	$$
	\begin{array}{c}
	J_{j}=J_{j}^{\prime}\cup J_{j}^{\prime\prime}
	=J_{j}^{\prime}\cup
	(\mathcal{R}(J_{j}^{\prime\prime})\diagup
	(I_{i})_{I_{i}\in J_{j}^{\prime\prime}}) \\
	=(J_{j}^{\prime}\cup\mathcal{R}(J_{j}^{\prime\prime}))
	\diagup(I_{i})_{I_{i}\in J_{j}^{\prime\prime}}
	=\mathcal{R}_{1}(J_{j})\diagup(I_{i})_{I_{i}\in J_{j}^{\prime\prime}}.
	\end{array}
	$$
	Above formula implies (2.10) immediately.

\end{proof}

\paragraph{Induced quotient, case of $(U\diagup (I_{i}))\diagup (J_{j})$:}

We are really interested in the case of 
$(U\diagup (I_{i}))\diagup (J_{j})$
which is same as the situation discussed
in the previous subsection with slightly
different form,
where $U\subset A$, $(I_{i})$ is a partition in $U$ and
$(J_{j})$ is a partition in $U\diagup (I_{i})$.
It is obvious that we have

$$
U\diagup(I_{i})=
(U\setminus I)\cup\{I_{i}\},
$$
where $I=\mathcal{R}(\{I_{i}\})$.
We now focus on the following quotient:

$$
((U\setminus I)\cup\{I_{i}\})
\diagup(J_{j}).
$$
$\{J_{j}\}$ endows with a decomposition
$J_{j}=J_{j}^{\prime}\cup J_{j}^{\prime\prime}$,
where $J_{j}^{\prime}=J_{j}\cap(U\setminus I)$,
$J_{j}^{\prime\prime}=J_{}\cap\{I_{i}\}$.
Thus $(J_{j})\in\mathcal{P}_{dis}(\Xi_{A})$
and 

$$
(U\diagup (I_{i}))\diagup (J_{j})=
(U\setminus (I\cup J^{\prime}))\cup
\{I_{i}|I_{i}\notin J\}\cup \{J_{j}\}, 
$$ 
where $J=\mathcal{R}(\{J_{j}\})$, and
$J^{\prime}=J\cap(U\setminus I)
=\mathcal{R}(\{J_{j}^{\prime}\})$.

By definition 2.2,
$(U\diagup (I_{i}))\diagup (J_{j})$ can be divided
into two parts as following:

\begin{itemize}
\item

$$
[(U\diagup (I_{i}))\diagup (J_{j})]_{or}=
U\setminus (I\cup J^{\prime})\in \mathcal{P}(A),
$$
\item

$$
[(U\diagup (I_{i}))\diagup (J_{j})]_{id}=
\{I_{i}|I_{i}\notin J\}\cup \{J_{j}\}\in
\mathcal{P}_{dis}^{2}(A)\times_{dis}
\mathcal{P}_{dis}(\Xi_{A}).
$$
\end{itemize}

\begin{remark}
In general, $\{I_{i}\}\setminus
J^{\prime\prime}\neq\emptyset$, 
where $J^{\prime\prime}
=\mathcal{R}(\{J_{j}^{\prime\prime}\})$.
The subset $\{I_{i}|I_{i}\notin J^{\prime\prime}\}$ of
$[(U\diagup (I_{i}))\diagup (J_{j})]_{id}$ measures
the difference between $\{I_{i}\}$
and $\mathcal{R}(\{J_{j}\})$.
\end{remark}

We dived the partition $(I_{i})$ into two parts,
$(I_{i})=(I_{i^{\prime}})\cup(I_{i^{\prime\prime}})$,
where $(I_{i^{\prime}})=(I_{i})_{I_{i}\notin J^{\prime\prime}},\,          
(I_{i^{\prime\prime}})=(I_{i})_{I_{i}\in J^{\prime\prime}}$,
(or $\{I_{i^{\prime\prime}}\}=J^{\prime\prime}$).
Considering
$$
\begin{array}{c}
(id\otimes\mathcal{R}_{1})
([(U\diagup (I_{i}))\diagup (J_{j})]_{id}) \\
=\{I_{i^{\prime}}\}
\cup \{\mathcal{R}_{1}(J_{j})\}
=\{I_{i^{\prime}}\}\cup
\{J_{j}^{\prime}\cup\mathcal{R}(J_{j}^{\prime\prime})\},
\end{array}
$$
let 
$$
\{K_{\lambda}\}=(id\otimes\mathcal{R}_{1})
([(U\diagup (I_{i}))\diagup (J_{j})]_{id}),
$$
then
$$
K=\mathcal{R}(\{K_{\lambda}\})=I\cup J^{\prime},\,
(I_{i})\subset (K_{\lambda}),\, (K_{\lambda})=(I_{i^{\prime}})
\cup(\mathcal{R}_{1}(J_{j})),
$$
It is easy to check that

$$
U\diagup (K_{\lambda})=
(U\diagup(I_{i^{\prime}}))\diagup(\mathcal{R}(J_{j})).         
$$
In summary, we have

\begin{proposition}
Let $(I_{i})$ be a partition in $U$, $U\subset A$,
$(J_{j})$ be a partition in $U\diagup (I_{i})$, 
$(I_{i})=(I_{i^{\prime}})\cup(I_{i^{\prime\prime}})$
as above, taking

\begin{equation}
\{K_{\lambda}\}=(id\otimes\mathcal{R}_{1})
([(U\diagup (I_{i}))\diagup (J_{j})]_{2})
=\{I_{i^{\prime}}\}\cup\{\mathcal{R}_{1}(J_{j})\},
\end{equation}
then we have

\begin{equation}
U\diagup (K_{\lambda})=
(U\diagup(I_{i^{\prime}}))\diagup(\mathcal{R}_{1}(J_{j})),
\end{equation} 
\end{proposition}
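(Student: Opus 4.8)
The plan is to obtain the three assertions as a consolidation of the computations already performed in this subsection, pivoting on a single structural remark: by the expression $[(U\diagup(I_{i}))\diagup(J_{j})]_{2}=(\{I_{i'}\},\{J_{j}\})$ obtained above together with the definition of $\mathcal{R}_{1}$, the set $\{K_{\lambda}\}$ in (2.9) is $\{I_{i'}\}\cup\{\mathcal{R}_{1}(J_{j})\}$, so that $(K_{\lambda})$ is the disjoint union of partitions $(I_{i'})\cup(\mathcal{R}_{1}(J_{j}))$ sitting inside $U$. Granting this, (2.10) becomes an instance of the principle that a quotient by a partition may be carried out in stages along any decomposition of the partition into mutually disjoint sub-partitions, and (2.11) reduces to computing $(K_{\lambda})\diagup(I_{i})$ part by part.

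First I would check that $(K_{\lambda})=(I_{i'})\cup(\mathcal{R}_{1}(J_{j}))$ really is a partition in $U$. With $J'=\mathcal{R}(\{J_{j}'\})$ and $J''=\mathcal{R}(\{J_{j}''\})$ as above, one has $\mathcal{R}(\{\mathcal{R}_{1}(J_{j})\})=J'\cup J''$ and $J''=\mathcal{R}(\{I_{i''}\})$; since $J'\subset U\setminus I$ and $\{I_{i'}\},\{I_{i''}\}$ are disjoint sub-families of $\{I_{i}\}$, the set $\mathcal{R}(\{I_{i'}\})$ is disjoint from $J'\cup J''$, while the $\mathcal{R}_{1}(J_{j})$ are pairwise disjoint because the $J_{j}$ are. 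Hence $(K_{\lambda})\in\mathcal{P}^{2}_{dis}(A)$ with $K=\mathcal{R}(\{K_{\lambda}\})=\mathcal{R}(\{I_{i'}\})\cup J'\cup J''=I\cup J'\subset U$. Now I would invoke the closed formula $U\diagup(L_{i})=(U\setminus\mathcal{R}(\{L_{i}\}))\cup\{L_{i}\}$ established after (2.7): applied to $(K_{\lambda})$ it gives $(U\setminus(I\cup J'))\cup\{I_{i'}\}\cup\{\mathcal{R}_{1}(J_{j})\}$; applied first to $(I_{i'})$ and then to $(\mathcal{R}_{1}(J_{j}))$ — which is a partition in $U\diagup(I_{i'})$ because $J_{j}'\subset U\setminus I$ and $\mathcal{R}(J_{j}'')\subset\mathcal{R}(\{I_{i''}\})$ is disjoint from $\mathcal{R}(\{I_{i'}\})$ — it gives the same set. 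That is (2.10).

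For (2.11) I would evaluate $(K_{\lambda})\diagup(I_{i})$ on each part of $(K_{\lambda})$. A part of the form $I_{i'}$ meets only itself among the $I_{i}$, so $I_{i'}\diagup(I_{i})=I_{i'}\diagup I_{i'}=\emptyset\cup\{I_{i'}\}$. A part of the form $\mathcal{R}_{1}(J_{j})=J_{j}'\cup\mathcal{R}(J_{j}'')$ has $J_{j}'\cap I=\emptyset$, and since the $I_{i}$ are pairwise disjoint, $\mathcal{R}(J_{j}'')\cap I_{i}$ equals $I_{i}$ if $I_{i}\in J_{j}''$ and $\emptyset$ otherwise; using that a quotient depends only on the intersections with the collapsed parts, $\mathcal{R}_{1}(J_{j})\diagup(I_{i})=\mathcal{R}_{1}(J_{j})\diagup J_{j}''=J_{j}'\cup J_{j}''=J_{j}$, just as in the earlier computation of $K_{j}\diagup(M_{i})$. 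Hence $(K_{\lambda})\diagup(I_{i})$ equals $(J_{j})$ together with the extra parts $\emptyset\cup\{I_{i'}\}$, one for each $I_{i'}$. If $\{I_{i'}\}=\emptyset$ there are no such parts and $(K_{\lambda})\diagup(I_{i})=(J_{j})$. If $\{I_{i'}\}\neq\emptyset$, then each $I_{i'}$ satisfies $I_{i'}\cap J''=\emptyset$ by the very definition of the splitting into $(I_{i'})$ and $(I_{i''})$, so $I_{i'}$ lies in no $J_{j}''$ and the part $\emptyset\cup\{I_{i'}\}$ is not among the $J_{j}$; thus $(K_{\lambda})\diagup(I_{i})$ has strictly more parts than $(J_{j})$ and the two cannot coincide. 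This gives (2.11).

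All the tools invoked — the closed formula for quotient by a partition, the invariance of a quotient under replacing the collapsing sets by their intersections with the collapsed set, and the staging of a quotient along disjoint sub-partitions — were established earlier in this section, so the proof is essentially an exercise in bookkeeping. The one point needing genuine care is that bookkeeping itself: tracking power degrees and keeping the identifications of Assumption $\varPi$ and of $\emptyset\cup\{\cdot\}$ consistent through the iterated quotients, and, for the converse half of (2.11), making sure that a lone collapsed ideal element $\{I_{i'}\}$ cannot be mistaken for a part $J_{j}$ — which is exactly what the condition $I_{i'}\cap J''=\emptyset$ precludes.
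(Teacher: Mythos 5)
Your proposal is correct and follows essentially the same route as the paper: identify $(K_{\lambda})$ with the disjoint union $(I_{i'})\cup(\mathcal{R}_{1}(J_{j}))$, verify (2.10) by computing both sides with the closed formula for quotient by a partition, and obtain (2.11) by evaluating $(K_{\lambda})\diagup(I_{i})$ part by part, which yields $(J_{j})$ together with the extra collapsed parts $\emptyset\cup\{I_{i'}\}$. Your only addition is to make explicit, via $I_{i'}\cap J''=\emptyset$, why those extra parts cannot coincide with any $J_{j}$ in the converse direction of (2.11) --- a point the paper leaves implicit.
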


\begin{definition}
We call $U\diagup (K_{\lambda})$ in (2.12)
the \textbf{induced quotient}
of $((U\diagup (I_{i}))\diagup(J_{j})$
denoted by
\begin{equation}
U\diagup (K_{\lambda})=\textit{ind}
\{((U\diagup (I_{i}))\diagup(J_{j})\},
\end{equation}
where $(K_{\lambda})$ is given by (2.11).
\end{definition}

\begin{remark}
In a trivial situation of 
$J_{j}^{\prime\prime}=\emptyset$
for all $j$, equivalently, 
$(J_{j})\subset U\setminus I$, 
we have

$$
ind\{(U\diagup(I_{i}))\diagup(J_{j})\}=
(U\diagup(I_{i}))\diagup(J_{j})=
U\diagup((I_{i})\cup(J_{j})).
$$ 
\end{remark}

Roughly speaking, induced quotient arises
from decreasing of the power degree.
Comparing with proposition 2.1 and lemma 2.3, 
we get the following properties of the
induced quotient.

\begin{corollary}
	
$$
\textit{ind}\{((U\diagup(I_{i^{\prime}}))\diagup(I_{i^{\prime\prime}}))
\diagup((\mathcal{R}_{1}(J_{j}))\diagup(I_{i^{\prime\prime}}))\}
=(U\diagup(I_{i^{\prime}}))\diagup(\mathcal{R}_{1}(J_{j})).
$$ 
Particularly, if 
$\{I_{i}\}=\mathcal{R}(\{J_{j}^{\prime\prime}\})$,
where $J_{j}^{\prime\prime}=J_{j}\cap\{I_{i}\}$,
we have

$$
\textit{ind}\{(U\diagup(I_{i}))\diagup((\mathcal{R}_{1}(J_{j}))\diagup(I_{i}))\}
=U\diagup(\mathcal{R}_{1}(J_{j})).
$$
Furthermore, for two partitions in $U$, 
$(I_{i})\subset(K_{\lambda})\subset U$, we have

$$
\textit{ind}\{(U\diagup(I_{i}))\diagup((K_{\lambda})\diagup(I_{i}))\}
=U\diagup(K_{\lambda}).
$$
\end{corollary}

\paragraph{Case of $U\cup\{I_{i}\}
\diagup(K_{\lambda}\cup L_{\lambda})$:}

Now w discuss the induced quotient starting from
$\Xi_{A}$ and $\mathcal{P}_{dis}(\Xi_{A})$. 
Let $\{U\}\cup\{I_{i}\}\in\Xi_{A}$,
$\{K_{\lambda}\cup L_{\lambda}\}\in \mathcal{P}_{dis}(\Xi_{A})$,
$(K_{\lambda}\cup L_{\lambda})\subset U\cup\{I_{i}\}$,
then we have
$$
U\cup\{I_{i}\}\diagup(K_{\lambda}\cup L_{\lambda})=
(U\setminus\mathcal{R}(\{K_{\lambda}\}))\cup
(\{I_{i}\}\setminus\mathcal{R}(\{L_{\lambda}\}))\cup
\{K_{\lambda}\cup L_{\lambda}\}.
$$
The ideal part of $U\cup\{I_{i}\}
\diagup(K_{\lambda}\cup L_{\lambda})$ is

$$
[U\cup\{I_{i}\}\diagup(K_{\lambda}\cup L_{\lambda})]_{id}=
(\{I_{i}\}\setminus\mathcal{R}(\{L_{\lambda}\}))\cup
\{K_{\lambda}\cup L_{\lambda}\}.
$$
To calculate the induced quotient of
$U\cup\{I_{i}\}\diagup(K_{\lambda}\cup L_{\lambda})$
we take reversion,

$$
(id\times\mathcal{R}_{1})
([U\cup\{I_{i}\}\diagup(K_{\lambda}\cup L_{\lambda})]_{id}) 
=(\{I_{i}\}\setminus\mathcal{R}(\{L_{\lambda}\}))\cup
\{K_{\lambda}\cup \mathcal{R}(L_{\lambda})\}.   
$$
Let 

$$
\{J_{j}\}=(id\times\mathcal{R}_{1})
((\{I_{i}\}\diagup(K_{\lambda}\cup L_{\lambda}))_{id}),
$$
or,

$$
(J_{j})=(I_{i})_{I_{i}\notin\mathcal{R}(\{L_{\lambda}\})}\cup
(K_{\lambda}\cup \mathcal{R}(L_{\lambda})).
$$
Due to the facts which are $(L_{\lambda})$ is a partition
in $\{I_{i}\}$, thus, each $L_{\lambda}$ is a subset
of $\{I_{i}\}$, we can get the following facts:
\begin{itemize}
\item
$$
(K_{\lambda}\cup L_{\lambda})=
(K_{\lambda}\cup \mathcal{R}(L_{\lambda}))
\diagup(I_{i})_{I_{i}\in\mathcal{R}(\{L_{\lambda}\})},
$$
\item
$$
\mathcal{R}(\{J_{j}\})=\mathcal{R}(\{I_{i}\})
\cup\mathcal{R}(\{K_{\lambda}\}),
$$
\item
$$
(U\cup\mathcal{R}(\{I_{i}\}))\diagup(J_{j})
=(U\setminus\mathcal{R}(\{K_{\lambda}\}))
\cup\{J_{j}\}.
$$
\end{itemize}
Therefore we have

\begin{proposition}
	
\begin{equation}
\textit{ind}\{U\cup\{I_{i}\}\diagup(K_{\lambda}\cup L_{\lambda})\}
=(U\cup\mathcal{R}(\{I_{i}\}))\diagup(J_{j}),
\end{equation}
where

$$
(J_{j})=(I_{i})_{I_{i}\notin\mathcal{R}(\{L_{\lambda}\})}\cup
(K_{\lambda}\cup \mathcal{R}(L_{\lambda})).
$$
\end{proposition}

\paragraph{Case of $(I_{i}\cup J_{i})
\diagup (K_{\lambda}\cup L_{\lambda})$:}

Now we consider more complex case.
Let $\{K_{\lambda}\cup L_{\lambda}\},\,\{I_{i}\cup J_{i}\}
\in\mathcal{P}_{dis}(\Xi_{A})$, and
$(K_{\lambda}\cup L_{\lambda})\subset(I_{i}\cup J_{i})$
which means for $\forall\lambda$, $\exists i$ such that
$K_{\lambda}\cup L_{\lambda}\subset I_{i}\cup J_{i}$.
Now we want to discuss the induced quotient in the situation of
$(I_{i}\cup J_{i})\diagup (K_{\lambda}\cup L_{\lambda})$.
At first we know that there is decomposition of partition
$(K_{\lambda}\cup L_{\lambda})$
$$
(K_{\lambda}\cup L_{\lambda})=\bigcup\limits_{i}
(K_{\lambda_{ij}}\cup L_{\lambda_{ij}}),\,
(K_{\lambda_{ij}}\cup L_{\lambda_{ij}})\subset I_{i}\cup J_{i}.
$$
Thus $(K_{\lambda_{ij}})$ is a partition in $I_{i}$,
and $(L_{\lambda_{ij}})$ is a partition in $J_{i}$ also.
Let $K^{(i)}$ denote $(K_{\lambda_{ij}})$ 
(or $\{K_{\lambda_{ij}}\}$) and $L^{(i)}$ denote
$(L_{\lambda_{ij}})$ (or $\{L_{\lambda_{ij}}\}$).
For simplity, we assume $L^{(i)}\not=\emptyset$
for all $i$.
Similar to above discussion we have

$$
(I_{i}\cup J_{i})\diagup (K_{\lambda}\cup L_{\lambda})
=([I_{i}\cup J_{i}]\diagup (K_{\lambda_{ij}}\cup L_{\lambda_{ij}})),
$$
where $(I_{i}\cup J_{i})\diagup (K_{\lambda}\cup L_{\lambda})$
means the quotient of partition $(I_{i}\cup J_{i})$
by partition $(K_{\lambda}\cup L_{\lambda})$, and
$[I_{i}\cup J_{i}]\diagup (K_{\lambda_{ij}}\cup L_{\lambda_{ij}})$
means the quotient of set $I_{i}\cup J_{i}$
by partition $(K_{\lambda_{ij}}\cup L_{\lambda_{ij}})$.
It is natural for us to define the induced quotient
in this situation to be

\begin{equation}
ind\{(I_{i}\cup J_{i})\diagup (K_{\lambda}\cup L_{\lambda})\}=
(ind\{[I_{i}\cup J_{i}]\diagup (K_{\lambda_{ij}}\cup L_{\lambda_{ij}})\}).
\end{equation}

$$
\begin{array}{c}
[I_{i}\cup J_{i}]\diagup (K_{\lambda_{ij}}\cup L_{\lambda_{ij}}) \\
=(I_{i}\setminus\mathcal{R}(K^{(i)}))\cup
(J_{i}\setminus\mathcal{R}(L^{(i)}))\cup
\{K_{\lambda_{ij}}\cup L_{\lambda_{ij}}\} \\
=(I_{i}\setminus K)\cup(J_{i}\setminus L)\cup
\{K_{\lambda_{ij}}\cup L_{\lambda_{ij}}\},
\end{array}
$$
where $K=\bigcup_{\lambda}K_{\lambda}$,
$L=\bigcup_{\lambda}L_{\lambda}$. Then we have

$$
(I_{i}\cup J_{i})\diagup (K_{\lambda}\cup L_{\lambda})
=((I_{i}\setminus K)\cup(J_{i}\setminus L)\cup
\{K_{\lambda_{ij}}\cup L_{\lambda_{ij}}\}).
$$
Let

$$
M_{\lambda_{ij}}=(id\times\mathcal{R}_{1})([I_{i}\cup J_{i}\diagup 
(K_{\lambda_{ij}}\cup L_{\lambda_{ij}})]_{id})=
(J_{i}\setminus L)\cup
\{K_{\lambda_{ij}}\cup\mathcal{R}(L_{\lambda_{ij}})\},
$$
by proposition 2.2 we know that

$$
ind\{[I_{i}\cup J_{i}]\diagup 
(K_{\lambda_{ij}}\cup L_{\lambda_{ij}})\}=
[I_{i}\cup\mathcal{R}(J_{i})]\diagup(M_{\lambda_{ij}}).
$$
Now we take

$$
(M_{\mu})=\bigcup_{i}(M_{\lambda_{ij}})=
(J_{i}\setminus L)\cup
(K_{\lambda}\cup\mathcal{R}(L_{\lambda})),
$$
then we have

\begin{proposition}
	
\begin{equation}
ind\{(I_{i}\cup J_{i})\diagup (K_{\lambda}\cup L_{\lambda})\}=
(I_{i}\cup\mathcal{R}(J_{i}))\diagup(M_{\mu}),
\end{equation}
where 

$$
(M_{\mu})=(J_{i}\setminus L)\cup
(K_{\lambda}\cup\mathcal{R}(L_{\lambda})).
$$
\end{proposition}

Particularly, if $\mathcal{R}(\{J_{i}\})
=\mathcal{R}(\{L_{\lambda}\})$, then
 
$$
\begin{array}{c}
(M_{\mu})=(K_{\lambda}\cup \mathcal{R}(L_{\lambda}),\,
\mathcal{R}(\{J_{i}\})\subset(K_{\lambda}\cup\mathcal{R}(L_{\lambda})), \\
(K_{\lambda}\cup L_{\lambda})=(K_{\lambda}\cup\mathcal{R}(L_{\lambda})) 
\diagup(\mathcal{R}(\{L_{\lambda}\}))=
(K_{\lambda}\cup\mathcal{R}(L_{\lambda})) 
\diagup(\mathcal{R}(\{J_{i}\})),
\end{array}
$$ 
where $L_{\lambda}\in\mathcal{P}_{dis}^{2}(A)$,
thus $\mathcal{R}(L_{\lambda})\in\mathcal{P}(A)$
and $\mathcal{R}(\{L_{\lambda}\})\in\mathcal{P}_{dis}^{2}(A)$,
so is $\mathcal{R}(\{J_{i}\})$.
Finally, we reach the following conclusions,

\begin{corollary}
Let $\{K_{\lambda}\cup L_{\lambda}\},\,\{I_{i}\cup J_{i}\}
\in\mathcal{P}_{dis}(\Xi_{A})$,
$(K_{\lambda}\cup L_{\lambda})\subset(I_{i}\cup J_{i})$,
$\mathcal{R}(\{J_{i}\})=\mathcal{R}(\{L_{\lambda}\})$. Then

$$
\begin{array}{c}
\textit{ind}\{(I_{i}\cup J_{i})\diagup (K_{\lambda}\cup L_{\lambda})\} \\
=\textit{ind}\{((I_{i}\cup\mathcal{R}(J_{i}))\diagup(\mathcal{R}(\{J_{i}\})))\diagup
((K_{\lambda}\cup\mathcal{R}(L_{\lambda}))\diagup(\mathcal{R}(\{J_{i}\})))\} \\
=(I_{i}\cup\mathcal{R}(J_{i}))\diagup(K_{\lambda}\cup \mathcal{R}(L_{\lambda})).
\end{array}
$$
\end{corollary}

\begin{corollary} 
Let $(I_{i})\subset(W_{\mu})\subset(V_{\lambda})\subset U$,
$(K_{\nu})=(W_{\mu})\diagup(I_{i})$, 
$(J_{j})=(V_{\lambda})\diagup(I_{i})$,
then we have
\begin{equation}
\textit{ind}\{(J_{j})\diagup(K_{\nu})\}=(V_{\lambda})\diagup(W_{\mu}).
\end{equation}
\end{corollary}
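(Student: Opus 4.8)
The plan is to identify the partitions $(K_{\nu})$ and $(J_{j})$ as elements of $\mathcal{P}_{dis}(\Xi_{A})$ together with their canonical power-degree decompositions, and then to invoke Proposition 2.3 (in the situation isolated by Corollary 2.2). Using the description of quotient by partitions from subsections 2.4 and 2.5, write $I^{(\mu)}$ for the sub-partition of $(I_{i})$ formed by those $I_{i}$ contained in $W_{\mu}$, and likewise $I^{(\lambda)}$ for $V_{\lambda}$; since $(I_{i})\subset(W_{\mu})\subset(V_{\lambda})$, both families $(I^{(\mu)})_{\mu}$ and $(I^{(\lambda)})_{\lambda}$ exhaust $(I_{i})$. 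Then the $\mu$-th part of $(K_{\nu})=(W_{\mu})\diagup(I_{i})$ is $W_{\mu}\diagup I^{(\mu)}$, with original part $W_{\mu}\setminus\mathcal{R}(I^{(\mu)})$ and ideal part $\{I^{(\mu)}\}$ (the partition $I^{(\mu)}$ read as a power-degree-$2$ set), and the $\lambda$-th part of $(J_{j})=(V_{\lambda})\diagup(I_{i})$ has original part $V_{\lambda}\setminus\mathcal{R}(I^{(\lambda)})$ and ideal part $\{I^{(\lambda)}\}$; in particular $\mathcal{R}_{1}$ carries the $\mu$-th part of $(K_{\nu})$ to $W_{\mu}$ and the $\lambda$-th part of $(J_{j})$ to $V_{\lambda}$.

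The fact that drives the proof is that, since the blocks $W_{\mu}$ (resp. $V_{\lambda}$) are pairwise disjoint, any $I_{i}$ meeting $W_{\mu}$ is contained in $W_{\mu}$ (resp. in $V_{\lambda}$). From this I would extract two things. First, for the unique $\lambda$ with $W_{\mu}\subset V_{\lambda}$ one gets $I^{(\mu)}\subset I^{(\lambda)}$ and $\mathcal{R}(I^{(\mu)})=W_{\mu}\cap\mathcal{R}(I^{(\lambda)})$, so $W_{\mu}\setminus\mathcal{R}(I^{(\mu)})\subset V_{\lambda}\setminus\mathcal{R}(I^{(\lambda)})$ and $\{I^{(\mu)}\}\subset\{I^{(\lambda)}\}$; hence the $\mu$-th part of $(K_{\nu})$ lies inside the $\lambda$-th part of $(J_{j})$, so $(K_{\nu})\subset(J_{j})$ and $\textit{ind}\{(J_{j})\diagup(K_{\nu})\}$ is defined. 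Second, $\bigcup_{\mu}\{I^{(\mu)}\}$ and $\bigcup_{\lambda}\{I^{(\lambda)}\}$ both equal the power-degree-$2$ set $\{I_{i}\}$ of all parts of $(I_{i})$, since each $I_{i}$ sits in exactly one $W_{\mu}$ and exactly one $V_{\lambda}$.

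Now I would match the data with Proposition 2.3: for $(I_{i}\cup J_{i})$ take $(J_{j})$ with degree-$1$ piece $V_{\lambda}\setminus\mathcal{R}(I^{(\lambda)})$ and degree-$2$ piece $\{I^{(\lambda)}\}$, and for $(K_{\lambda}\cup L_{\lambda})$ take $(K_{\nu})$ with degree-$1$ piece $W_{\mu}\setminus\mathcal{R}(I^{(\mu)})$ and degree-$2$ piece $\{I^{(\mu)}\}$. Then $(I_{i}\cup\mathcal{R}(J_{i}))=(V_{\lambda})$ and $(K_{\lambda}\cup\mathcal{R}(L_{\lambda}))=(W_{\mu})$ by the $\mathcal{R}_{1}$ remark above, and the equality $\mathcal{R}(\{J_{i}\})=\mathcal{R}(\{L_{\lambda}\})$ is exactly the second point above, placing us in the regime of Corollary 2.2. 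Consequently, in the notation of Proposition 2.3, $L=\bigcup_{i}J_{i}$, so $\bigcup_{i}J_{i}\setminus L=\emptyset$ and the partition $(M_{\mu})$ furnished by Proposition 2.3 collapses to $(K_{\lambda}\cup\mathcal{R}(L_{\lambda}))=(W_{\mu})$; Proposition 2.3 then yields
$$
\textit{ind}\{(J_{j})\diagup(K_{\nu})\}=(I_{i}\cup\mathcal{R}(J_{i}))\diagup(M_{\mu})=(V_{\lambda})\diagup(W_{\mu}),
$$
which is the asserted identity.

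I expect the main obstacle to be bookkeeping rather than a conceptual difficulty: one has to unwind the two nested power-degree decompositions and track which reduction map ($\mathcal{R}$, $\mathcal{R}_{1}$, or the induced $\mathcal{R}$ on higher power sets) acts where, in order to confirm that the various pieces collapse to $W_{\mu}$, $V_{\lambda}$ and $\{I_{i}\}$ as claimed. Every non-formal step — $(K_{\nu})\subset(J_{j})$, the validity of the hypothesis of Corollary 2.2, and the collapse $(M_{\mu})=(W_{\mu})$ — comes down to the single observation that an $I_{i}$ meeting a block $W_{\mu}$ or $V_{\lambda}$ lies inside it. (Alternatively, one may bypass Corollary 2.2 and argue block by block from Corollary 2.1, taken with ambient set $V_{\lambda}$, inner partition $I^{(\lambda)}$, and the coarser partition consisting of those $W_{\mu}$ lying in $V_{\lambda}$; this rests on the same observation.)
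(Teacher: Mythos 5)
Your proposal is correct and follows essentially the route the paper itself indicates: the paper states this result without proof, merely as a consequence of Corollary 2.2, and your argument supplies exactly the missing bookkeeping — identifying the power-degree decompositions of $(K_{\nu})$ and $(J_{j})$, verifying $\mathcal{R}(\{J_{i}\})=\mathcal{R}(\{L_{\lambda}\})=\{I_{i}\}$, and reading off $(M_{\mu})=(W_{\mu})$ and $(I_{i}\cup\mathcal{R}(J_{i}))=(V_{\lambda})$ from Proposition 2.3. No gaps.
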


Furthermore, the induced quotient has
the following property:
\begin{proposition}
Let $\{U\}\cup\{I_{i}\}\in\Xi_{A}$, $\{D_{j}\cup E_{j}\},\,
\{K_{\lambda}\cup L_{\lambda}\}\in\mathcal{P}_{dis}(\Xi_{A})$,
$(D_{j}\cup E_{j})\subset(K_{\lambda}\cup L_{\lambda})
\subset U\cup\{I_{i}\}$, we have
\begin{equation}
\begin{array}{c}
\textit{ind}\{U\cup\{I_{i}\}\diagup(K_{\lambda}\cup L_{\lambda})\} \\
=\textit{ind}\{\textit{ind}\{U\cup\{I_{i}\}\diagup(D_{j}\cup E_{j})\}\diagup
\textit{ind}\{(K_{\lambda}\cup L_{\lambda})\diagup
(D_{j}\cup E_{j})\}\}.
\end{array}
\end{equation}
\end{proposition}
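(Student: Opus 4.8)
The plan is to lift all three occurrences of $\textit{ind}$ to the level of quotients of a subset of $A$ by nested partitions, and then to invoke the associativity already established there (Corollary 2.1, or equivalently Corollary 2.3). Put $\hat{U}=U\cup\mathcal{R}(\{I_{i}\})$; since $\{U\}\cup\{I_{i}\}\in\Xi_{A}$ the two pieces are disjoint, and by the computation in Subsection 2.4 one has $U\cup\{I_{i}\}=\hat{U}\diagup(I_{i})$. The hypothesis $(D_{j}\cup E_{j})\subset(K_{\lambda}\cup L_{\lambda})\subset U\cup\{I_{i}\}$ lets Proposition 2.2 act on both $U\cup\{I_{i}\}\diagup(K_{\lambda}\cup L_{\lambda})$ and $U\cup\{I_{i}\}\diagup(D_{j}\cup E_{j})$, producing partitions $(\mathcal{Q})$ and $(\mathcal{D})$ in $\hat{U}$ (those written out explicitly there) with
\[
\textit{ind}\{U\cup\{I_{i}\}\diagup(K_{\lambda}\cup L_{\lambda})\}=\hat{U}\diagup(\mathcal{Q}),\qquad
\textit{ind}\{U\cup\{I_{i}\}\diagup(D_{j}\cup E_{j})\}=\hat{U}\diagup(\mathcal{D}).
\]
Thus the left-hand side of the proposition equals $\hat{U}\diagup(\mathcal{Q})$, and its right-hand side has the form $\textit{ind}\{(\hat{U}\diagup(\mathcal{D}))\diagup B\}$ with $B:=\textit{ind}\{(K_{\lambda}\cup L_{\lambda})\diagup(D_{j}\cup E_{j})\}$.

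Next I would put the nesting on the partition level. From $(D_{j}\cup E_{j})\subset(K_{\lambda}\cup L_{\lambda})$ one extracts, for each $j$, a common $\lambda$ with $D_{j}\subset K_{\lambda}$ and $E_{j}\subset L_{\lambda}$, hence $D_{j}\cup\mathcal{R}(E_{j})\subset K_{\lambda}\cup\mathcal{R}(L_{\lambda})$; moreover $\bigcup_{j}E_{j}\subset\bigcup_{\lambda}L_{\lambda}$ inside $\{I_{i}\}$, so every $I_{i}$ left untouched by the $L_{\lambda}$'s is also left untouched by the $E_{j}$'s. Comparing with the explicit descriptions of $(\mathcal{Q})$ and $(\mathcal{D})$ from Proposition 2.2 then gives $(\mathcal{D})\subset(\mathcal{Q})\subset\hat{U}$. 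I would then evaluate $B$ by Proposition 2.3 (applied with $(K_{\lambda}\cup L_{\lambda})$ in the role of the larger partition $(I_{i}\cup J_{i})$ and $(D_{j}\cup E_{j})$ in that of the smaller one), getting $B=(K_{\lambda}\cup\mathcal{R}(L_{\lambda}))\diagup(M_{\mu})$ with $(M_{\mu})$ as exhibited there, and compare $B$ with $(\mathcal{Q})\diagup(\mathcal{D})$ block by block: the two coincide on every block of $(\mathcal{Q})$ of the form $K_{\lambda}\cup\mathcal{R}(L_{\lambda})$, whereas $(\mathcal{Q})\diagup(\mathcal{D})$ carries in addition the collapses of those blocks of $(\mathcal{Q})$ which are a single $I_{i}$ missed by all of the $L_{\lambda}$'s. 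By Remark 2.4 the outermost $\textit{ind}$, which replaces each component $J_{i}$ of top power degree by $\mathcal{R}(J_{i})$, undoes precisely those extra collapses, so that
\[
\textit{ind}\{(\hat{U}\diagup(\mathcal{D}))\diagup B\}=\textit{ind}\{(\hat{U}\diagup(\mathcal{D}))\diagup((\mathcal{Q})\diagup(\mathcal{D}))\}.
\]
Finally, since $(\mathcal{D})\subset(\mathcal{Q})\subset\hat{U}$, Corollary 2.1 (with $\hat{U}$, $(\mathcal{D})$, $(\mathcal{Q})$ in the roles of $U$, $(I_{i})$, $(K_{\lambda})$, so that $(\mathcal{Q})\diagup(\mathcal{D})$ is the quotient partition occurring there) identifies this common value with $\hat{U}\diagup(\mathcal{Q})$, which is the left-hand side of the proposition.

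The main obstacle is the middle comparison: showing that $B$ and $(\mathcal{Q})\diagup(\mathcal{D})$ induce the same quotient. This is pure bookkeeping — tracing through the composition which blocks $I_{i}$ get absorbed into larger blocks and which survive as singletons, at which power degree ($1$, $2$ or $3$) each constituent lives, and verifying that the singleton blocks which appear and are then erased by the outer $\textit{ind}$ are treated compatibly with the leftover-$I_{i}$ conventions of Propositions 2.2 and 2.3 and with the identifications of Remark 2.1 and Assumption $\varPi$. Every other step is a direct application of Propositions 2.2 and 2.3 and of Corollary 2.1 (Corollary 2.3 would serve equally well for the concluding step, if one regards $\hat{U}$ as a one-block partition in the sense of Remark 2.1).
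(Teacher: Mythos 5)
Your route is viable but genuinely different from the paper's. The paper proves the identity by brute force at the level of $\Xi_{A}\times_{dis}\mathcal{P}_{dis}(\Xi_{A})$: it writes out both inner induced quotients explicitly, forms the quotient $\textit{ind}\{U\cup\{I_{i}\}\diagup(D_{j}\cup E_{j})\}\diagup\textit{ind}\{(K_{\lambda}\cup L_{\lambda})\diagup(D_{j}\cup E_{j})\}$, whose ideal blocks are the sets $M_{\lambda}=(K_{\lambda}\setminus\mathcal{R}(\{D_{j}^{(\lambda)}\}))\cup(L_{\lambda}\setminus\mathcal{R}(\{E_{j}^{(\lambda)}\}))\cup\{D_{j}^{(\lambda)}\cup\mathcal{R}(E_{j}^{(\lambda)})\}$, and then checks the single reversion identity $\mathcal{R}_{1}(M_{\lambda})=K_{\lambda}\cup\mathcal{R}(L_{\lambda})$, which immediately yields the left-hand side. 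You instead conjugate the whole statement through the dictionary of Proposition 2.2, replacing $U\cup\{I_{i}\}$ by $\hat{U}=U\cup\mathcal{R}(\{I_{i}\})$ and each induced quotient by a quotient of $\hat{U}$ by an ordinary partition, so that the conclusion becomes an instance of Corollary 2.1. What this buys is conceptual: the transitivity of induced quotients in the decorated setting is exhibited as the pullback of the already-proved transitivity for plain partitions. What it costs is the middle comparison of $B=\textit{ind}\{(K_{\lambda}\cup L_{\lambda})\diagup(D_{j}\cup E_{j})\}$ with $(\mathcal{Q})\diagup(\mathcal{D})$, which you correctly identify as the crux but leave as a sketch; note that carrying it out forces you to compute $B$ blockwise anyway, i.e.\ to reproduce essentially the same $M_{\lambda}$-computation the paper performs, so the reduction reorganizes rather than shortens the work. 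Your account of the discrepancy is essentially right, with one point worth stating more carefully: the extra blocks of $(\mathcal{Q})\diagup(\mathcal{D})$ are the singletons $I_{i}\diagup I_{i}=\emptyset\cup\{I_{i}\}$ coming from the $I_{i}$ missed by all $L_{\lambda}$ (each such $I_{i}$ is itself a block of $(\mathcal{D})$, hence gets collapsed one power degree up), and the outer $\textit{ind}$ does not erase them but reverts them back to ideal elements $\{I_{i}\}$ of power degree $2$, which is exactly how they sit in $\hat{U}\diagup(\mathcal{Q})$; on the blocks $K_{\lambda}\cup\mathcal{R}(L_{\lambda})$ the two quotients agree outright because the untouched $I_{i}\in L_{\lambda}$ are blocks of $(\mathcal{D})$ and collapse to precisely the elements of $L_{\lambda}\setminus\mathcal{R}(\{E_{j}^{(\lambda)}\})$ appearing in $B$. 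With that verification written out, your argument closes correctly via Corollary 2.1 applied to $(\mathcal{D})\subset(\mathcal{Q})\subset\hat{U}$.
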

\begin{proof}
Recalling the previous discussion we know that
$$
\begin{array}{c}
\textit{ind}\{U\cup\{I_{i}\}\diagup(D_{j}\cup E_{j})\} \\
=(U\setminus\mathcal{R}(\{D_{j}\})\cup
(\{I_{i}\}\setminus\mathcal{R}(\{E_{j}\}))\cup
\{D_{j}\cup\mathcal{R}(E_{j})\},
\end{array}
$$
and
$$
\begin{array}{c}
\textit{ind}\{K_{\lambda}\cup L_{\lambda}\diagup
(D_{j}\cup E_{j})\} \\
=(K_{\lambda}\setminus\mathcal{R}(\{D_{j}^{(\lambda)}\}))\cup
(L_{\lambda}\setminus\mathcal{R}(\{E_{j}^{(\lambda)}\}))\cup
\{D_{j}^{(\lambda)}\cup\mathcal{R}(E_{j}^{(\lambda)})\},  
\end{array}
$$
where $(D_{j}\cup E_{j})=\bigcup_{\lambda}
(D_{j}^{(\lambda)}\cup E_{j}^{(\lambda)})$,
$(D_{j}^{(\lambda)}\cup E_{j}^{(\lambda)})\subset
K_{\lambda}\cup L_{\lambda}$.
Then we have
$$
\begin{array}{c}
\textit{ind}\{U\cup\{I_{i}\}\diagup(D_{j}\cup E_{j})\}\diagup
\textit{ind}\{K_{\lambda}\cup L_{\lambda}\diagup(D_{j}\cup E_{j})\} \\
=(U\setminus\mathcal{R}(\{K_{\lambda}\}))\cup
(\{I_{i}\}\setminus\mathcal{R}(\{L_{\lambda}\}))\cup\{M_{\lambda}\},  
\end{array}
$$
where
$$
M_{\lambda}=(K_{\lambda}\setminus\mathcal{R}(\{D_{j}^{(\lambda)}\}))\cup
(L_{\lambda}\setminus\mathcal{R}(\{E_{j}^{(\lambda)}\}))\cup
\{D_{j}^{(\lambda)}\cup\mathcal{R}(E_{j}^{(\lambda)})\}.
$$
Now we make reversion for $M_{\lambda}$ which is
$$
\mathcal{R}_{1}(M_{\lambda})=
(K_{\lambda}\setminus\mathcal{R}(\{D_{j}^{(\lambda)}\}))\cup
(\mathcal{R}(L_{\lambda})\setminus\mathcal{R}(\mathcal{R}(\{E_{j}^{(\lambda)}\})))
\cup\mathcal{R}(\{D_{j}^{(\lambda)}\})\cup
\mathcal{R}(\{\mathcal{R}(E_{j}^{(\lambda)})\}). 
$$
Noting that
$$
\mathcal{R}(\mathcal{R}(\{E_{j}^{(\lambda)}\}))
=\mathcal{R}(\{\mathcal{R}(E_{j}^{(\lambda)})\}),
$$
we have
$$
\mathcal{R}_{1}(M_{\lambda})=
K_{\lambda}\cup \mathcal{R}(L_{\lambda}).
$$
Finally, we have
$$
\begin{array}{c}
\textit{ind}\{\textit{ind}\{U\cup\{I_{i}\}\diagup(D_{j}\cup E_{j})\}\diagup
\textit{ind}\{(K_{\lambda}\cup L_{\lambda})\diagup
(D_{j}\cup E_{j})\}\} \\
=(U\setminus\mathcal{R}(\{K_{\lambda}\}))\cup
(\{I_{i}\}\setminus\mathcal{R}(\{L_{\lambda}\}))\cup
\{K_{\lambda}\cup \mathcal{R}(L_{\lambda})\}.
\end{array}
$$
Thus the formula (2.18) is valid.
\end{proof}

\section{Hopf algebra consisting of finite sets}

In this section we want to construct the 
coproduct for finite sets. Let $A$ be a finite set. 
We will construct a coproduct
$$
\triangle:\mathcal{P}_{dis}(\Xi_{A})
\longrightarrow \mathcal{P}_{dis}(\Xi_{A})\otimes 
\mathcal{P}_{dis}(\Xi_{A}).
$$
At first we construct sub-coproduct related to a partition 
as follows:

\begin{definition}
$\\$
\begin{itemize}
\item Let $U\cup\{I_{i}\}\in\Xi_{A}$, 
$\{K_{\lambda}\cup L_{\lambda}\}
\in\mathcal{P}_{dis}(\Xi_{A})$, 
$(K_{\lambda}\cup L_{\lambda})\subset U\cup\{I_{i}\}$,
then we define
\begin{equation}
\begin{array}{c}
\triangle_{(K_{\lambda}\cup L_{\lambda})}(U\cup\{I_{i}\}) \\
=(K_{\lambda}\cup L_{\lambda})\otimes \textit{ind}\{U\cup\{I_{i}\}\diagup
(K_{\lambda}\cup L_{\lambda})\}.
\end{array}
\end{equation}
Where $(K_{\lambda}\cup L_{\lambda})$ satisfies the
following conditions:
\begin{itemize}
  \item $\mathcal{R}(\{L_{\lambda}\})=\{I_{i}\}$,
  for each $\lambda$, 
  $L_{\lambda}\neq\{I_{i}\}$.  
  \item For each $\lambda$,
  $K_{\lambda}\neq U$.
\end{itemize}
\item  Let $\{K_{\lambda}\cup L_{\lambda}\},\,\{I_{i}\cup J_{i}\}
\in\mathcal{P}_{dis}(\Xi_{A})$, 
$(K_{\lambda}\cup L_{\lambda})\subset(I_{i}\cup J_{i})$, 
$\mathcal{R}(\{L_{\lambda}\})=\mathcal{R}(\{J_{i}\})$,
then
  \begin{equation}
  \triangle_{(K_{\lambda}\cup L_{\lambda})}(I_{i}\cup J_{i})
  =(K_{\lambda}\cup L_{\lambda})\otimes
  \textit{ind}\{(I_{i}\cup J_{i})\diagup(K_{\lambda}\cup L_{\lambda})\}.
  \end{equation}
\end{itemize}
\end{definition}

\begin{remark}
$\\$
\begin{itemize}
  \item Recalling the formulas (2.16) and corollary 2.2 we have
  $$
  \begin{array}{c}
  \triangle_{(K_{\lambda}\cup L_{\lambda})}(I_{i}\cup J_{i}) \\
  =(K_{\lambda}\cup L_{\lambda})\otimes
  (I_{i}\cup\mathcal{R}(J_{i})\diagup
  (id\times\mathcal{R}_{1})([I_{i}\cup J_{i}\diagup 
  (K_{\lambda}\cup L_{\lambda})]_{2})) \\
  =(M_{\mu})\diagup (\bigcup\limits_{i}J_{i})\otimes 
  (I_{i}\cup\mathcal{R}(J_{i}))\diagup(M_{\mu}),
  \end{array}
  $$
  where
  $$
  \{M_{\mu}\}=\bigcup\limits_{i}
  (id\times\mathcal{R}_{1})
  ([I_{i}\cup J_{i}\diagup(K_{\lambda}\cup L_{\lambda})]_{2})).
  $$
  Moreover, we have
  $$
  \bigcup\limits_{i}J_{i}\subset (M_{\mu})\subset
  (I_{i}\cup\mathcal{R}(J_{i})).
  $$
  \item Particularly, we have
  \begin{itemize}
  \item for $U\subset A$ and $(I_{i})\subset U$,
  if we identify $U$ with $U\cup\{\emptyset\}\in\Xi_{A}$,
  and $\{I_{i}\}$ with 
  $\{I_{i}\cup\emptyset\}\in\mathcal{P}_{dis}(\Xi_{A})$,
  we have
  \begin{equation}
  \triangle_{(I_{i})}=(I_{i})\otimes U\diagup (I_{i});
  \end{equation}
  \item for two partitions $(I_{i}),\,(J_{j})$ in $A$,
  $(J_{j})\subset(I_{i})$, we have
  \begin{equation}
  \triangle_{(J_{i})}(I_{i})=(\triangle_{(J_{i})}I_{i}).
  \end{equation}
  \end{itemize}
\end{itemize}
\end{remark}

Now we have the following lemma which is the
corollary of proposition 2.1 and 2.2.
\begin{lemma}
Let $U\cup\{I_{i}\}\in\Xi_{A}$, 
$\{D_{j}\cup E_{j}\},\,\{K_{\lambda}\cup L_{\lambda}\}\in
\mathcal{P}_{dis}(A)$, $(D_{j}\cup E_{j})\subset
(K_{\lambda}\cup L_{\lambda})\subset U\cup\{I_{i}\}$,
$\mathcal{R}(\{E_{j}\})=\mathcal{R}(\{L_{\lambda}\})=\{I_{i}\}$,
if we take 
$$
\{M_{\lambda}\cup N_{\lambda}\}=
\textit{ind}\{(K_{\lambda}\cup L_{\lambda})\diagup
(D_{j}\cup E_{j})\},
$$
we have
\begin{equation}
\begin{array}{c}
(\triangle_{(D_{j}\cup E_{j})}\otimes id)
\triangle_{(K_{\lambda}\cup L_{\lambda})}U\cup\{I_{i}\} \\
=(id\otimes\triangle_{(M_{\lambda}\cup N_{\lambda})})
\triangle_{(D_{j}\cup E_{j})}U\cup\{I_{i}\}.   
\end{array}
\end{equation}
\end{lemma}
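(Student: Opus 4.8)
The plan is to expand both sides of (3.5) directly from the two bullets of Definition 3.1 and then compare the resulting triple tensor products factor by factor; the only substantive ingredient is the associativity of the induced quotient, Proposition 2.4.

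For the left-hand side I would argue as follows. Since $(K_{\lambda}\cup L_{\lambda})\subset U\cup\{I_{i}\}$ with $\mathcal{R}(\{L_{\lambda}\})=\{I_{i}\}$ and the non-degeneracy hypotheses of the first bullet hold, (3.1) gives $\triangle_{(K_{\lambda}\cup L_{\lambda})}(U\cup\{I_{i}\})=(K_{\lambda}\cup L_{\lambda})\otimes \textit{ind}\{U\cup\{I_{i}\}\diagup(K_{\lambda}\cup L_{\lambda})\}$. Now $(D_{j}\cup E_{j})\subset(K_{\lambda}\cup L_{\lambda})$ and $\mathcal{R}(\{E_{j}\})=\mathcal{R}(\{L_{\lambda}\})$, so the second bullet (3.2) applies to the first tensor factor, and applying $\triangle_{(D_{j}\cup E_{j})}\otimes id$ yields
\[
(D_{j}\cup E_{j})\otimes
\textit{ind}\{(K_{\lambda}\cup L_{\lambda})\diagup(D_{j}\cup E_{j})\}\otimes
\textit{ind}\{U\cup\{I_{i}\}\diagup(K_{\lambda}\cup L_{\lambda})\},
\]
whose middle factor is $(M_{\lambda}\cup N_{\lambda})$ by definition.

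For the right-hand side: from $(D_{j}\cup E_{j})\subset(K_{\lambda}\cup L_{\lambda})\subset U\cup\{I_{i}\}$ one reads off $D_{j}\subset K_{\lambda}\subsetneq U$ and $E_{j}\subset L_{\lambda}\subsetneq\{I_{i}\}$, so the non-degeneracy conditions of the first bullet are inherited, and together with $\mathcal{R}(\{E_{j}\})=\{I_{i}\}$ this makes (3.1) applicable to $U\cup\{I_{i}\}$ with the cut $(D_{j}\cup E_{j})$:
\[
\triangle_{(D_{j}\cup E_{j})}(U\cup\{I_{i}\})
=(D_{j}\cup E_{j})\otimes\textit{ind}\{U\cup\{I_{i}\}\diagup(D_{j}\cup E_{j})\}.
\]
To apply $id\otimes\triangle_{(M_{\lambda}\cup N_{\lambda})}$ to the second factor I would use Proposition 2.2 and the induced-quotient formula of subsection 2.6: since $\mathcal{R}(\{E_{j}\})=\{I_{i}\}$, the element $\textit{ind}\{U\cup\{I_{i}\}\diagup(D_{j}\cup E_{j})\}$ lies in $\Xi_{A}$ with ideal part indexed by $\{D_{j}\cup\mathcal{R}(E_{j})\}$; and since moreover $\mathcal{R}(\{L_{\lambda}\})=\{I_{i}\}=\mathcal{R}(\{E_{j}\})$, one computes $\{M_{\lambda}\cup N_{\lambda}\}=\textit{ind}\{(K_{\lambda}\cup L_{\lambda})\diagup(D_{j}\cup E_{j})\}=(K_{\lambda}\cup\mathcal{R}(L_{\lambda}))\diagup(D_{j}\cup\mathcal{R}(E_{j}))$, which is a partition in $\textit{ind}\{U\cup\{I_{i}\}\diagup(D_{j}\cup E_{j})\}$ with $\mathcal{R}(\{N_{\lambda}\})=\{D_{j}\cup\mathcal{R}(E_{j})\}$. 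Hence the first bullet applies again and the right-hand side equals
\[
(D_{j}\cup E_{j})\otimes(M_{\lambda}\cup N_{\lambda})\otimes
\textit{ind}\{\textit{ind}\{U\cup\{I_{i}\}\diagup(D_{j}\cup E_{j})\}\diagup(M_{\lambda}\cup N_{\lambda})\}.
\]

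Finally I would match the two triple tensors. The first factors coincide, the second factors coincide by the definition of $(M_{\lambda}\cup N_{\lambda})$, and for the third factors, substituting $\{M_{\lambda}\cup N_{\lambda}\}=\textit{ind}\{(K_{\lambda}\cup L_{\lambda})\diagup(D_{j}\cup E_{j})\}$ turns the right-hand expression into $\textit{ind}\{\textit{ind}\{U\cup\{I_{i}\}\diagup(D_{j}\cup E_{j})\}\diagup\textit{ind}\{(K_{\lambda}\cup L_{\lambda})\diagup(D_{j}\cup E_{j})\}\}$, which by Proposition 2.4 is exactly $\textit{ind}\{U\cup\{I_{i}\}\diagup(K_{\lambda}\cup L_{\lambda})\}$, the third factor on the left. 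This proves (3.5). The one genuinely delicate point is the bookkeeping at the last step: one must check that $(M_{\lambda}\cup N_{\lambda})$ is an admissible cut for the first bullet applied to $\textit{ind}\{U\cup\{I_{i}\}\diagup(D_{j}\cup E_{j})\}$, i.e. that $\mathcal{R}(\{N_{\lambda}\})$ is precisely its ideal index and that the non-degeneracy conditions carry over; once this is in place the lemma is a formal consequence of Proposition 2.4.
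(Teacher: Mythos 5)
Your proposal is correct and follows essentially the same route as the paper: both expand the two sides into triple tensors, reduce the identity of the third factors to Proposition 2.4 (formula (2.16)), and observe that the real content is verifying that $(M_{\lambda}\cup N_{\lambda})$ is an admissible cut for Definition 3.1 applied to $\textit{ind}\{U\cup\{I_{i}\}\diagup(D_{j}\cup E_{j})\}$. If anything, your write-up is more explicit than the paper's about the factor-by-factor matching, while the paper concentrates almost entirely on the admissibility check via the explicit formulas for the induced quotients.
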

\begin{proof}
The term $\triangle_{(M_{\lambda}\cup N_{\lambda})}
(\textit{ind}\{U\cup\{I_{i}\}\diagup(D_{j}\cup E_{j})\})$
will appear in (3.5), we need to test that
$(M_{\lambda}\cup N_{\lambda})$ satisfies the conditions
in definition 3.1. Recalling the discussions in subsection 2.6
and noting $\mathcal{R}(\{E_{j}\})
=\mathcal{R}(\{L_{\lambda}\})=\{I_{i}\}$ we know that 
$$
\textit{ind}\{U\cup\{I_{i}\}\diagup(D_{j}\cup E_{j})\}
=(U\setminus\mathcal{R}(\{D_{j}\}))\cup\{D_{j}
\cup\mathcal{R}(E_{j})\},
$$
and
$$
\textit{ind}\{K_{\lambda}\cup L_{\lambda}\diagup
(D_{j}\cup E_{j})\}=(K_{\lambda}\setminus\mathcal{R}(\{D_{j}\}))
\cup\{D^{(\lambda)}_{j}\cup\mathcal{R}(E^{(\lambda)}_{j})\},
$$
where $\{D^{(\lambda)}_{j}\cup\mathcal{R}(E^{(\lambda)}_{j})\}$
arises from the decomposition
$$
(D_{j}\cup E_{j})=\bigcup\limits_{\lambda}(D^{(\lambda)}_{j}\cup
E^{(\lambda)}_{j}),\,(D^{(\lambda)}_{j}\cup E^{(\lambda)}_{j})
\subset K_{\lambda}\cup L_{\lambda}.
$$
Thus the conditions in definition 3.1 are valid.
\end{proof}

\begin{definition}
We define the coproduct $\triangle_{1}$ as follows:
\begin{itemize}
  \item 
  \begin{equation}
  \triangle \emptyset=\emptyset\otimes\emptyset.
  \end{equation}
  \item Let $U\cup\{I_{i}\}\in\mathcal{P}_{dis}(\Xi_{A})$,
\begin{equation}
\begin{array}{c}
\triangle (U\cup\{I_{i}\})=U\cup\{I_{i}\}\otimes \emptyset+
\emptyset\otimes U\cup\{I_{i}\} \\
+\sum\limits_{(K_{\lambda}\cup L_{\lambda})\subset U\cup\{I_{i}\}}
\,\,\triangle_{(K_{\lambda}\cup L_{\lambda})}U\cup\{I_{i}\}.
\end{array}
\end{equation}
The sum in the formula (3.7) is over all partitions satisfying
the conditions in definition 3.1.
\end{itemize}
\end{definition}

\begin{remark}
$\\$
\begin{itemize}
\item
For the case of $\max\{\#[U],\,\#[\{I_{i}\}]\}\leqslant 2$, 
we have
$$
\triangle (U\cup\{I_{i}\})=U\cup\{I_{i}\}\otimes \emptyset+
\emptyset\otimes U\cup\{I_{i}\}.
$$
\item Combining definition 3.1 and 3.2 we have
\begin{equation}
\begin{array}{c}
\triangle(I_{i}\cup J_{i})=
(I_{i}\cup J_{i})\otimes\emptyset+\emptyset\otimes
(I_{i}\cup J_{i}) \\
+\sum\limits_{(K_{\lambda}\cup L_{\lambda})\subset(I_{i}\cup J_{i})}
\triangle_{(K_{\lambda}\cup L_{\lambda})}(I_{i}\cup J_{i}),
\end{array}
\end{equation}
where $\{I_{i}\cup J_{i}\}\in\mathcal{P}_{dis}(\Xi_{A})$
and $\mathcal{R}(\{L_{\lambda}\})=\mathcal{R}(\{J_{i}\})$.
\end{itemize} 
\end{remark}

About coproduct defined in definition 3.2 we have
\begin{theorem}
\begin{equation}
(\triangle\otimes id)\triangle 
=(id\otimes \triangle)\triangle.
\end{equation}
\end{theorem}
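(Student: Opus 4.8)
The plan is to evaluate both operators on a generic generator $U\cup\{I_{i}\}\in\Xi_{A}$, expand each side into a triple tensor sum indexed by a pair of nested admissible sub-partitions, and produce an index-preserving bijection between the two sums, the equality of corresponding summands being packaged by lemma 3.1 and the associativity of the induced quotient (proposition 2.4). By linearity it suffices to check (3.9) on generators of $\mathbb{C}(\mathcal{P}_{dis}(\Xi_{A}))$; I would carry out the argument for a generator of the form $U\cup\{I_{i}\}\in\Xi_{A}$, the case of a general partition $\{I_{i}\cup J_{i}\}$ being handled by the very same method applied componentwise via (3.8) and the partition analogues of the quotient identities of subsections 2.5--2.6. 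The summands of (3.7) of the form $U\cup\{I_{i}\}\otimes\emptyset$ and $\emptyset\otimes U\cup\{I_{i}\}$, together with all the mixed terms produced when $\triangle$ hits $\emptyset$ via (3.6), match on the two sides by direct inspection, so the real content lies in the ``interior'' sums.

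Expanding $(\triangle\otimes id)\triangle(U\cup\{I_{i}\})$ --- first $\triangle$, then $\triangle\otimes id$ --- and dropping the boundary terms, the interior equals
$$
\sum (D_{j}\cup E_{j})\otimes\textit{ind}\{(K_{\lambda}\cup L_{\lambda})\diagup(D_{j}\cup E_{j})\}\otimes\textit{ind}\{U\cup\{I_{i}\}\diagup(K_{\lambda}\cup L_{\lambda})\},
$$
the sum being over all pairs $(D_{j}\cup E_{j})\subset(K_{\lambda}\cup L_{\lambda})\subset U\cup\{I_{i}\}$ of sub-partitions admissible in the sense of definition 3.1. Dually, expanding $(id\otimes\triangle)\triangle(U\cup\{I_{i}\})$ --- first $\triangle$, then $id\otimes\triangle$ on the second tensor factor $\textit{ind}\{U\cup\{I_{i}\}\diagup(D_{j}\cup E_{j})\}$ --- the interior equals
$$
\sum (D_{j}\cup E_{j})\otimes(M_{\lambda}\cup N_{\lambda})\otimes\textit{ind}\{\textit{ind}\{U\cup\{I_{i}\}\diagup(D_{j}\cup E_{j})\}\diagup(M_{\lambda}\cup N_{\lambda})\},
$$
where $(D_{j}\cup E_{j})$ runs over admissible sub-partitions of $U\cup\{I_{i}\}$ and $(M_{\lambda}\cup N_{\lambda})$ over admissible sub-partitions of $\textit{ind}\{U\cup\{I_{i}\}\diagup(D_{j}\cup E_{j})\}$.

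The core of the proof is the bijection between these two index sets. In the forward direction, to a nested admissible pair $(D_{j}\cup E_{j})\subset(K_{\lambda}\cup L_{\lambda})\subset U\cup\{I_{i}\}$ I associate $(D_{j}\cup E_{j})$ together with $(M_{\lambda}\cup N_{\lambda}):=\textit{ind}\{(K_{\lambda}\cup L_{\lambda})\diagup(D_{j}\cup E_{j})\}$; lemma 3.1 states precisely that this $(M_{\lambda}\cup N_{\lambda})$ is an admissible sub-partition of $\textit{ind}\{U\cup\{I_{i}\}\diagup(D_{j}\cup E_{j})\}$ (so it is a legitimate index on the right), and proposition 2.4 identifies the third tensor factors, since $\textit{ind}\{U\cup\{I_{i}\}\diagup(K_{\lambda}\cup L_{\lambda})\}=\textit{ind}\{\textit{ind}\{U\cup\{I_{i}\}\diagup(D_{j}\cup E_{j})\}\diagup\textit{ind}\{(K_{\lambda}\cup L_{\lambda})\diagup(D_{j}\cup E_{j})\}\}$. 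Conversely, given admissible $(D_{j}\cup E_{j})$ and an admissible $(M_{\lambda}\cup N_{\lambda})\subset\textit{ind}\{U\cup\{I_{i}\}\diagup(D_{j}\cup E_{j})\}$, I reconstruct $(K_{\lambda}\cup L_{\lambda})$ as the unique sub-partition with $(D_{j}\cup E_{j})\subset(K_{\lambda}\cup L_{\lambda})\subset U\cup\{I_{i}\}$ and $\textit{ind}\{(K_{\lambda}\cup L_{\lambda})\diagup(D_{j}\cup E_{j})\}=(M_{\lambda}\cup N_{\lambda})$; existence and uniqueness are exactly the reversion-map reconstruction of a partition from its quotient (subsections 2.1 and 2.5) together with the ``uniqueness of molecule'' lemma 2.2, while admissibility of the reconstructed $(K_{\lambda}\cup L_{\lambda})$ --- the requirements $K_{\lambda}\neq U$, $\mathcal{R}(\{L_{\lambda}\})=\{I_{i}\}$, $L_{\lambda}\neq\{I_{i}\}$ --- follows from that of $(D_{j}\cup E_{j})$ and $(M_{\lambda}\cup N_{\lambda})$ via corollary 2.1 and the last assertion of proposition 2.1. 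These two assignments are mutually inverse and match the three tensor factors summand by summand, so the interior sums coincide; with the boundary terms already checked, this yields (3.9).

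The step I expect to be the main obstacle is precisely the reconstruction (surjectivity) direction: one must verify that every pair $((D_{j}\cup E_{j}),(M_{\lambda}\cup N_{\lambda}))$ appearing on the right-hand side lifts to a unique intermediate $(K_{\lambda}\cup L_{\lambda})$, and --- the genuinely delicate bookkeeping --- that this lift again satisfies all of the admissibility constraints of definition 3.1, in particular that none of its components collapses to $U$ or to $\{I_{i}\}$. This is where the power-degree accounting of subsection 2.4 and the reversion maps $\mathcal{R}$, $\mathcal{R}_{1}$ must be combined carefully; once it is in place, the associativity of $\textit{ind}$ (proposition 2.4) forces the three tensor factors to agree automatically and the identity follows.
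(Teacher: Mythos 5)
Your proposal is correct and follows essentially the same route as the paper: reduce to the interior (reduced-coproduct) sums, use Lemma 3.1 together with the associativity of the induced quotient (Proposition 2.4) to convert each nested term, and establish the bijection between nested admissible pairs via the reversion construction. Your writeup is somewhat more explicit than the paper's about the surjectivity and admissibility bookkeeping, which the paper dispatches with ``by the procedure of reversion, it is easy to know that...'', but the underlying argument is the same.
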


\begin{proof}
It is enough for us to consider the reduced
coproduct $\triangle^{\prime}$, where
$$
\triangle^{\prime}U\cup\{I_{i}\}
=\triangle U\cup\{I_{i}\}-(U\cup\{I_{i}\}\otimes\emptyset
+\emptyset\otimes U\cup\{I_{i}\}),
$$
and $U\cup\{I_{i}\}\in\mathcal{P}_{dis}(\Xi_{A})$.
According to the formulas (3.7) and (3.8) we have
$$
\begin{array}{c}
(\triangle^{\prime}\otimes id)\triangle_{1}^{\prime} U\cup\{I_{i}\} \\
=\sum\limits_{(K_{\lambda}\cup L_{\lambda})\subset U\cup\{I_{i}\}}
(\triangle^{\prime}\otimes id)
\triangle_{(K_{\lambda}\cup L_{\lambda})}(U\cup\{I_{i}\}) \\
=\sum\limits_{(K_{\lambda}\cup L_{\lambda})\subset U\cup\{I_{i}\}}\,\,
\sum\limits_{(D_{j}\cup E_{j})\subset(K_{\lambda}\cup L_{\lambda})}
(\triangle_{(D_{j}\cup E_{j})}\otimes id)
\triangle_{(K_{\lambda}\cup L_{\lambda})}(U\cup\{I_{i}\}),
\end{array}
$$
where $\mathcal{R}(\{E_{j}\})=\mathcal{R}(\{L_{\lambda}\})=\{I_{i}\}$

According to lemma 3.1 we have
$$
(\triangle_{(D_{j}\cup E_{j})}\otimes id)
\triangle_{(K_{\lambda}\cup L_{\lambda})}(U\cup\{I_{i}\})
=(id\otimes\triangle_{(M_{\mu}\cup N_{\mu})})
\triangle_{(D_{j}\cup E_{j})}(U\cup\{I_{i}\}),
$$
where 
$$
\{M_{\mu}\cup N_{\mu}\}=
\textit{ind}\{(K_{\lambda}\cup L_{\lambda})\diagup
(D_{j}\cup E_{j})\}.
$$
It is obvious that
$\{M_{\mu}\cup N_{\mu}\}\subset
\textit{ind}\{U\cup\{I_{i}\}\diagup(D_{j}\cup E_{j})\}$.
Conversely, by the procedure of reversion, it is easy to know
that for each 
$\{M_{\mu}\cup N_{\mu}\}\subset
\textit{ind}\{U\cup\{I_{i}\}\diagup(D_{j}\cup E_{j})\}$,
there is a $(K_{\lambda}\cup L_{\lambda})$ such that
$(D_{j}\cup E_{j})\subset(K_{\lambda}\cup L_{\lambda})$
and
$$
\{M_{\mu}\cup N_{\mu}\}=
\textit{ind}\{(K_{\lambda}\cup L_{\lambda})\diagup
(D_{j}\cup E_{j})\}.
$$
Up to now we complete the proof of the theorem.
\end{proof}

The following conclusion can be proved by induction.
\begin{proposition}
Let $m=\max\limits_{i}\{\#[U],\#[\{I_{i}\}]\}$, then we have
\begin{equation}
(\triangle^{\prime})^{m-1}(U\cup\{I_{i}\})=0.
\end{equation}
\end{proposition}

Now we consider the vector space over $\mathbb{C}$
spanned by $\mathcal{P}_{dis}(\Xi_{A})$ denoted by
$V_{\mathcal{P}_{A}}$. We define co-unit $\epsilon:V_{\mathcal{P}_{A}}\longrightarrow
\mathbb{C}$ as following:
$$
\epsilon(\emptyset)=1,\,\epsilon(U\cup\{I_{i}\})=0,
U\cup\{I_{i}\}\neq\emptyset.
$$
Then $(V_{\mathcal{P}_{A}},\triangle,\epsilon)$ is a coalgebra
if we extend $\triangle$ to $V_{\mathcal{P}_{A}}$. Actually, from
definition 2.6, it is obvious that we have
$$
\begin{array}{ccccc}
      & \triangle &  & \epsilon\otimes id &  \\
V_{\mathcal{P}_{A}} & \longrightarrow &V_{\mathcal{P}_{A}}
\otimes V_{\mathcal{P}_{A}}&
\longrightarrow &\mathbb{C}\otimes V_{\mathcal{P}_{A}}
\simeq  V_{\mathcal{P}_{A}},        
\end{array}
$$
and
$$
\begin{array}{ccccc}
      & \triangle &  & id\otimes \epsilon &  \\
V_{\mathcal{P}_{A}} & \longrightarrow &V_{\mathcal{P}_{A}}
\otimes V_{\mathcal{P}_{A}}&
\longrightarrow &V_{\mathcal{P}_{A}}\otimes \mathbb{C}
\simeq  V_{\mathcal{P}_{A}}.        
\end{array}
$$
From the coalgebra constructed above we can get
bialgebras $T(V_{\mathcal{P}_{A}})$ and
$S(V_{\mathcal{P}_{A}})$ in standard way, 
where $T(V_{\mathcal{P}_{A}})$
and $S(V_{\mathcal{P}_{A}})$ are tensor algebra and
symmetric tensor algebra of $V_{\mathcal{P}_{A}}$ respectively.
With the help of proposition 2.5 we know that
the reduced coproduct in definition 2.6 is conilpotent,
therefore $T(V_{\mathcal{P}_{A}})$ and
$S(V_{\mathcal{P}_{A}})$ are Hopf algebras.

\section{Hopf algebra concerning $gl(d,\mathbb{C})$}

In this section we will discuss Hopf algebra related to $gl(d,\mathbb{C})$
($d>1$). We will work on a subspace 
$$
gl(d,\mathbb{C})_{0}=\{M\in gl(d,\mathbb{C})|M\, with\, zero\, diagonal\}.
$$

\subsection{Quotient and Collapsing}

\paragraph{\textit{Diagonal submatrix:}}

Let $I\subset\underline{d}=\{1,\cdots,d\},
\,|I|=k,\, I=\{a_{1},\cdots, a_{k}\}\,(2\leqslant k<d;\,
0<a_{1}<\cdots <a_{k}\leqslant d)$, 
$M=(m_{ij})_{d\times d}\in gl(d,\mathbb{C})_{0}$,
then $I$ determines a diagonal submatrix of $M$ denoted by $M_{I}$,
$M_{I}=(m_{a_{i}a_{j}})_{k\times k}$. 
The subset $I$ is called the position
of $M_{I}$. Conversely, we can define the embedding 
$\iota_{I}:gl(k,\mathbb{C})_{0}\hookrightarrow gl(d,\mathbb{C})_{0}$,
for $M=(m_{ij})_{k\times k}\in gl(k,\mathbb{C})_{0}$, 
$\iota_{I}M=(m_{ij}^{\prime})_{d\times d}$, such that 
$m^{\prime}_{a_{i}a_{j}}=m_{ij}$, $m^{\prime}_{pq}=0$ 
($p\in I^{c}$ or $q\in I^{c}$). The subset $I$ 
is called the position of $\iota_{I}$. Actually, we have
$(\iota_{I}M_{I})_{I}=M_{I}$, in this sense we can 
identify $M_{I}$ with $\iota_{I}M_{I}$.

For two subsets $I,J\subset \underline{d}$, it is easy to check that
$$
\iota_{I}(\iota_{J}M_{J})_{I}=\iota_{J}(\iota_{I}M_{I})_{J}=\iota_{I\cap J}M_{I\cap J},
$$
specially, if $I\subset J$, we have
$$
\iota_{I}(\iota_{J}M_{J})_{I}=\iota_{I}M_{I}.
$$
We can always think of $\emptyset\subset\underline{d}$ and $\emptyset$
as a matrix of order $0$. It is natural for us to define
$$
\iota_{\emptyset}\emptyset=0\in gl(d,\mathbb{C})_{0},\,\forall\,d\in \mathbb{N}.
$$

Generally, for a partition $(I_{1},\dots,I_{l})$ in $\underline{d}$, $M\in gl(d,\mathbb{C})_{0}$,
we define
$$
M_{(I_{i})}=\sum\limits_{i=1}^{l}\iota_{I_{i}}M_{I_{i}}.
$$
For two partitions $(I_{1},\dots,I_{l})$ and $(J_{1},\dots,J_{k})$, we have
$$
M_{(I_{i})\cap (J_{j})}=\sum\limits_{i,j}\iota_{I_{i}\cap J_{j}}M_{I_{i}\cap J_{j}}.
$$
If $(I_{i})\cap (J_{j})=\emptyset$, we have
$$
M_{(I_{i})\cup (J_{j})}=\sum\limits_{i}\iota_{I_{i}}M_{I_{i}}+\sum\limits_{j}\iota_{J_{j}}M_{J_{j}}.
$$

\paragraph{\textit{Quotient and collapsing:}}

Let $M\in gl(d,\mathbb{C})_{0}$, $I\subset \underline{d}$,
$I=\{a_{1},\cdots, a_{k}\}$, $I^{c}=\{b_{1},\dots,b_{p}\}$
($I^{c}=\underline{d}\setminus I;\,0<a_{1}<\dots <a_{k}\leqslant d; 0<b_{1}<\dots <b_{p}\leqslant d;
\,1<k<d,\,p=d-k$),
we define collapsing matrix of $M$ denoted by $M\diagup M_{I}$
in the following way. To get $M\diagup M_{I}$
we extend $M_{I^{c}}$ by putting an "ideal index"
$\ast$ of row and column,
$$
M\diagup M_{I}=
\begin{pmatrix}
      0  & m_{1^{\ast} 2}\cdots m_{1^{\ast} p+1}    \\
  \begin{array}{c}
      m_{2 1^{\ast}}    \\
      \vdots    \\
      m_{p+1 1^{\ast}}   
\end{array}      
      &  M_{I^{c}}
\end{pmatrix},
$$
where 
$m_{1^{\ast} i}=\sum_{j=1}^{k}m_{a_{j}b_{i-1}},m_{i1^{\ast}}=\sum_{j=1}^{k}m_{b_{i-1}a_{j}}
(\,i=2,\dots, p+1)$. From the definition of collapsing mentioned above we 
know that $M\diagup M_{I}\in gl(d-k+1,\mathbb{C})_{0}$.
The set of indices of rows or columns of $M\diagup M_{I}$
consists of $I^{c}$ and set of "ideal index" $\{\ast\}$.
$M\diagup M_{I}$ is called quotient of $M$ by $M_{I}$.
Particularly, we define 
\begin{equation}
M\diagup M:=0, \,M\diagup 0:=M.
\end{equation}

\paragraph{Example:}
In this example we calculate the quotient of quotient.
For a non-trivial subset $I\subset \underline{d}$,
the set of indices of rows or columns of quotient 
$M\diagup M_{I}$ consists
of $I^{c}\cup \{\ast\}$, where $M\in gl(d,\mathbb{C})_{0}$.
Let $J\subset I^{c}\cup \{\ast\}$ be a non-trivial subset, 
we can define $(M\diagup M_{I})\diagup (M\diagup M_{I})_{J}$
in the way same as $M\diagup M_{I}$.
Let $J^{\prime}=J\cap I^{c}$, we consider two possible
cases:
\begin{itemize}
\item \textbf{Case of $J\subset I^{c}$:}
In this situation we have $I,J\subset \underline{d}$, 
$I\cap J=\emptyset$ and $(M\diagup M_{I})_{J}=M_{J}$.
To get $(M\diagup M_{I})\diagup (M\diagup M_{I})_{J}$
we need to put an additional "ideal index" of row and column.
Explicitly, let $\underline{d}\setminus (I\cup J)=\{i_{1},\cdots,i_{q}\}$
($0<i_{1}<\cdots<i_{q}\leqslant d$),
then the set of indices of rows and columns for
$(M\diagup M_{I})\diagup (M\diagup M_{I})_{J}$ is
$\{1^{\ast},2^{\ast},1,\cdots,q\}$. If in 
$(M\diagup M_{I})\diagup (M\diagup M_{I})_{J}$
we let $1^{\ast}$ corresponds to $I$, and $2^{\ast}$
corresponds to $J$, 
then $m_{1^{\ast}2^{\ast}}$ is the sum of the entries with indices of
rows in $I$ and indies of columns in $J$.
$m_{1^{\ast}j}$ is the sum of entries with index of column $i_{j}$
and indices of rows in $I$ ($1\leqslant j\leqslant q$).
The other entries with "ideal index" are similar.
Moreover, it is easy to check that
$$
(M\diagup M_{I})\diagup (M\diagup M_{I})_{J}
=(M\diagup M_{J})\diagup (M\diagup M_{J})_{I}.
$$
\item \textbf{Case of $J=J^{\prime}\cup \{\ast\}$:}
It is easy to check that
$$
(M\diagup M_{I})\diagup (M\diagup M_{I})_{J}=M\diagup M_{(I\cup J^{\prime})},
$$
and
$$
(M\diagup M_{I})_{J}=(M_{I\cup J^{\prime}})\diagup M_{I}.
$$
\end{itemize}

\begin{remark}
$\\$
\begin{itemize}
  \item Actually, for a given subset $I\subset\underline{d}$,
  we have a "factorization" of $M$ according to $M_{I}$:
  
  $$
  M=M_{1}+\iota_{I}M_{I}+\iota_{I^{c}}M_{I^{c}}.
  $$
  The procedure of collaping takes place on 
  $M_{1}=M-\iota_{I}M_{I}-\iota_{I^{c}}M_{I^{c}}$.
  The quotient can be described in the following way.
  
  \begin{itemize}
  \item $\mathbf{Step\,\,1:}$ $M_{I}$ is removed from $M$. 
  \item $\mathbf{Step\,\,2:}$ $M_{1}$ is collapsed into
  a matrix $M^{\ast}$ ,where $M^{\ast}\in gl(d-k+1,\mathbb{C})_{0}$
  with form as follows:
  
  $$
  M^{\ast}=
  \begin{pmatrix}
  0  & m_{1^{\ast} 2}\cdots m_{1^{\ast} p+1}    \\
  \begin{array}{c}
  m_{2 1^{\ast}}    \\
  \vdots    \\
  m_{p+1 1^{\ast}}   
  \end{array}      
  &  0
  \end{pmatrix}.
  $$
  
  \item $\mathbf{Step\,\,3:}$ $M_{I^{c}}$ is embedded into
  $gl(d-k+1,\mathbb{C})_{0}$ where the position is
  $\{2,\cdots,m-k+1\}$. The embedding is also denoted by $\iota_{I^{c}}$.
  \end{itemize}
  Finally, we get the quotient of $M$:
  
  $$
  M\diagup M_{I}=M^{\ast}+\iota_{I^{c}}M_{I^{c}}.
  $$
  As a consequence of the above discussion, we have
  
  $$
  (M_{1}+M_{2})\diagup (M_{1}+M_{2})_{I}=
  M_{1}\diagup(M_{1})_{I}+M_{2}\diagup(M_{2})_{I},
  M_{1},M_{2}\in gl(d,\mathbb{C})_{0}.
  $$ 
  Thus, for a given subset $I\subset\underline{d}$
  ($|I|=k,\,2\leq k< d$), the quotient 
  
  $$
  \cdot\diagup I:gl(d,\mathbb{C})_{0}
  \longrightarrow gl(d-k+1,\mathbb{C})_{0}
  $$
  is a homomorphism, where $gl(d,\mathbb{C})_{0}$
  and $gl(d-k+1,\mathbb{C})_{0}$ are regarded
  as Able groups under the addition of the
  matrices. 
  
\item If $I=\{i\}$ and we put "ideal index" in the original position labeled 
  $i$, then we have $M=M\diagup M_{I}$.
\end{itemize}
\end{remark}

Generally, we can discuss the case of partitions.
For a partition $(I_{1},\dots,I_{l})$ in $\underline{d}$, $\sum_{i=1}^{l}|I_{i}|<d$,
$1<|I_{i}|\,(1=1,\dots, l)$, we can define the quotient of $M$ by $M_{(I_{i})}$ 
denoted by $M\diagup M_{(I_{i})}$ (or by $M\diagup (I_{i})$ simply) inductively.

$$
M\diagup (I_{i})=(\dots((M\diagup M_{I_{1}})\diagup M_{I_{2}}\dots)\diagup M_{I_{l}}.
$$

\begin{remark}
$\\$
\begin{itemize}
\item We specify "ideal indices" of rows or columns of $M\diagup(I_{i})$
situate in up $l$ rows and left $l$ columns. 
\item Let $\sigma\in \textbf{S}_{l}$, $\textbf{S}_{l}$ denotes the symmetric
group of $l$ letters, if we ignore the order of "ideal indices" of
$M\diagup(I_{i})$, we do not distinguish $M\diagup (I_{i})$
from $M\diagup (I_{\sigma(i)})$.
\end{itemize}
\end{remark}

Here we are interested in the case of $(M\diagup (I_{i}))\diagup (J_{j})$,
where $(J_{1},\cdots,J_{k})$ is a partition in
$I^{c}\cup \{1^{\ast},\cdots,l^{\ast}\}$, $i^{\ast}\,\,(1\leq i\leq l)$ is 
"ideal index" of $M\diagup (I_{i})$ corresponding to $I_{i}$ and 
$I=\mathcal{R}(\{I_{i}\}),\,I^{c}=\underline{d}\setminus I$. We hope to
express $(M\diagup (I_{i}))\diagup (J_{j})$ in terms of $M$ and
partitions in $\underline{d}$. 
Actually, recalling the contents in section 2,
the set of indices of rows or columns of $M\diagup (I_{i})$,
$I^{c}\cup \{1^{\ast},\cdots,l^{\ast}\}$, can be identified
with $\underline{d}\diagup (I_{i})=I^{c}\cup \{I_{i}\}$,
where we identify $i^{\ast}$ with $I_{i}$.
Then we have:

\begin{proposition}
Let $(I_{i})$ and $(K_{\lambda})$ be partitions in $\underline{d}$,
$(I_{i})\subset (K_{\lambda})$, $(J_{j})=(K_{\lambda})\diagup (I_{i})$,
then we have

\begin{equation}
(M\diagup (I_{i}))\diagup (J_{j})=M\diagup (K_{\lambda}),
\end{equation}
and

\begin{equation}
(M\diagup (I_{i}))_{(J_{j})}=M_{(K_{\lambda})}\diagup (I_{i}),
\end{equation}
where in the formula (4.3) we have 
$\mathcal{R}_{1}(J_{j})\subset K_{\lambda}$ and
$I_{i}\subset K_{\lambda}$.

Conversely, for a partition $(I_{1},\cdots,I_{l})$ in $\underline{d}$ and
partition $(J_{j})$ in $\underline{d}\diagup(I_{i})=
(\underline{d}\setminus\mathcal{R}(\{I_{i}\}))\cup\{1^{\ast},\cdots,l^{\ast}\}$,
there is a partition $(K_{\lambda})$ in $\underline{d}$, such that
 
$$
\begin{array}{c}
(I_{i})\subset (K_{\lambda}),\,
(K_{\lambda})=(\mathcal{R}_{1}(J_{j}))\cup
(I_{i})_{i^{\ast}\notin\mathcal{R}_{1}(\{J_{j}\})}, \\
(K_{\lambda})\diagup (I_{i})=
(\emptyset\cup\{I_{i}\})_{i^{\ast}\notin
\mathcal{R}_{1}(\{J_{j}\})}
\cup(J_{j}),
\end{array}
$$
and the formulas (4.2), (4.3) are valid.
\end{proposition}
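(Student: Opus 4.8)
The plan is to reduce everything to the set-theoretic results of Section~2 (Proposition~2.1 and Corollary~2.1) by means of the identification, recorded just before the statement, of the index set of $M\diagup(I_{i})$ with $\underline{d}\diagup(I_{i})$, under which the $j$-th ideal index $j^{\ast}$ corresponds to the block $I_{j}$. The conceptual point is that matrix collapsing realises, at the level of indices, the \emph{induced} quotient: a collapsed family of rows and columns becomes a single new index, no matter how that family was built up by earlier collapsings. Consequently the index set of $(M\diagup(I_{i}))\diagup(J_{j})$ is $\textit{ind}\{(\underline{d}\diagup(I_{i}))\diagup(J_{j})\}$, which by Corollary~2.1 (with $U=\underline{d}$) equals $\underline{d}\diagup(K_{\lambda})$; and the latter is also the index set of $M\diagup(K_{\lambda})$. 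So in (5.2) both sides are matrices on the same index set and it suffices to compare entries; the same remark handles (5.3) after passing to diagonal blocks.

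The tool for comparing entries is the formula, proved by induction on the number $l$ of blocks of the partition: writing $\widetilde{\xi}=\xi$ when $\xi\in I^{c}$ and $\widetilde{\xi}=I_{j}$ when $\xi=j^{\ast}$, one has, for indices $\xi,\eta$ of $M\diagup(I_{i})$,
\[
(M\diagup(I_{i}))_{\xi\eta}=
\begin{cases}
\sum_{p\in\widetilde{\xi},\,q\in\widetilde{\eta}}m_{pq}, & \xi\neq\eta,\\
0, & \xi=\eta,
\end{cases}
\]
the case $l=1$ being the defining formula for $M\diagup M_{I}$ (recall $m_{pp}=0$ and that the quotient has zero diagonal), and the inductive step being the observation that collapsing one further block $I_{l+1}$, which lies among the non-ideal indices of $M\diagup(I_{i})_{1\le i\le l}$, merely adds up the corresponding rows and columns, reproducing the formula with $l+1$ blocks by associativity of finite sums. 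Granting this, (5.2) is immediate: an off-diagonal entry of $M\diagup(K_{\lambda})$ is a sum $\sum m_{pq}$ over a pair of $K$-blocks (or elements of $K^{c}$), while the corresponding entry of $(M\diagup(I_{i}))\diagup(J_{j})$ is obtained by first grouping the summation indices into $I$-blocks and then re-grouping those into the blocks $\mathcal{R}_{1}(J_{j})$ that make up $(K_{\lambda})$; the two nested sums agree. For (5.3) one notes that $(I_{i})\subset(K_{\lambda})$ forces each $I_{i}$ into a single $K_{\lambda}$, so $M_{(K_{\lambda})}\diagup(I_{i})$ is defined, and both sides decompose as the direct sum over the $J$-blocks of the collapsed diagonal blocks $M_{K_{j}}\diagup(I_{i}\cap K_{j})$; the identity follows by restricting the entry comparison to diagonal blocks.

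For the converse, take $(K_{\lambda})=(\mathcal{R}_{1}(J_{j}))\cup(I_{i})_{I_{i}\cap\mathcal{R}_{1}(\{J_{j}\})=\emptyset}$ as in the statement. The condition $I_{i}\cap\mathcal{R}_{1}(\{J_{j}\})=\emptyset$ says exactly that the ideal index $i^{\ast}$ occurs in none of the $J_{j}$, i.e. that $I_{i}$ is one of the blocks designated $I_{i'}$ in Proposition~2.1; thus $(K_{\lambda})$ is precisely the partition furnished by formula (2.9) with $U=\underline{d}$, for which Proposition~2.1 yields $(I_{i})\subset(K_{\lambda})$ and $(K_{\lambda})\diagup(I_{i})=(\emptyset\cup\{I_{i}\})_{I_{i}\cap\mathcal{R}_{1}(\{J_{j}\})=\emptyset}\cup(J_{j})$, the displayed formula. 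To obtain (5.2), (5.3) for this $(K_{\lambda})$ one observes that collapsing a one-element ideal block changes nothing (the remark after the Example: for $I=\{i\}$ one has $M\diagup M_{I}=M$), so $(M\diagup(I_{i}))\diagup(J_{j})$ and $(M\diagup(I_{i}))_{(J_{j})}$ are unaffected by adjoining the one-element blocks $\{I_{i}\}$ with $I_{i}\cap\mathcal{R}_{1}(\{J_{j}\})=\emptyset$ to $(J_{j})$; since $(K_{\lambda})\diagup(I_{i})$ equals that enlarged partition, Part~1 applies. The only place demanding care is this bookkeeping of the index identification and of the one-element ideal blocks, together with the inductive proof of the entry formula; neither involves any genuine difficulty, the substance having already been settled set-theoretically in Section~2.
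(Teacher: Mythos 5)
Your proof is correct, but it takes a genuinely different route from the paper's. For the first pair of identities the paper simply declares the claim obvious, whereas you supply an actual mechanism: the inductive entry formula $(M\diagup (I_{i}))_{\xi\eta}=\sum_{p\in\widetilde{\xi},\,q\in\widetilde{\eta}}m_{pq}$ together with the identification of index sets via Corollary 2.1 (matrix collapsing realising the induced quotient), after which (5.2) becomes an associativity-of-summation statement over the nested blocks $\mathcal{R}_{1}(J_{j})=K_{\lambda}$ and (5.3) follows by restricting the same comparison to diagonal blocks. For the converse, the paper constructs $(K_{\lambda})$ by a three-way case analysis on $J''=J\cap\underline{l}^{\ast}$ (empty, all of $\underline{l}^{\ast}$, or a proper nonempty part) and leaves the verification of (5.2)--(5.3) at the level of ``we can check''; you instead recognise $(K_{\lambda})$ as the partition (2.9) of Proposition 2.1 with $U=\underline{d}$ and reduce the verification to Part 1 by noting that the singleton ideal blocks $\{i'^{\ast}\}$ present in $(K_{\lambda})\diagup(I_{i})$ but absent from $(J_{j})$ collapse trivially. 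Your route buys a uniform, case-free treatment and makes the dependence on the Section 2 machinery explicit; the paper's buys an explicit description of $(K_{\lambda})$ in each configuration. The only point where your write-up is looser than it should be is the entry formula: you prove it by induction only for blocks lying among the non-ideal indices, and then use it implicitly for blocks $J_{j}$ that contain ideal indices when you re-group the nested sums for (5.2). That extension is immediate from the single-collapse definition (which sums rows and columns irrespective of whether the summed indices are ideal), but it is doing real work and deserves an explicit sentence.
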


\begin{proof}
The first part of the proposition is obvious, we need to prove
the second part.
Let $J=\bigcup_{j=1}^{k}J_{j}$, $J^{\prime}=J\cap I^{c}$, 
$J^{\prime\prime}=J\cap \underline{l}^{\ast}$, 
$J^{\prime}_{j}=J_{j}\cap I^{c}$, $J_{j}^{\prime\prime}=J_{j}\cap \underline{l}^{\ast}$
($j=1,\cdots,k$), where $\underline{l}^{\ast}=\{1^{\ast},\cdots,l^{\ast}\}$ is the set of
"ideal indices" of $M\diagup (I_{i})$. We discuss the problem for three
cases respectively.

\textbf{Case of $J^{\prime\prime}=\emptyset$:}
In this case $J\subset I^{c}$, thus $(J_{j})$ is a partition in $I^{c}$.
Moreover, we know that $(M\diagup (I_{i}))_{(J_{j})}=M_{(J_{j})}$.
If we take $(K_{\lambda})=(I_{i})\cup(J_{j})$, it is obvious that 
the formulas (4.2), 4.3) are valid.

\textbf{Case of $J^{\prime\prime}=\underline{l}^{\ast}$:}
In this case we take 
$K_{j}=J_{j}^{\prime}\cup(\bigcup_{i^{\ast}\in J_{j}^{\prime\prime}}I_{i})$
for $J_{j}^{\prime\prime}\neq \emptyset$ and
$K_{j}=J_{j}$ for $J_{j}^{\prime\prime}=\emptyset$.
Then $(I_{i})\subset (K_{\lambda})$ and (4.2), (4.3) are valid.

\textbf{Csae of $J^{\prime\prime}\neq\emptyset$ and $\underline{l}^{\ast}
\setminus J^{\prime\prime}\neq\emptyset$:}
Without loss of generality, we assume $J^{\prime\prime}\neq\emptyset$
($1\leqslant j\leqslant p$), $J^{\prime\prime}=\emptyset$
($j>p$). We take $(K_{\lambda})$ in the following way:
$$
\Bigg\{
\begin{array}{ccc}
      K_{\lambda}=&J_{\lambda}^{\prime}\cup
      (\bigcup\limits_{i^{\ast}\in J_{\lambda}^{\prime\prime}}I_{i}),
      & 1\leqslant\lambda\leqslant p   \\
      K_{\lambda}=&J_{\lambda},  & p<\lambda\leqslant k \\
      K_{\lambda}=&I_{i_{\lambda}}, & i_{\lambda^{\ast}}\in 
      \underline{l}^{\ast}\setminus J^{\prime\prime}.  
\end{array}
$$
It is obvious that $(I_{i})\subset (K_{\lambda})$, and we can
check that (4.2), (4.3) are valid. 
\end{proof}

\subsection{Hopf algebra related to $gl(d,\mathbb{C})_{0}$}

\paragraph{\textit{Coproduct:}}
Let $(I_{i})$ be a partition in $\underline{d}$ we define
"sub-coproduct" related to $(I_{i})$ as follows:

\begin{definition}Let $M\in gl(d,\mathbb{C})_{0}$
($d>2$), $(I_{i})$ be a partition in $\underline{d}$,
$|I_{i}|>1$ for each $i$, we define

\begin{equation}
\triangle_{(I_{i})}M=M_{(I_{i})}\otimes M\diagup (I_{i}).
\end{equation}
\end{definition}

The following lemma is a corollary of proposition 2.1.

\begin{lemma}
Let
$(I_{i})$, $(K_{\lambda})$ be two partitions in $\underline{d}$
satisfying $(I_{i})\subset (K_{\lambda})$, then we have
$$
(\triangle_{(I_{i})}\otimes id)\triangle_{(K_{\lambda})}
=(id\otimes\triangle_{(K_{\lambda})\diagup (I_{i})})\triangle_{(I_{i})}.
$$
\end{lemma}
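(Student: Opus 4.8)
The plan is to unwind both composites on a generic $M\in gl(d,\mathbb{C})_{0}$ using Definition 5.1, and then to recognise the matching of the three tensor legs as precisely the two identities (5.2) and (5.3) of Proposition 5.1 (which are themselves the matrix incarnation of Proposition 2.1, and reduce to it via the identification $i^{\ast}\leftrightarrow I_{i}$ of the index set of $M\diagup(I_{i})$ with $\underline{d}\diagup(I_{i})$). Throughout, set $(J_{j})=(K_{\lambda})\diagup(I_{i})$; since $(I_{i})\subset(K_{\lambda})$, this is a genuine partition of $\underline{d}\diagup(I_{i})$, so $\triangle_{(J_{j})}$ is defined and is exactly the operator $\triangle_{(K_{\lambda})\diagup(I_{i})}$ appearing on the right-hand side.

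First I would compute the left-hand side. Applying $\triangle_{(K_{\lambda})}$ to $M$ gives $M_{(K_{\lambda})}\otimes M\diagup(K_{\lambda})$, and then $\triangle_{(I_{i})}\otimes id$ produces $(M_{(K_{\lambda})})_{(I_{i})}\otimes (M_{(K_{\lambda})})\diagup(I_{i})\otimes M\diagup(K_{\lambda})$. Here one uses that each $I_{i}$ lies inside a single $K_{\lambda}$: this locality makes the diagonal submatrix of $M_{(K_{\lambda})}$ at position $I_{i}$ equal to $M_{I_{i}}$, hence $(M_{(K_{\lambda})})_{(I_{i})}=M_{(I_{i})}$, and it also guarantees that $\triangle_{(I_{i})}$ acts on $M_{(K_{\lambda})}$ only through the positions $I_{i}$, so no entries outside $\bigcup K_{\lambda}$ interfere. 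Next I would compute the right-hand side: $\triangle_{(I_{i})}M=M_{(I_{i})}\otimes M\diagup(I_{i})$, and then $id\otimes\triangle_{(J_{j})}$ turns the second factor into $(M\diagup(I_{i}))_{(J_{j})}\otimes (M\diagup(I_{i}))\diagup(J_{j})$, yielding $M_{(I_{i})}\otimes(M\diagup(I_{i}))_{(J_{j})}\otimes(M\diagup(I_{i}))\diagup(J_{j})$.

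It then remains to match the last two legs. By formula (5.3), $(M_{(K_{\lambda})})\diagup(I_{i})=M_{(K_{\lambda})}\diagup(I_{i})=(M\diagup(I_{i}))_{(J_{j})}$, so the middle legs agree; by formula (5.2), $(M\diagup(I_{i}))\diagup(J_{j})=M\diagup(K_{\lambda})$, so the right legs agree. Since the left legs are both $M_{(I_{i})}$, the two composites coincide, proving the lemma. The only delicate point — the main obstacle, such as it is — lies in the bookkeeping of the first step: one must verify that when $\triangle_{(I_{i})}$ is applied to $M_{(K_{\lambda})}$ rather than to $M$, the positions and the "ideal indices" are labelled consistently with the index set $\underline{d}\diagup(I_{i})$ used in Proposition 5.1; but this is exactly the identification set up immediately before that proposition, so the verification is routine and introduces no new content.
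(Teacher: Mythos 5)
Your proposal is correct and follows essentially the route the paper intends: the paper states Lemma 5.1 as a corollary of Proposition 2.1, and the explicit computation you give — unwinding both composites and matching the middle and right tensor legs via formulas (5.2) and (5.3) of Proposition 5.1 (the matrix form of Proposition 2.1), together with the observation $(M_{(K_{\lambda})})_{(I_{i})}=M_{(I_{i})}$ — is precisely the argument the paper itself carries out when proving Theorem 5.1. No gaps.
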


For coproduct we have the following definition:

\begin{definition}
We define the coproduct as follows:
\begin{itemize}
\item 

\begin{equation}
\triangle\emptyset=\emptyset\otimes\emptyset.
\end{equation}
\item Let $M\in gl(2,\mathbb{C})_{0}$,

\begin{equation}
\triangle M=M\otimes\emptyset+\emptyset\otimes M.
\end{equation}
\item Let $M\in gl(d,\mathbb{C})_{0}$, $d>2$,

\begin{equation}
\triangle M=M\otimes \emptyset+\emptyset\otimes M
+\sum\limits_{I\subset \underline{d},\,(I_{i})\in \textbf{part}(I)}
\triangle_{(I_{i})}M
\end{equation}
In sum (4.7), for all partitions $(I_{i})$ we assume $|I_{i}|>1$,
moreover, if $I=\underline{d}$, $(I_{i})$ consisting of
at least two subsets.
\end{itemize}
\end{definition}

The coproduct defined in definition 4.2 is coassociative.
Actually we have

\begin{theorem}
The coproduct in definition 4.2 satisfies

\begin{equation}
(\triangle\otimes id)\triangle=(id\otimes\triangle)\triangle.
\end{equation}
Furthermore, for $M\in gl(d,\mathbb{C})_{0}$
($M\not=0$), we have

\begin{equation}
(\triangle^{\prime})^{d-1}M=0,
\end{equation}
where $\triangle^{\prime}$ is reduced coproduct

$$
\triangle^{\prime}M=\triangle M-(M\otimes\emptyset+\emptyset\otimes M).
$$
\end{theorem}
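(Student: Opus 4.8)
The plan is to reduce Theorem~5.1 to the purely set-theoretic machinery already established. For the coassociativity~(5.8), it suffices to verify the identity on the reduced coproduct $\triangle'$, and since $\triangle'$ is multiplicative in the sense induced by direct sums of disjoint submatrices, one may assume $M$ is "connected" and work with $M\in gl(d,\mathbb{C})_0$ itself. Applying~(5.7) twice,
$$
(\triangle'\otimes id)\triangle' M=\sum_{(I_i)}\sum_{(J_j)\subset(I_i)}
M_{(J_j)}\otimes (M_{(I_i)}\diagup(J_j))\otimes (M\diagup(I_i)),
$$
and
$$
(id\otimes\triangle')\triangle' M=\sum_{(I_i)\subset(K_\lambda)}
M_{(I_i)}\otimes (M_{(K_\lambda)}\diagup(I_i))\otimes (M\diagup(K_\lambda)),
$$
where all partitions range over those satisfying the size conditions of Definition~5.2. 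The match between the two triple sums is exactly the content of Proposition~5.1: given $(I_i)\subset(K_\lambda)$ one sets $(J_j)=(K_\lambda)\diagup(I_i)$, and formulas~(5.2) and (5.3) give $M\diagup(K_\lambda)=(M\diagup(I_i))\diagup(J_j)$ and $M_{(K_\lambda)}\diagup(I_i)=(M\diagup(I_i))_{(J_j)}$; conversely, the second part of Proposition~5.1 (together with Lemma~5.1, which is the matrix shadow of Proposition~2.1) shows every term on the left arises this way from a unique $(K_\lambda)$ with $(I_i)\subset(K_\lambda)$. This is the same bookkeeping argument used in the proof of Theorem~3.1, now transported verbatim through the identification of the index set of $M\diagup(I_i)$ with $\underline{d}\diagup(I_i)$.

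For the conilpotency~(5.9), the key observation is that $\triangle'$ strictly increases the "collapsing complexity": each nontrivial term of $\triangle' M$ has a right tensor factor $M\diagup(I_i)$ with $\sum_i(|I_i|-1)\geq 1$, so the order of the matrix in the right factor drops by at least one at each application. Since $M\in gl(d,\mathbb{C})_0$ and a matrix of order $2$ has $\triangle' M=0$ by~(5.6), after at most $d-1$ iterations the right-most factor reaches order $2$ (or $0$), and similarly the intermediate factors $M_{(I_i)}\diagup(J_j)$ are controlled because $(J_j)\subsetneq(I_i)$ forces a strict drop there too; iterating, one finds $(\triangle')^{d-1}M=0$. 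Formally I would induct on $d$: write out $(\triangle')^{d-1}=(\triangle'\otimes id^{\otimes(d-2)})(\triangle')^{d-2}$, apply the inductive hypothesis to each tensor slot whose matrix has order $<d$, and note that the slot of maximal order strictly decreases.

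The main obstacle is bookkeeping rather than conceptual: one must be careful that the partitions appearing in the sums genuinely satisfy the constraints imposed in Definition~5.2 (each block of size $>1$, and when $I=\underline{d}$ at least two blocks), and that these constraints are preserved under the correspondence $(I_i)\leftrightarrow(K_\lambda,(J_j))$ of Proposition~5.1 — in particular that $(J_j)$ never degenerates to the trivial single-block partition of the whole collapsed set and that $(K_\lambda)$ never violates the two-block condition. Checking this compatibility, exactly as in the remark following Definition~3.1 and the proof of Lemma~3.1, is the only place where genuine verification is needed; everything else follows from Proposition~5.1, Lemma~5.1, and the template of Theorems~3.1 and~4.2.
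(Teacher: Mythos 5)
Your argument is essentially the paper's own: the coassociativity is proved by expanding both sides into the two double sums over $(J_{j})\subset(I_{i})$ (resp. $(K_{\lambda})$ in the collapsed index set) and invoking Proposition 5.1 together with the identities (5.2)--(5.3) to match terms, exactly as in the printed proof. Your conilpotency discussion is in fact more explicit than the paper's one-line appeal to induction from (5.6); the only point to watch (which the paper also glosses over) is that the left tensor factors $M_{(I_{i})}$ remain matrices of order $d$, so the decreasing quantity must be taken to be the support of the nonzero entries rather than the literal matrix order.
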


\begin{proof}
We need only to check the formula for reduced coproduct.
For $(\triangle^{\prime}\otimes id)\triangle^{\prime}$ we have

$$
\begin{array}{c}
      (\triangle^{\prime}\otimes id)\triangle^{\prime}M    \\
     =\sum\limits_{(I_{i})\in \textbf{part}(I),\,I\subset \underline{d}}
     \triangle^{\prime} M_{(I_{i})}\otimes M\diagup (I_{i}) \\
     =\sum\limits_{(I_{i})\in \textbf{part}(I),\,I\subset \underline{d}}\,\,
     \sum\limits_{(J_{j})\subset (I_{i})}(M_{(I_{i})})_{J_{j}}\otimes
     M_{(I_{i})}\diagup (J_{j})\otimes M\diagup (I_{i}).  
\end{array}
$$
Because $(J_{j})\subset (I_{i})$, we have 
$(M_{(I_{i})})_{(J_{j})}=M_{(J_{j})}$.

On the other hand, for $(id\otimes\triangle^{\prime})\triangle^{\prime}$
we have

$$
\begin{array}{c}
       (id\otimes\triangle^{\prime})\triangle^{\prime}M   \\
       =\sum\limits_{(J_{j})\in \textbf{part}(J),\,J\subset \underline{d}}
       M_{(J_{j})}\otimes \triangle^{\prime}(M\diagup (J_{j})) \\
       =\sum\limits_{(J_{j})\in \textbf{part}(J),\,J\subset \underline{d}}\,\,
       \sum\limits_{(K_{\lambda})\in \textbf{part}(K),\,K\subset J^{c}\cup
       \{\ast,\cdots,\ast\}}M_{(J_{j})}\otimes (M\diagup (J_{j}))_{(K_{\lambda})} 
       \otimes(M\diagup (J_{j}))\diagup (K_{\lambda}).
\end{array}
$$
From proposition 4.1 we know that there is a partition
$(I_{i})$ in $\underline{d}$ such that $(J_{j})\subset (I_{i})$
and

$$
\begin{array}{c}
(K_{\lambda})=(I_{i})\diagup (J_{j}), \\
(M\diagup (J_{j}))\diagup (K_{\lambda})=M\diagup (I_{i}), \\
(M\diagup (J_{j}))_{(K_{\lambda})}=M_{(I_{i})}\diagup (J_{j}).
\end{array}
$$
Comparing the expressions of both of 
$(\triangle^{\prime}\otimes id)\triangle^{\prime}M$ and
$ (id\otimes\triangle^{\prime})\triangle^{\prime}M$, 
we know that the formula (4.8) is valid. Noting the formula
(4.6) in definition 4.2, The formula (4.9) can be proved
by induction obviously.
 
\end{proof}

Let 
$$
C_{d}=\bigoplus\limits_{0\leqslant k\leqslant d}gl(k,\mathbb{C})_{0},
$$
and we define unit $u:\mathbb{C}\to C_{d}$ and counit 
$\eta:C_{d}\to\mathbb{C}$ of $C_{d}$ 
as follows:

\begin{equation}
u:c\mapsto c\,0,
\end{equation} 

\begin{equation}
\eta:0\mapsto 1,\eta:M\mapsto 0\,(M\neq 0),
\end{equation}
then $C_{d}$ is a coalgebra. Furthermore,
$T(C_{d})$ and $S(C_{d})$ are Hopf algebras.

\paragraph{\textit{Hpof algebra $\mathcal{H}_{gl}$:}}

Let $A,B\in gl(d,\mathbb{C})_{0}$, we define
a equivalent relation as follows:

$$
A\thicksim B\,\Longleftrightarrow
\exists\, P\,\,s.t.\,\,A=PBP^{T},
$$
where $P$ is a permutation matrix.
In other word, let $A=(a_{ij})_{d\times d}$,
$B=(b_{ij})_{d\times d}$, then $A\thicksim B$ if and only if
there is a $\pi\in \textbf{S}_{d}$ such that $a_{ij}=b_{\pi(i),\pi(j)}$.
For $M=(m_{ij})_{d\times d}\in gl(d,\mathbb{C})_{0}$, we set 
$\pi(M)=(m_{\pi(i),\pi(j)})$, where
$\pi\in \textbf{S}_{d}$.
The equivalent class of a matrix $M$ is denoted by
$\{M\}=\{\pi(M)|\pi\in \textbf{S}_{d}\}$
and the set of equivalent class in $gl(d,\mathbb{C})_{0}$ is denoted
by $(gl(d,\mathbb{C})_{0})\diagup\sim$. Let $c\in\mathbb{C}$, 
$\{M\}\in (gl(d,\mathbb{C})_{0})\diagup\sim$,
we define

$$
c\{M\}=\{cM\},
$$
then $(gl(d,\mathbb{C})_{0})\diagup\sim$ is a vector
space over $\mathbb{C}$.

Let 
$I=\{i_{1},\cdots,i_{k}\}\subset \underline{d}$, $\pi\in \textbf{S}_{d}$,
$\pi(I)=\{\pi(i_{1}),\cdots,\pi(i_{k})\}$, it is easy to check that

$$
M_{I}\sim \pi(M)_{\pi^{-1}(I)},\,
M\diagup M_{I}\thicksim \pi(M)\diagup \pi(M)_{\pi^{-1}(I)}.
$$
Above facts are valid for the case of partitions obviously.
For sub-coproduct (4.4) we define the action of $\pi\in \textbf{S}_{d}$
in the following way:

\begin{equation}
\pi(\triangle_{(I_{i})}M)=\triangle_{(\pi^{-1}(I_{i}))}\pi(M)
=\pi(M)_{(\pi^{-1}(I_{i}))}\otimes \pi(M)\diagup (\pi^{-1}(I_{i})).
\end{equation}
Now we can define the action of $\pi\in \textbf{S}_{d}$
on coproduct, here we discuss (4.7) only, in the following way

\begin{equation}
\pi(\triangle M)=\pi(M)\otimes\emptyset+\emptyset\otimes\pi(M)
+\sum\pi(\triangle_{(I_{i})}M).
\end{equation}
With the help of (4.13) we can extend the coproduct to the
case of equivalent class naturally.

\begin{equation}
\triangle\{M\}=\{M\}\otimes \emptyset+\emptyset\otimes\{M\}+
\sum\{M_{(I_{i})}\}\otimes\{M\diagup (I_{i})\}.
\end{equation}

Now we discuss the multiplication for equivalent classes
mentioned above with the help of the direct sum of the
matrices.

\begin{definition}
The multiplication $\odot$ is a map

$$
\odot:(gl(k,\mathbb{C})_{0})_{\thicksim}\times
(gl(l,\mathbb{C})_{0})_{\thicksim}\longrightarrow
(gl(k+l,\mathbb{C})_{0})_{\thicksim},
$$
\begin{equation}
\{M\}\odot\{N\}=\{\mathbf{diag}(M,N)\},
\,M\in gl(k,\mathbb{C})_{0},
N\in gl(l,\mathbb{C})_{0}.
\end{equation}
\end{definition}

Because of

$$
\{\mathbf{diag}(\pi_{1}(M),\pi_{2}(N))|
\pi_{1}\in\mathbb{S}_{k},\pi_{2}\in\mathbb{S}_{l}\}
\subset\{\pi(\mathbf{diag}(M,N))
|\,\pi\in\mathbb{S}_{k+l}\},
$$
the multiplication (4.15) is well defined.
The multiplication $\odot$ is commutative obviously.
Moreover, let $M_{i}\in gl(k_{i},\mathbb{C})_{0}$
($i=1,2,3$), similar to the previous discuussion
we can see that

$$
(\{M_{1}\}\odot\{M_{2}\})\odot\{M_{3}\}=
\{M_{1}\}\odot(\{M_{2}\}\odot\{M_{3}\}).
$$
The multiplication $\odot$ can be extended to
the situation of tensor. Let 
$M_{i}\in gl(k_{i},\mathbb{C})_{0}$,
$N_{i}\in gl(l_{i},\mathbb{C})_{0}$
($i=1,\cdots,n$), we now define the multiplication
of the tensors to be

$$
\begin{array}{c}
(\{M_{1}\}\otimes\cdots\otimes\{M_{n}\})\odot
(\{N_{1}\}\otimes\cdots\otimes\{N_{n}\}) \\
=(\{M_{1}\}\odot\{N_{1}\})\otimes\cdots\otimes
(\{M_{n}\}\odot\{N_{n}\}).
\end{array}
$$
Let

\begin{equation}
\mathcal{H}_{gl}=\mathbf{Span}_{\mathbb{C}}
(\{\odot_{i=1}^{k}\{M_{i}\}
|M_{i}\in gl(d_{i},\mathbb{C})_{0},
d_{i},k\in\mathbb{N},1\leq i\leq k\}),
\end{equation}
then the coproduct can be naturally extended
to $\mathcal{H}_{gl}$. For
$M_{i}\in gl(k_{i},\mathbb{C})_{0}$ ($i=1,2$),
we define

$$
\triangle(\{M_{1}\}\odot\{M_{2}\})=
\triangle\{M_{1}\}\odot\triangle\{M_{2}\}.
$$

The unit $u$ and counit $\eta$ on $\mathcal{H}_{gl}$ 
are defined as follows:

\begin{equation}
u:c\mapsto 0
\end{equation}
 
\begin{equation}
\eta:0\mapsto 1,\eta:\{M\}\mapsto 0,
\{M\}\not=\{0\}.
\end{equation}
It is obvious that $\mathcal{H}_{gl}$ is a  bialgebra.
By theorem 4.1 we know that $\mathcal{H}_{gl}$
is conilpotent, thus, it is a Hopf algebra.

\section{Star product}

\subsection{Notations}

Following the idea in \cite{8} we construct the star product
of scalar fields starting from a specific class of Kontsevich's graphs,
called the Bernoulli graphs.
At first we recall some notations about Kontsevich's graphs.

\begin{definition}
(\textbf{Admissible graphs}, V.Kontsevich \cite{7}p.22) 
Admissible graph $G_{n,m}$ is an oriented graph with labels such that
\begin{itemize}
  \item The set of vertices $V_{\Gamma}$ is $\{1,\cdots,n\}\sqcup\{\bar{1},\cdots,\bar{m}\}$ 
  where $n,m\in \mathbb{Z}_{\geqslant 0},\, 2m+n-2\geqslant 0$; vertices from $\{1,\cdots,n\}$
  are called vertices of the first type, vertices from $\{\bar{1},\cdots,\bar{m}\}$ are called vertices
  of the second type.
  \item Every edge $e=(v_{1},v_{2})\in E_{\Gamma}$ stars at a vertex of the first type,
  $v_{1}\in \{1,\cdots,n\}$.
  \item There are no loops, i.e. no edges of the type $(v,v)$.
  \item For every vertex $k\in\{1,\cdots,n\}$ of the first type, the set of edges
  $$Star(k)=\{(v_{1},v_{2})\in E_{\Gamma}|v_{1}=k\}$$
  starting from $k$, is labeled by symbols $\{e_{k}^{1},\cdots,e_{k}^{\#Star(k)}\}$.
\end{itemize}
\end{definition}

\begin{definition}
 (see  L. M. Ionescu\cite{5} and V. Kathotia\cite{6})
 If $\Gamma_{1}\in G_{n,m}$, $\Gamma_{2}\in G_{n^{\prime},m}$,
 we define the product $\Gamma_{1}\Gamma_{2}\in G_{n+n^{\prime},\,m}$ as the graph obtained
 from disjoint union of two graphs by identification of the vertices of the second type. 
\end{definition}

\begin{definition}
An adjacency matrix
is a symmetric matrix with non-negative integer entries
and zeros along the main diagonal. 
We call $\sum_{ij}m_{ij}$ the
degree of $M$ denoted by $degM$.  The set of adjacency matrices of
$d\times d$ is denoted by $M_{adj}(d,\mathbb{N})$.
\end{definition}

\begin{definition}
For a $m\times m$ adjacency matrix $M$ with $degM=k$, a Bernoulli graph corresponding
to $M$ is $b_{M}=\prod_{i<j}b^{m_{ij}}_{ij}\in G_{k,m}$, where $b_{ij}=\iota_{ij}b_{1}$,
$b_{1}\in G_{1,2}$ is a 
Kontsevich graph with one vertex of the first type endowed with two edges ending at two
vertices of the second type respectively, and
$\iota_{ij}:G_{1,2}\to G_{1,m}$ is an embedding with position $\{i,j\}\,(i<j)$.
\end{definition}

\begin{remark}
$\\$
\begin{itemize}
  \item In definition 4.4 the embedding $\iota_{ij}$ was introduced in ZhouMai\cite{8}. The
basic Bernoulli graph $b_{1}$ is referred to
L.M.Ionescu\cite{5} and V.Kathotia\cite{6}.
Because $b_{ij}$ represents a graph with $m$ vertices of the second type, one vertex of the
first type and two edges starting from unique vertex of the first type, we can think $b_{ij}$
is assigned to this vertex of the first type and two edges ending at $i-$th and $j-$th
vertices of the second type respectively. If $M=0$, $b_{M}=\emptyset$.

\item The expression $b_{M}=\prod_{i<j}b^{m_{ij}}_{ij}$ means that we do not distinguish
any two vertices of the first type connect with same two vertices of the second type. 
That can not lead to confusion (see Zhoumai\cite{8}).
\item $\{b_{ij}\}_{1\leqslant i<j\leqslant m}$ generats a monoid
  
  $$
  B_{m}=\{b_{M}|M\in M_{adj}(m,\mathbb{N})\},
  $$
moreover, generats the free algebra over $C$, $\mathbf{Span}_{\mathbb{C}}(B_{m})$
  (see ZhouMai\cite{8}). We call $b_{ij}$ the basic Bernoulli graph.
  \item For an adjacency matrix $M$, graph $b_{M}$ corresponds to a Feynman diagram
  (see ZhouMai\cite{8}).
\end{itemize}
\end{remark}

\subsection{Star product of scalar fields}

In this subsection we briefly recall the contents of \cite{8} 
(the datails refer to \cite{8}).
Firstly we discuss star product at level of functions not composing with fields.
We recall Kontsevich's rule,
here we modify Kontsevich's rule slightly, the poly-vector fields and poly-differential operators
are taken to be tensor forms instead of ordinary ones. Let $\mathcal{A}$
be an algebra generated by $\{K_{ij}|i,j\in \mathbb{Z}^{+}\}$, here $K_{ij}$
are abstract elements playing the role of coefficients of Poisson bi-vector
field which is 

$$
\mathcal{K}=\sum\limits_{i<j}\mathcal{K}_{ij},
$$
where $\mathcal{K}_{ij}=K_{ij}\partial z_{i}\otimes\partial z_{j}$.
Recalling the contents about Kontsevich's rule in
\cite{8}, now we have:

\textbf{Kontsevich's rule:}
\begin{itemize}
  \item $i-$th ($1\leqslant i\leqslant m$) vertex of the second type is assigned
  to a smooth function $f_{i}(z_{i})\in\mathbf{C}^{\infty}(\mathbb{R})$;
  \item For a basic Bernoulli graph $b_{ij}$, two edges starting at the unique vertex of the first type in
  $b_{ij}$ are assigned to $\partial z_{i}$ and $\partial z_{j}$
  according to that the end point is $i-$th or $j-$th vertex of the second type.
  The unique vertex of first type is assigned to "coefficient" $K_{ij}$.
  Thus $b_{ij}$ is assigned to a bi-differential operator
    $K_{ij}\partial z_{i}\otimes\partial z_{j}$
  denoted by
  
\begin{equation}
\mathcal{U}(b_{ij},\mathcal{K})=\mathcal{K}_{ij}.
\end{equation}
  \item For the general Bernoulli graphs, for example, 
  $b_{i_{1}j_{1}}\cdots b_{i_{k}j_{k}}$, it is
  assigned to a poly-differential operator
  $$
  \mathcal{U}(b_{i_{1}j_{1}}\cdots b_{i_{k}j_{k}},\mathcal{K})=
  \mathcal{K}_{i_{1}j_{1}}\cdots\mathcal{K}_{i_{k}j_{k}}.
  $$
\end{itemize}

With the help of the notation of adjacency matrices. we know that

$$
\mathcal{U}(b_{M},\mathcal{K})=\mathcal{K}_{M}
=K_{M}\partial_{z_{1}}^{\alpha_{1}}\cdots
\partial_{z_{m}}^{\alpha_{m}},\,\,
M\in M_{adj}(m,\mathbb{N}),
$$
where $\mathcal{K}_{M}=\sum_{i<j}\mathcal{K}_{ij}^{m_{ij}}$,
$K_{M}=\sum_{i<j}K_{ij}^{m_{ij}}$, and

$$
\alpha_{i}=\sum\limits_{j}m_{ij},\,i=1,\cdots,m.
$$
Furthermore, we have

$$
\mathcal{U}(b_{M_{1}}b_{M_{2}},\mathcal{K})=
\mathcal{U}(b_{M_{1}},\mathcal{K})
\mathcal{U}(b_{M_{2}},\mathcal{K}).
$$
Therefore, we get a homomorphism:

$$
\mathcal{U}(\cdot,\mathcal{K}):
\mathbf{Span}_{\mathbb{C}}(B_{m})\longrightarrow
\left\{
\begin{array}{c}
set\,\,of\,\,the\,\,poly-differential  \\
operators\,\,with\,\,coefficients\,\,in\,\,\mathcal{A}
\end{array}
\right\}
$$

Due to Kontsevich's rule, with some slight modification here,
the star product can be expressed by means of Bernoulli graphs as following:

\begin{equation}
\begin{array}{cc}
   \underbrace{\star\cdots\star}   & =\exp\{\hbar(\sum\limits_{1\leqslant i<j\leqslant m}b_{ij})\}.    \\
      m-times&   
\end{array}
\end{equation}
More precisely, the star product with tensor form can be defined to be

\begin{equation}
(f_{1}(z_{1})\star\cdots\star f_{m}(z_{m}))_{\otimes}=\mathcal{U}
(\exp\{\hbar(\sum\limits_{1\leqslant i<j\leqslant m}b_{ij}),\mathcal{K}\})
(f_{1}(z_{1})\otimes\cdots\otimes f_{m}(z_{m})),
\end{equation}
Where $f_{i}(\cdot)\in C^{\infty}(\mathbb{R}),\,i=1,\cdots,m$.

\begin{remark}
If we consider more general star product with tenser form
$$
(f_{1}(\xi_{1})\otimes\cdots\otimes f_{k}(\xi_{k}))
\star(f_{k+1}(\xi_{k+1})\otimes\cdots\otimes f_{k+l}(\xi_{k+l}))
$$
from viewpoint of Kontsevich graphs,
where $f_{i}\in C^{\infty}(\mathbb{R})$, 
we need to make additional restriction on the graphs
of Bernoulli type. For a graph of Bernoulli type $b_{ij}\in G_{1,m}$
($m=k+l$),
the set of vertices of the second type is divided into
left part and right part. We label the left part by
$\{1,\cdots, k\}$ and $i-$th vertex in left part is assigned to
function $f_{i}(\xi_{i})$.
Similarly, we label right part by $\{k+1,\cdots,k+l\}$,
and $(j+k)-$th vertex in right part is assigned to the
function $f_{k+j}(\xi_{k+j})$. The edges starting at vertex
of the first type in $b_{ij}$ end at $i-$th vertex of the second type 
in left part and $(k+j)-$th vertex of the second type in right part
respectively. Therefore $b_{i,k+j}$ is assigned to
 $\mathcal{K}_{i,k+j}\frac{\partial}{\partial\xi_{i}}\otimes\frac{\partial}{\partial\xi_{k+j}}$.
Then we have

\begin{equation}
\begin{array}{c}
   (f_{1}(\xi_{1})\otimes\cdots\otimes f_{k}(\xi_{k}))
\star(f_{1}(\xi_{k+1})\otimes\cdots\otimes f_{k+l}(\xi_{k+l}))       \\
=\mathcal{U}(\exp\{\hbar(\sum\limits_{1\leqslant i\leqslant k,\,1\leqslant j\leqslant l}b_{i,k+j})\}) 
(f_{1}(\xi_{1})\otimes\cdots\otimes f_{k+l}(\xi_{k+l})).        
\end{array}
\end{equation}
\end{remark}

It is obvious that the star product (5.3) is associative. Let
$I_{1},\cdots,I_{k}$ be a partition of $\{1,\cdots,m\}$ satisfying
$p<q$ if $p\in I_{i},\,q\in I_{j}$ and $i<j$, 
it is easy to check that

$$
f_{I_{1},\otimes}\star\cdots\star f_{I_{k},\otimes}
=(f_{1}(z_{1})\star\cdots\star f_{m}(z_{m}))_{\otimes},
$$
where
$f_{I_{j},\otimes}=(f_{i_{1}}(z_{i_{1}})\star
\cdots\star f_{i_{j}}(z_{i_{j}}))_{\otimes}$
and $I_{j}=\{i_{1},\cdots,i_{j}\}$($1\leqslant j\leqslant k$).
  
The explicit expansion of the star product (5.3) 
is given by the following formula:

\begin{equation}
\begin{array}{c}
(f_{1}(z_{1})\star\cdots\star f_{m}(z_{m}))_{\otimes}   \\
=\sum\limits_{M\in M_{adj}(m,\mathbb{N})}
\frac{\hbar^{degM}}{M!}K_{M}\partial_{\otimes}^{\alpha_{M}}
(f_{1}(z_{1})\otimes\cdots\otimes f_{m}(z_{m})),
\end{array}
\end{equation} 
where $M!=\prod_{1\leqslant i<j\leqslant m}m_{ij}!$,
$K_{M}=\prod_{1\leqslant i<j\leqslant m}K_{ij}^{m_{ij}}$,
$\partial_{\otimes}^{\alpha_{M}}=\partial_{1}^{\alpha_{M,1}}
\otimes\cdots\otimes\partial_{m}^{\alpha_{M,m}}$,
$\alpha_{M}=(\alpha_{M,1},\cdots,\alpha_{M,m})$,
$\alpha_{M,i}=\sum_{j}m_{ij},\,i=1,\cdots, m$.

The formula (5.5) can be regarded as generalized 
Wick expansion. Particularly, if we take $f_{i}(z_{i})=\frac{z_{i}^{n_{i}}}{n_{i}!}$,
($n_{i}\in \mathbb{N},\,i=1,\cdots, m$) we have

\begin{equation}
\begin{array}{c}
     (\frac{z_{1}^{n_{1}}}{n_{1}!}\star\cdots\star \frac{z_{m}^{n_{m}}}{n_{m}!})_{\otimes} \\
   =  \sum\limits_{M\in M_{adj}(m,\mathbb{N})}\hbar^{degM}
    \frac{K_{M}}{M!} 
    \frac{z_{1}^{n_{1}-\alpha_{M,1}}}{(n_{1}-\alpha_{M,1})!}
    \otimes\cdots\otimes \frac{z_{m}^{n_{m}-\alpha_{M,m}}}{(n_{m}-\alpha_{M,m})!}.
\end{array}
\end{equation}

We define the star product in ordinary sense to be
\begin{equation}
f_{1}(z_{1})\star\cdots\star f_{m}(z_{m})
=\textbf{m}\circ (f_{1}(z_{1})\star\cdots\star f_{m}(z_{m}))_{\otimes},
\end{equation}
where $\textbf{m}$ means taking multiplication
of point-wise for functions.
All of previous discussions are still available, but
the tenser will be replaced by point-wise multiplication
of functions.

We can introduce the notation of expectation of star product of
monomials, as what has been done in \cite{8}, which 
will be useful for discussion below.
\begin{definition}
We say a integer sequence $(n_{1},\cdots, n_{m})$ is admissible
if there is an adjacency matrix $M=(m_{ij})_{m\times m}$ such that
\begin{equation}
n_{i}=\sum_{j}m_{ij},i=1,\cdots, m.
\end{equation}
We say such an adjacency matrix $M$ satisfying (5.8) subordinates
the admissible integer sequence as above. We denote it by
$M\prec (n_{1},\cdots, n_{m})$.
\end{definition}

We now define the expectation of star product monomial as following:
\begin{definition}
Let $\frac{z_{1}^{n_{1}}}{n_{1}!}\star\cdots\star \frac{z_{m}^{n_{m}}}{n_{m}!}$ be a star product
monomial, its expectation denoted by 
$<\frac{z_{1}^{n_{1}}}{n_{1}!}\star\cdots\star \frac{z_{m}^{n_{m}}}{n_{m}!}>$ 
is defined to be
\begin{itemize}
  \item When $(n_{1},\cdots,n_{m})$ is an admissible integer sequence,
\begin{equation}
  <\frac{z_{1}^{n_{1}}}{n_{1}!}\star\cdots\star \frac{z_{m}^{n_{m}}}{n_{m}!}>=
 \sum\limits_{M\prec (n_{1},\cdots,n_{m})}\frac{K_{M}}{M!}.
\end{equation}
  \item  
  $$<\frac{z_{1}^{n_{1}}}{n_{1}!}\star\cdots\star \frac{z_{m}^{n_{m}}}{n_{m}!}>=0$$
  for otherwiae.
  \end{itemize}
\end{definition}
About the expectation of star product monomial we have the following
theorem:
\begin{theorem}
An integer sequence $(n_{1},\cdots,n_{m})$ is admissible if and only if
we have
\begin{equation}
z_{1}^{n_{1}}\star\cdots\star z_{m}^{n_{m}}=\hbar^{k}<z_{1}^{n_{1}}\star\cdots\star z_{m}^{n_{m}}>
+terms\, \,with\,\,lower\,\,powe\,r\,than\,\,\hbar^{k},
\end{equation}
where $2k=n_{1}+\cdots+n_{m}$.
\end{theorem}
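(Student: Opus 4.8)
The plan is to unwind both sides of (4.12) using the explicit Wick-type expansion (4.7) and the definition of the expectation in Definition 6.6. First I would fix an integer sequence $(n_1,\dots,n_m)$ with $n_1+\cdots+n_m=2k$ and apply formula (4.7) (taking $f_i(z_i)=z_i^{n_i}$, i.e. absorbing the factorials) to the product $z_1^{n_1}\star\cdots\star z_m^{n_m}$ in the ordinary (point-wise) sense as in (4.9). Each term of that expansion is indexed by a multi-index $\alpha$ with $|\alpha|=2l$ for some $l\ge 0$ and an adjacency matrix $M$ with $M\prec\alpha$; after applying $\mathbf{m}$ it contributes a monomial $z_1^{\beta_1}\cdots z_m^{\beta_m}$ with $\beta_i=n_i-\alpha_i$, carrying a factor $\hbar^{l}\mathcal{K}_M/M!$ times a combinatorial coefficient from differentiating the powers. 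The key observation is that the \emph{top power of $\hbar$} that can occur is $\hbar^{k}$, and it occurs precisely when $\alpha_i=n_i$ for every $i$, forcing all $\beta_i=0$ and hence the accompanying monomial to be the constant $1$.

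The next step is the bookkeeping: when $\alpha=(n_1,\dots,n_m)$ the differentiation coefficient $\prod_i \frac{n_i!}{\beta_i!}=\prod_i n_i!$ combines with the $1/M!$ in (4.7) to give exactly $\sum_{M\prec(n_1,\dots,n_m)}\mathcal{K}_M\prod_i n_i!/M!$, which, after recalling that (4.7) was stated with the normalisation $z^{n}/n!$, matches $\sum_{M\prec(n_1,\dots,n_m)}\mathcal{K}_M/M!$ — precisely the right-hand side of (6.11), i.e. $\langle z_1^{n_1}\star\cdots\star z_m^{n_m}\rangle$. Thus the coefficient of $\hbar^k$ in $z_1^{n_1}\star\cdots\star z_m^{n_m}$ equals this expectation and all other terms carry strictly lower powers of $\hbar$, which is the asserted identity (4.12). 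For the ``only if'' direction I would argue contrapositively: if $(n_1,\dots,n_m)$ is \emph{not} admissible, then there is no adjacency matrix $M$ with $M\prec(n_1,\dots,n_m)$, so the would-be $\hbar^k$ term is empty; one then checks that in fact $z_1^{n_1}\star\cdots\star z_m^{n_m}$ has $\hbar$-degree strictly below $k$ (equivalently its expectation is $0$ by Definition 6.6), so an identity of the form (4.12) with a nonzero right-hand side cannot hold — establishing that (4.12) forces admissibility.

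The main obstacle I anticipate is purely the combinatorial matching of coefficients: one must be careful that the symmetry factors $1/M!$, the multinomial factors coming from $\partial^{m_{ij}}/\partial z_i^{m_{ij}}$ acting on $z_i^{n_i}$, and the normalisation conventions ($z^n$ versus $z^n/n!$, and the off-diagonal-only, symmetric structure of adjacency matrices where each unordered pair $\{i,j\}$ is counted once) all conspire correctly so that the $\hbar^k$-coefficient is \emph{exactly} $\sum_{M\prec(n_1,\dots,n_m)}\mathcal{K}_M/M!$ with no stray numerical factor. A secondary point to handle cleanly is that the constraint $|\alpha|=2l$ together with $\alpha_i\le n_i$ and $\sum n_i=2k$ indeed makes $l=k$ the unique maximal value and pins down $\alpha$ uniquely as $(n_1,\dots,n_m)$; this uses that an adjacency matrix $M$ with $M\prec\alpha$ has $2\deg M=\sum_i\alpha_i=|\alpha|$, so $|\alpha|$ is automatically even and $\deg M=l$, and the exponent of $\hbar$ in (4.7) is this $\deg M$. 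Everything else is a routine extraction of the leading term in an $\hbar$-graded expansion.
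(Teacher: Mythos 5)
Your argument is correct, and it is the natural one: the paper itself does not prove this theorem but defers to Proposition 4.3 of reference [8], so there is no in-paper proof to compare against; your leading-order extraction from the Wick expansion (6.8) is exactly the kind of verification that reference performs. The key chain you identify is sound: in the expansion each term at order $\hbar^{l}$ is indexed by $\alpha$ with $|\alpha|=2l$ and $\alpha_{i}\leqslant n_{i}$, so $l\leqslant k$ with equality forcing $\alpha_{i}=n_{i}$ and $\beta_{i}=0$, and such a term exists precisely when some adjacency matrix $M$ satisfies $M\prec(n_{1},\cdots,n_{m})$, i.e.\ precisely when the sequence is admissible; the factorial bookkeeping ($\partial^{n_i}z_i^{n_i}=n_i!$ against the $z^{n}/n!$ normalisation of (6.8) and the linearity of the expectation) then identifies the $\hbar^{k}$ coefficient with $\langle z_{1}^{n_{1}}\star\cdots\star z_{m}^{n_{m}}\rangle$. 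The only point worth tightening is the ``only if'' direction: as literally written, a non-admissible sequence makes the expectation zero by Definition 6.6, so the displayed identity becomes the vacuously true statement that the product consists of lower-order terms. Your argument implicitly (and correctly) reads the identity as asserting a genuine, nonzero leading term of exact order $\hbar^{k}$ --- nonzero because the $\mathcal{K}_{ij}$ are free generators of $\mathcal{A}$, so a nonempty sum $\sum_{M}\mathcal{K}_{M}/M!$ cannot vanish --- and you should state that reading explicitly, since the equivalence fails under the purely literal reading. You should also note separately the degenerate case where $n_{1}+\cdots+n_{m}$ is odd, so that no integer $k$ with $2k=\sum n_{i}$ exists and the sequence is automatically non-admissible.
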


The proof of theorem 5.1 refers to proposition 4.3 
and Theorem 4.2 in \cite{8}.
With the help of expectation Wick expansion can be expressed in more
classical way,
\begin{equation}
\frac{z_{1}^{n_{1}}}{n_{1}!}\star\cdots\star \frac{z_{m}^{n_{m}}}{n_{m}!}=
\sum\limits_{k\geqslant 0}\frac{\hbar^{k}}{k!}
\sum\limits_{M\prec \alpha,|\alpha|=2k, \alpha_{i}+\beta_{i}=n_{i}}
<\frac{z_{1}^{\alpha_{1}}}{\alpha_{1}!}\star\cdots\star \frac{z_{m}^{\alpha_{m}}}{\alpha_{m}!}>
\frac{z_{1}^{\beta_{1}}}{\beta_{1}!}\cdots \frac{z_{m}^{\beta_{m}}}{\beta_{m}!},
\end{equation}
where $\alpha=(\alpha_{1},\cdots,\alpha_{m})$ expressed by means of notion 
of multiple index.
The integer sequence $(\alpha_{1},\cdots,\alpha_{m})$ in (5.11) are
admissible naturally.

We now turn to the star product at level of fields.
Here we restrict us to consider only the case of 
point-wise multiplication. The case of tenser form is
similar. The star product at level of fields is defined to be
\begin{equation}
f_{1}(\varphi(x_{1}))\star\cdots\star f_{m}(\varphi(x_{m}))=
(f_{1}(z_{1})\star\cdots\star f_{m}(z_{m}))\mid_{z_{i}=\varphi(x_{i})}.
\end{equation}
Where $\varphi(\cdot)$ is real scalar field. Comparing with $T-$product
in quantum field theory we can see that the star product (5.12) is very
similar to $T-$product. For example, the commutativity corresponds
to symmetrical property of $T-$product. The associativity of the star product
corresponds to the factorization of $T-$product. Furthermore, the expectation
of the star product monomials in the case of scalar field can be defined as same 
way as above. We have
\begin{equation}
<\varphi^{n_{1}}(x_{1})\star\cdots\star \varphi^{n_{m}}(x_{m})>=
<z_{1}^{n_{1}}\star\cdots\star z_{m}^{n_{m}}>.
\end{equation}
Moreover, by definition (5.12), we have also Wick expansion as following:
\begin{equation}
\begin{array}{c}
     (\varphi^{n_{1}}(x_{1})/n_{1}!)\star\cdots\star (\varphi^{n_{m}}(x_{m})/n_{m}!) \\
   =  \sum\limits_{k\geqslant 0}\frac{\hbar^{k}}{k!}
   \sum\limits_{M\prec \alpha,|\alpha|=2k, \alpha_{i}+\beta_{i}=n_{i}}
   <\frac{\varphi^{\alpha_{1}}(x_{1})}{\alpha_{1}!}\star\cdots\star \frac{\varphi^{\alpha_{m}}(x_{m})}{\alpha_{m}!}>\\
   (\varphi^{\beta_{1}}(x_{1})/\beta_{1}!)\cdots (\varphi^{\beta_{m}}(x_{m})/\beta_{m}!).
\end{array}
\end{equation}

Similarly, we can define the star product
$$
(f_{1}(\varphi(x_{1}))\cdots f_{k}(\varphi(x_{k})))\star 
(g_{1}(\varphi(y_{1}))\cdots g_{l}(\varphi(y_{l})))
$$
by means of the formula (5.4).

\subsection{Quotient or collapsing of star product}

In this subsection we will discuss the quotient or
collapsing of the star product which is compatible
with the similar notations of adjacency matrices
and Feynman diagrams.

\paragraph{Adjacency matrices and Feynman diagrams:}

The quotient
in the situation of the star product will involve
four objects, as shown in the following diagram:

$$
\begin{array}{ccc}
adjacency\,\,matrx & \longleftrightarrow
& Feynman\,\,amplitude \\
\updownarrow &  &  \updownarrow \\
Bernoulli\,\,graph & \longleftrightarrow &
Feynman\,\,diagram.
\end{array}
$$
Recalling the contents of \cite{8}, there is an
one-one correspondence between Bernoulli graphs and 
Feynman diagrams. Precisely, each vertex of the 
second type (or a vertex of the first type) of 
a Bernoulli graph corresponds to
a vertex (or an internal line) of a Feynman diagram.
Thus, in the latter discussion, a Bernoulli graph will
mean a Feynman diagram, and vice-versa.

As preparation we talk about some notations firstly.
Let $m$ be a positive integer, $\underline{m}=\{1,\cdots,m\}$.
We consider the power
set $\mathcal{P}(\underline{m})$, and
the elements of the power set are labled by
$\{1^{\ast},\cdots, 2^{m,\ast}\}$, such that
each subset of $\underline{m}$ assigns a
number $i^{\ast}$ ($1\leq i\leq 2^{m}$).
Recalling the previous discussions about the quotient,
the elements in $\mathcal{P}(\underline{m})$ indicate
the ideal part arising from the quotient,
thus we call index $i^{\ast}$ ($1\leq i\leq 2^{m}$)
the ideal index. Due to the consideration
of the star product, it is necessary to introduce
the variables and coefficients of Poisson bi-vector
corresponding to ideal indices. For a given $I\subset\underline{m}$
indicated by $i^{\ast}$, let $I$ corresponding to 
a variable $\zeta_{I}$ denoted by $\zeta_{i}$
also. Subsequently, to describe the quotient
of the star product, let $\mathcal{A}$ be an algebra 
over $\mathbb{C}$ (or $\mathbb{R}$) with generators
$\{K_{ij}|1\leq i,j\leq m\}\cup\{K_{i^{\ast}j}|1\leq i\leq 2^{m},
1\leq j\leq m \}\cup\{K_{ij^{\ast}}|1\leq i\leq m,
1\leq j\leq 2^{m}\}$.

Noting that all of discussions in section 3 are available
for $M_{adj}(m,\mathbb{N})$, actually, let 
$M\in M_{adj}(m,\mathbb{N})$,
for $I\subset \underline{m}$,
($|I|=k,1<k<m$), it is obvious that
$$
M_{I}\in M_{adj}(k,\mathbb{N}),\,
M\diagup M_{I}\in M_{adj}(m-k+1,\mathbb{N}).
$$
$M_{adj}(m,\mathbb{N})$ is a monoid under the
addition of the matrices. By the same reason
in section 4.1, we know that, for a given subset
$I\subset \underline{m}$, the quotient

$$
\cdot\diagup I:M_{adj}(m,\mathbb{N})
\longrightarrow M_{adj}(m-k+1,\mathbb{N})
$$
is a homomorphism, and

$$
deg(M\diagup M_{I})=degM-degM_{I},
$$
where $degM=\sum_{1\leq i<j\leq m}m_{ij}$.
Therefore, similar to Hopf algebra $\mathcal{H}_{gl}$,
we can construct a Hopf algebra related to adjacency
matrices, it is enough for us to take $M_{adj}(m,\mathbb{N})$
instead of $gl(m,\mathbb{C})_{0}$ everywhere.
We denote this Hopf algebra by $\mathcal{H}_{adj}$.

Now we turn to discuss the Feynman diagrams. 
Here we restrict
us to discuss the subgraphs of Feynman diagrams.
A subdiagram of Feynman diagram is subset of
vertices and lines in Feynman diagram, the lines in subdiagram
join the vertices in subdiagram. A subgraph is
a subdiagram but the line joining tow vertices
in this subdiagram should belong to the subdiagram.
Thus a subgraph determined by vertices solely.
We discuss the
problems starting at the Bernoulli graphs.
Let $M\in M_{adj}(m,\mathbb{N})$, then 
$b_{M}\in B_{k,m}$, where $k=degM$.
We know that $b_{M}$ can be regarded as a Feynman
diagram with $m$ vertices and $k$ internal lines. 
A subgraph can be identified
with a subset $I$ in $\underline{m}=\{1,\cdots,m\}$
($|I|=k,\,2\leq k< m$), therefore,
$b_{M_{I}}$ just be this subgraph. This subgraph
gives a factorisation of $b_{M}$:
$$
b_{M}=b_{M_{I}}(\prod\limits_{i\in I,\,j\in I^{c}}b_{ij}^{m_{ij}})b_{M_{I^{c}}},
$$
where $I^{c}=\underline{m}\setminus I$.
We make quotient of $b_{M}$ by $b_{M_{I}}$ in the following way:
\begin{itemize}
  \item Dropping the factor $b_{M_{I}}$ and collapsing
  the subset $I$ to an "ideal vertex" of the second type, 
  denoted by $\ast$ (which is also called
  a "generalised point" by Bogoliubov), we get the quotient
  \begin{equation}
  b_{M}\diagup b_{M_{I}}=\prod\limits_{j\in I^{c}}
  b_{\ast,j}^{m_{\ast,j}}b_{M_{I^{c}}},
  \end{equation}
  where $m_{\ast,j}=\sum_{i\in I}m_{ij}$. 
  \item The formula (5.15) shows that $b_{M}\diagup b_{M_{I}}$
  is nothing else but $b_{M\diagup M_{I}}$. 
  Thus, if $b_{M}\in B_{l_{1},m}$
  and $b_{M_{I}}\in B_{l_{2},k}$, then 
  $b_{M}\diagup b_{M_{I}}\in B_{l_{1}-l_{2},m-k+1}$.
  Recalling $B_{m}=(\bigcup_{l}B_{l,m})\cup\{\emptyset\}$
  is a monoid, then, for a given subset $I\subset\underline{m}$
  as mentioned above, the quotient
  
  $$
  \cdot\diagup I:B_{m}\longrightarrow B_{m-k+1}
  $$
  is a homomorphism.
\end{itemize}

Generally, for a partition $(I_{i})$ in $\underline{m}$ ($|I_{i}|>1$) ,
we have 
$$
b_{M}\diagup b_{M_{(I_{i})}}=b_{M\diagup (I_{i})},
$$
where subset $I_{i}$ collapses to $i-$th "ideal vertex"
corresponding to $i-$th "ideal index" of rows or columns
of $M\diagup M_{(I_{i})}$, and the "ideal index" 
of rows or columns of $M\diagup M_{I_{i}}$
corresponds to the index of subset $I_{i}$.

\begin{remark}
From the previous discussion we know that
$\mathcal{H}_{adj}$ just be the Hopf algebra of
Feynman diagrams, denoted by $\mathcal{H}_{Fey}$.
The multiplication in $\mathcal{H}_{Fey}$
is disjoint union of two diagrams and addition is 
formal one. The coproduct for a Feynman diagram $\Gamma$ 
is defined to be
\begin{equation}
\triangle \Gamma=\Gamma\otimes\emptyset+\emptyset\otimes \Gamma
+\sum\limits_{\gamma\subset \Gamma}\gamma\otimes (\Gamma\diagup\gamma).
\end{equation} 
The sum on right side of (5.16) is over all non-trivial subgraphs in $\Gamma$,
here we do not make restriction demanding
the subgraphs are sub-divergent.
\end{remark}

The quotient of Feynman amplitudes should satisfy

$$
\mathcal{U}(b_{M\diagup M_{I}},\mathcal{K})
=\mathcal{K}_{M}\diagup\mathcal{K}_{M_{I}}.
$$
Noting previous
discussion about the quotient of Feynman diagrams,
we have

\begin{equation}
\mathcal{K}_{M}\diagup\mathcal{K}_{M_{I}}
=\mathcal{K}_{M\diagup M_{I}}
=\prod\limits_{j}\mathcal{K}_{\ast,j}^{m_{\ast,j}}
\mathcal{K}_{M_{I^{c}}},
\end{equation}
where $\mathcal{K}_{ij}=K_{ij}\partial_{i}\partial_{j}$,
$\mathcal{K}_{\ast,j}=K_{\ast,j}
\partial_{\zeta_{\ast}}\partial_{j}$.

\paragraph{Quotient or collapsing concerning star product:}

Now we will generalise the notations of quotient and collapsing
to the case of star product. For convenience we discuss the 
case of star product at level of functions
Same as subsection 5.2 we assign $i-$th
vertex to a smooth function $f_{i}(\cdot)$ and a variable $z_{i}$
($1\leqslant i\leqslant m$).
In addition, for a subset $I$ of $\underline{m}$, 
we assign $I$ to variable $\zeta_{I}$,
where $\zeta_{I}$ can be also denoted by $\zeta_{i^{\ast}}$ 
if $I$ is indicated by index $i^{\ast}$.

A subset $I=\{i_{1},\cdots,i_{k}\}\subset\underline{m}$ 
assigns two functions which are:

$$
f_{I,\star}(\textbf{z}_{I})=f_{i_{1}}(z_{i_{1}})
\star\cdots\star f_{i_{k}}(z_{i_{k}}),
$$
and

$$
f_{I}(\zeta_{I})=f_{i_{1}}(\zeta_{I})
\cdots f_{i_{k}}(\zeta_{I}),
$$
where $\textbf{z}_{I}=(z_{i_{1}},\cdots,z_{i_{k}})$.
Now we explain the difference between 
$f_{I,\star}(\textbf{z}_{I})$ and
$f_{I}(\zeta_{I})$ more clearly. 
When $I$ is regarded as a subset of $\underline{m}$,
equivalently, $I$ is regarded as a subgraph of some
Feynman diagram, it assigns to $f_{I,\star}(z)$. 
When $I$ is regarded
as an element in $\mathcal{P}(\underline{m})$, $I$ plays
the role of "ideal vertex" of a Feynman diagram arising from
quotient, i.e. the subgraph $I$ collapses to a "ideal vertex"
$\{I\}$ in quotient, thus, $I$ assigns to function 
$f_{I}(\zeta_{I})$. More general, for an element 
$\{U\}\cup\{I_{1},\cdots,I_{l}\}\in\Xi_{\underline{m}}$,
we assign it to the following star product

\begin{equation}
f_{\{U\}\cup\{I_{i}\},\star}(z,\zeta)=f_{I_{1}}(\zeta_{I_{1}})\star\cdots
\star f_{I_{l}}(\zeta_{I_{l}})\star f_{U,\star}(\textbf{z}_{U}).
\end{equation}
In this situation, the star product is defined as following

$$
\begin{array}{c}
g_{1}(\zeta_{1})\star\cdots\star g_{l}(\zeta_{l})\star
f_{1}(z_{1})\star\cdots\star f_{m}(z_{m})  \\
=\exp\{\hbar(\sum\limits_{i<j}
\mathcal{K}_{i^{\ast}j^{\ast}}
+\sum\limits_{i,j}\mathcal{K}_{i^{\ast}j}+
\sum\limits_{i<j}\mathcal{K}_{ij})\}
(g_{1}(\zeta_{1})\cdots f_{m}(z_{m})),
\end{array}
$$
where $\mathcal{K}_{i^{\ast}j^{\ast}}=
K_{i^{\ast}j^{\ast}}\partial_{\zeta_{i}}\partial_{\zeta_{j}}$,
$\mathcal{K}_{i^{\ast}j}=K_{i^{\ast}j}\partial_{\zeta_{i}}
\partial_{z_{j}}$, $\mathcal{K}_{ij}=K_{ij}
\partial_{z_{i}}\partial_{z_{j}}$.

Now we consider the star product

$$
f_{1}(z_{1})\star\cdots\star f_{m}(z_{m})
=\exp\{\hbar\sum\limits_{1\leqslant i<j\leqslant m}
\mathcal{K}_{ij}\}(f_{1}(z_{1})\cdots f_{m}(z_{m})).
$$
let $I\subset \underline{m}$, $I=\{i_{1},\cdots,i_{k}\}$ 
($1\leqslant i_{1}<\cdots<i_{k}\leqslant m,\,1<k<m$), 
similar to the case of Feynman diagrams we have a
factorisation:

$$
f_{\underline{m},\star}(\textbf{z})=f_{I,\star}
(\textbf{z}_{I})\star f_{I^{c},\star}(\textbf{z}_{I^{c}}).
$$
Without loss of generality, we assume $I=\{1,\cdots,k\}$,
and subset $I$ is labeled by $1^{\ast}$,
by the definition of star product, the formula (5.5), 
and the associativity of the star product, we have

$$
f_{I,\star}(\textbf{z}_{I})\star f_{I^{c},\star}(\textbf{z}_{I^{c}})
=\exp\{\hbar\sum\limits_{1\leqslant i\leqslant k,\,k+1\leqslant j\leqslant m}
K_{ij}\partial_{i}\partial_{j}\}
(f_{I,\star}(\textbf{z}_{I})f_{I^{c},\star}(\textbf{z}_{I^{c}})).
$$
The procedure of collapsing is shown as follows:
\begin{itemize}
   \item The indices in $I$ collapse to a "ideal index" $1^{\ast}$.
   \item $\mathcal{K}_{ij}$ collapse to $\mathcal{K}_{1^{\ast}j}$, i.e.
   $K_{ij}$ collapse to
   $K_{1^{\ast}j}$ ($1\leqslant i\leqslant k,\,k+1\leqslant j\leqslant m$),
   and partial derivatives $\partial_{1},\cdots,\partial_{k}$ 
   collapse to $\partial_{\zeta_{1}}$.
   \item We insert a factor $f_{I}(\zeta_{1})$, corresponding to
   "ideal vertex" $1^{\ast}$,
   into the expression of above factorisation.
\end{itemize}   
Thus we have:

$$
\begin{array}{c}
\exp\{\hbar\sum\limits_{1\leqslant i\leqslant k,\,k+1\leqslant j\leqslant m}
\mathcal{K}_{ij}\partial_{i}\partial_{j}\}
(f_{I,\star}(\textbf{z}_{I})f_{I^{c},\star}(\textbf{z}_{I^{c}}))  \\
\downarrow\,\,collapsing         
\end{array}
$$

\begin{equation}
f_{I,\star}(\textbf{z}_{I})
\exp\{\hbar\sum\limits_{j\in I^{c}}\mathcal{K}_{1^{\ast}j}\partial_{j}
\sum\limits_{i\in I}\partial_{i}\}
(f_{1}(z_{1})\cdots f_{k}(z_{k})f_{I^{c},\star}(\textbf{z}_{I^{c}})))|_{z_{1}=\cdots=z_{k}=\zeta}.
\end{equation}
The right factor in
expression (5.19) is called quotient of $f_{\underline{m},\star}(\textbf{z})$
by $f_{I,\star}(\textbf{z}_{I})$ denoted by $f_{\underline{m},\star}(\textbf{z})\diagup I$,
i.e. by dropping the factor $f_{I,\star}(\textbf{z}_{I})$ in (5.19) we reach
the definition of the quotient in the situation of star product.

\begin{equation}
f_{\underline{m},\star}(\textbf{z})\diagup I=
\exp\{\hbar\sum\limits_{j\in I^{c}}\mathcal{K}_{1^{\ast}j}\partial_{j}
\sum\limits_{i\in I}\partial_{i}\}
(f_{1}(z_{1})\cdots f_{k}(z_{k})f_{I^{c},\star}(\textbf{z}_{I^{c}}))
|_{z_{1}=\cdots=z_{k}=\zeta}.
\end{equation}

\begin{proposition}
	
	\begin{equation}
	f_{\underline{m},\star}(\textbf{z})\diagup I 
	=f_{I}(\zeta_{1})\star f_{I^{c},\star}(\textbf{z}_{I^{c}}).
	\end{equation}
\end{proposition}

To prove proposition 5.1 we need the floowing
obvious formula:

\begin{lemma}
$$
(\sum\limits_{i\in I}\partial_{i})^{l}(f_{1}(z_{1})\cdots 
f_{k}(z_{k}))|_{z_{1}=\cdots=z_{k}=\zeta_{1}}
=(f_{1}(\zeta_{1})\cdots f_{k}(\zeta_{1}))^{(l)}=(f_{I}(\zeta_{1}))^{(l)}.
$$
\end{lemma}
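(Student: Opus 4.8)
The statement to prove is Lemma 6.1, the simple identity
$$
\left(\sum_{i=1}^{k}\partial_{i}\right)^{l}\bigl(f_{1}(z_{1})\cdots f_{k}(z_{k})\bigr)\Big|_{z_{1}=\cdots=z_{k}=\zeta}
=\frac{d^{l}}{d\zeta^{l}}\bigl(f_{1}(\zeta)\cdots f_{k}(\zeta)\bigr).
$$

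This looks like it should be proven, so let me think about how...

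The plan is to prove this by induction on $l$, using the chain rule. The key observation is that if one sets $g(\zeta) = f_1(\zeta)\cdots f_k(\zeta)$, then restricting the several-variable function $F(z_1,\dots,z_k) = f_1(z_1)\cdots f_k(z_k)$ to the diagonal $z_1 = \cdots = z_k = \zeta$ gives exactly $g(\zeta)$. By the multivariable chain rule, $\frac{d}{d\zeta} F(\zeta,\dots,\zeta) = \sum_{i=1}^k (\partial_i F)(\zeta,\dots,\zeta)$, i.e. differentiating along the diagonal is the same as applying $\sum_i \partial_i$ and then restricting. So for the base case $l=1$ the identity is just the chain rule.

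For the inductive step, suppose the identity holds for $l$. The crucial point is that $\sum_i \partial_i$ maps products of the form $f_1(z_1)\cdots f_k(z_k)$ (and sums of such, with the $f_i$ replaced by their derivatives) to sums of the same form — each $\partial_i$ acting on $f_1(z_1)\cdots f_k(z_k)$ gives $f_1(z_1)\cdots f_i'(z_i)\cdots f_k(z_k)$, which is again a product of single-variable functions of the separate variables. Hence the operator $D := \sum_i \partial_i$ commutes with nothing problematic, and one can write $D^{l+1}F = D(D^l F)$; applying the induction hypothesis in the form "$D^l$ followed by diagonal restriction equals $\frac{d^l}{d\zeta^l}$ of the restriction" requires knowing that $D$ and diagonal restriction commute in the right order, which is again the $l=1$ chain rule applied to the function $D^l F$ (which is still of the separable-product form). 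So $\left.D^{l+1}F\right|_{\text{diag}} = \left.D(D^l F)\right|_{\text{diag}} = \frac{d}{d\zeta}\left(\left.D^l F\right|_{\text{diag}}\right) = \frac{d}{d\zeta}\frac{d^l}{d\zeta^l} g(\zeta) = \frac{d^{l+1}}{d\zeta^{l+1}} g(\zeta)$.

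There is no real obstacle here; the only thing to be careful about is stating the chain rule in the precise operator form "apply $\sum_i\partial_i$ then restrict to the diagonal" $=$ "restrict to the diagonal then apply $d/d\zeta$", and noting that the class of functions on which we iterate is closed under $\sum_i\partial_i$ so the induction goes through cleanly. One could alternatively give a one-line proof via the multinomial expansion: $D^l = \sum_{|\alpha|=l}\binom{l}{\alpha}\partial^\alpha$ and $\partial^\alpha(f_1(z_1)\cdots f_k(z_k)) = f_1^{(\alpha_1)}(z_1)\cdots f_k^{(\alpha_k)}(z_k)$, so the left side restricted to the diagonal is $\sum_{|\alpha|=l}\binom{l}{\alpha} f_1^{(\alpha_1)}(\zeta)\cdots f_k^{(\alpha_k)}(\zeta)$, which is exactly the generalized Leibniz rule for $\frac{d^l}{d\zeta^l}(f_1(\zeta)\cdots f_k(\zeta))$. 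I would present whichever is shorter — likely the Leibniz-rule version, since it avoids induction entirely.
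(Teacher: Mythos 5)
Your proof is correct. Note that the paper itself gives no proof of this lemma at all --- it is introduced only as ``a simple formula'' and stated without argument --- so there is nothing to compare your approach against. Either of your two arguments closes the gap cleanly: the one-line version via the multinomial expansion $\bigl(\sum_{i}\partial_{i}\bigr)^{l}=\sum_{|\alpha|=l}\binom{l}{\alpha}\partial^{\alpha}$ together with the generalized Leibniz rule is the most economical, and the inductive chain-rule version is equally valid since, as you observe, the class of sums of separable products is closed under $\sum_{i}\partial_{i}$, so the $l=1$ case (the chain rule along the diagonal) can be iterated.
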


\begin{proof}
$\mathbf{(Proof\,\,of\,\,proposition\,\,5.1)}$
Observing the formula (5.20), we have

$$
\begin{array}{c}
\exp\{\hbar\sum\limits_{j\in I^{c}}\mathcal{K}_{1^{\ast}j}\partial_{j}
\sum\limits_{i\in I}\partial_{i}\}
(f_{1}(z_{1})\cdots f_{k}(z_{k})f_{I^{c},\star}(\textbf{z}_{I^{c}})) \\
=\sum\limits_{k\geq 0}\frac{\hbar^{k}}{k!}
(\sum\limits_{j\in I^{c}}\mathcal{K}_{1^{\ast}j}\partial_{j}
\sum\limits_{i\in I}\partial_{i})^{k}
(f_{1}(z_{1})\cdots f_{k}(z_{k})f_{I^{c},\star}(\textbf{z}_{I^{c}})) \\
=\sum\limits_{k\geq 0}\frac{\hbar^{k}}{k!}
(\sum\limits_{i\in I}\partial_{i})^{k}
(f_{1}(z_{1})\cdots f_{k}(z_{k}))
(\sum\limits_{j\in I^{c}}\mathcal{K}_{1^{\ast}j}\partial_{j})^{k}
f_{I^{c},\star}(\textbf{z}_{I^{c}}).
\end{array}
$$
By lemma 5.1 we know that

$$
(\sum\limits_{i\in I}\partial_{i})^{k}
(f_{1}(z_{1})\cdots f_{k}(z_{k}))|_{z_{1}=\cdots=z_{k}=\zeta_{1}}
=(f_{I}(\zeta_{1}))^{(k)}.
$$
Finally, we get

$$
\begin{array}{c}
\exp\{\hbar\sum\limits_{j\in I^{c}}\mathcal{K}_{1^{\ast}j}\partial_{j}
\sum\limits_{i\in I}\partial_{i}\}
(f_{1}(z_{1})\cdots f_{k}(z_{k})f_{I^{c},\star}(\textbf{z}_{I^{c}}))
|_{z_{1}=\cdots=z_{k}=\zeta} \\
=\sum\limits_{k\geq 0}\frac{\hbar^{k}}{k!}
\partial_{\zeta_{1}}^{k}f_{I}(\zeta_{1})
(\sum\limits_{j\in I^{c}}\mathcal{K}_{1^{\ast}j}\partial_{j})^{k}
f_{I^{c},\star}(\textbf{z}_{I^{c}}) \\
=\sum\limits_{k\geq 0}\frac{\hbar^{k}}{k!}
(\sum\limits_{j\in I^{c}}\mathcal{K}_{1^{\ast}j}
\partial_{\zeta_{1}}\partial_{j})^{k}
(f_{I}(\zeta_{1})f_{I^{c},\star}(\textbf{z}_{I^{c}}))
=f_{I}(\zeta_{1})\star f_{I^{c},\star}(\textbf{z}_{I^{c}}).
\end{array}
$$

\end{proof}

We hope to compare the quotient of star product
with one of adjacency matrices and Feynman diagrams
to show that they are compatible. We now take
one term from the expression of the star product

$$
f_{\underline{m},\star}(\textbf{z})=
\sum\limits_{M\in M_{adj}(m,\mathbb{N})}
\frac{\hbar^{degM}}{M!}\mathcal{K}_{M}
(f_{1}(z_{1})\cdots f_{m}(z_{m})),
$$
that is

$$
\frac{\hbar^{degM}}{M!}\mathcal{K}_{M}
(f_{1}(z_{1})\cdots f_{m}(z_{m})).
$$
We are interested in how the factorization
is shown in this situation. At level of the
adjacency matrix, the factorization should be
expressed as $M=M_{1}+\iota_{I}M_{I}+\iota_{I^{c}}M_{I^{c}}$,
$\iota_{I}:M_{adj}(k,\mathbb{N})\rightarrow M_{adj}(m,\mathbb{N})$ and 
$\iota_{I^{c}}:M_{adj}(m-k,\mathbb{N})\rightarrow M_{adj}(m,\mathbb{N})$
are defined as in section 4.1.
It is obvious that $M!=M_{1}!M_{I}!M_{I^{c}}!$,
$degM=degM_{I}+deg(M-\iota_{I}M_{I})$, and
$\mathcal{K}_{M}=\mathcal{K}_{M_{1}}
\mathcal{K}_{M_{I}}\mathcal{K}_{M_{I^{c}}}$.
Now we insert a factor $\prod_{i\in I}f_{i}(z_{i})$,
then we have,

$$
\frac{\hbar^{degM_{I}}}{M_{I}!}
\mathcal{K}_{M_{I}}(\prod_{i\in I}f_{i}(z_{i}))
\frac{\hbar^{degM_{1}}}{M_{1}!}
\mathcal{K}_{M_{1}}(\prod_{i\in I}f_{i}(z_{i}))
\frac{\hbar^{degM_{I^{c}}}}{M_{I^{c}}!}
\mathcal{K}_{M_{I^{c}}}(\prod_{i\in I^{c}}f_{i}(z_{i})).
$$
The procedure of collapsing takes place in
the middle factor

$$
\begin{array}{ccc}
 & collaping & \\
\frac{\hbar^{degM_{1}}}{M_{1}!}
\mathcal{K}_{M_{1}}(\prod_{i\in I}f_{i}(z_{i})) &
\longrightarrow &
\frac{\hbar^{degM^{\ast}}}{M^{\ast}!}
\mathcal{K}_{M^{\ast}}f_{I}(\zeta_{I}),
\end{array}
$$
where $M^{\ast}$ satisfies 
$M^{\ast}\in M_{adj}(m-k+1,\mathbb{N}),\,
M^{\ast}+\iota_{I^{c}}M_{I^{c}}=M\diagup M_{I}$,
and $\iota_{I^{c}}:M_{adj}(m-k,\mathbb{N})
\rightarrow M_{adj}(m-k+1,\mathbb{N})$.

Dropping the factor $\frac{\hbar^{degM_{I}}}{M_{I}!}
\mathcal{K}_{M_{I}}f_{I}(\textbf{z}_{I})$,
we get

$$
\frac{\hbar^{deg(M\diagup M_{I})}}{(M\diagup M_{I})!}
\mathcal{K}_{M\diagup M_{I}}(f_{I}(\zeta_{I})\prod_{i\in I^{c}}f_{i}(z_{i})).
$$
In summary, we now reach:

\begin{proposition}
\begin{equation}
f_{\underline{m},\star}(\textbf{z})\diagup I=
\sum\limits_{M\in M_{adj}(m,\mathbb{N})}
\frac{\hbar^{deg(M\diagup M_{I})}}{(M\diagup M_{I})!}
\mathcal{K}_{M\diagup M_{I}}
(f_{I}(\zeta_{I})\prod_{i\in I^{c}}f_{i}(z_{i})).
\end{equation}
\end{proposition}

Previous discussion about quotient and collapsing of
star product can be generalised to the case of
partitions. Let $(I_{1},\cdots,I_{l})$ be a partition in $\underline{m}$,
$I=\bigcup_{i=1}^{l}I_{i}$, $I^{c}=\underline{m}\setminus I$,
$|I|=k\,(1<k<m)$, then we have
$$
\underline{m}\diagup(I_{i})
=I^{c}\cup\{I_{i}\}.
$$
Without loss of generality, we assume
the subset $I_{i}$ is labeled by $i^{\ast}$
($i=1,\cdots,l$), which means we identify
$I^{c}\cup\{I_{i}\}$ with $I^{c}\cup\{1^{\ast},\cdots,l^{\ast}\}$.
It is easy to check that
$$
\begin{array}{c}
(f_{\underline{m},\star}(\textbf{z})\diagup I_{1})\diagup I_{2} \\
=f_{I_{1}}(\zeta_{1})\star
(f_{\underline{m}\setminus I_{1},\star}(z)\diagup I_{2})  \\
=f_{I_{1}}(\zeta_{1})\star f_{I_{2}}(\zeta_{2})\star
f_{\underline{m}\setminus (I_{1}\cup I_{2}),\star}
(\textbf{z}_{(I_{1}\cup I_{2})^{c}}).
\end{array}
$$
We denote the quotient of $f_{\underline{m},\star}(\textbf{z})$ by
$(f_{I_{1},\star}(\textbf{z}_{I_{1}}),\cdots,
f_{I_{l},\star}(\textbf{z}_{I_{l}}))$ by 
$f_{\underline{m},\star}(\textbf{z})\diagup (I_{i})$ simply, inductively,
we have formula similar to (5.21) as following:
\begin{equation}
      f_{\underline{m},\star}(\textbf{z})\diagup (I_{i})    
      =f_{I_{1}}(\zeta_{1})\star\cdots\star
     f_{I_{l}}(\zeta_{l})\star f_{I^{c},\star}(\textbf{z}_{I^{c}}),  
\end{equation}
and

\begin{equation}
f_{\underline{m},\star}(\textbf{z})\diagup (I_{i})=
\sum\limits_{M\in M_{adj}(m,\mathbb{N})}
\frac{\hbar^{deg(M\diagup(I_{i}))}}{(M\diagup(I_{i}))!}
\mathcal{K}_{M\diagup(I_{i})}(f_{I_{1}}(\zeta_{1})\cdots
f_{I_{l}}(\zeta_{l})\prod\limits_{i\in I^{c}}f_{i}(z_{i})).
\end{equation}

Let $(J_{1},\cdots,J_{k})$ be a partition in
$\underline{m}\diagup(I_{i})$,
we consider $(f_{\underline{m},\star}(\mathbf{z})\diagup(I_{i}))
\diagup(J_{j})$. If we take
$$
(K_{\lambda})=(I_{i})_{I_{i}
\notin\mathcal{R}(\{J_{j}\})}
\cup(\mathcal{R}(J_{j})),
$$
we can prove

\begin{proposition}

\begin{equation}
(f_{\underline{m},\star}(\mathbf{z})\diagup(I_{i}))
\diagup(J_{j})=f_{\underline{m},\star}(\mathbf{z})\diagup
(K_{\lambda}).
\end{equation}
\end{proposition}

The proof of (5.25) is similar to the situations
in previous sections.

Combining (5.18) and (5.23) we know that
$$
f_{\{U\}\cup\{I_{i}\},\star}(\mathbf{z}_{U},
\zeta_{I_{1}},\cdots,\zeta_{I_{l}})=
f_{V,\star}(\mathbf{z}_{V})\diagup (I_{i}),
$$
where $V=U\cup\mathcal{R}(\{I_{i}\})$,
$l=|\{I_{i}\}|$. Moreover, we assign a sequence 
$$
\{D_{1}\cup E_{1},\cdots,D_{k}\cup E_{k}\}
\in\mathcal{P}_{dis}(\Xi_{\underline{m}})
$$
to 
$$
f_{\{D_{j}\cup E_{j}\},\star}
=\prod\limits_{j}f_{D_{j}\cup E_{j},\star}
(\mathbf{z}_{D_{j}},\zeta_{E_{j}}),
$$ 
where the multiplication is point-wise one of functions
and each factor $f_{D_{j}\cup E_{j},\star}
(\mathbf{z}_{D_{j}},\zeta_{E_{j}})$ is given by
(5.18), for example, if $E_{j}=\{I_{1},\cdots,I_{k_{j}}\}$,
then

$$
f_{D_{j}\cup E_{j},\star}
(\mathbf{z}_{D_{j}},\zeta_{E_{j}})=
f_{I_{1}}(\zeta_{I_{1}})\star\cdots\star 
f_{I_{k_{j}}}(\zeta_{I_{k_{j}}})
\star f_{D_{j}}(\mathbf{z}_{D_{j}}).
$$
We assume $(D_{j}\cup E_{j})\subset U\cup\{I_{i}\}$
and discuss the quotient 
$$
\begin{array}{c}
f_{\{U\}\cup\{I_{i}\}}(z,\zeta)\diagup(D_{j}\cup E_{j}) \\
=(\cdots(f_{\{U\}\cup\{I_{i}\}}(z,\zeta)\diagup D_{1}\cup E_{1})
\diagup\cdots)\diagup D_{k}\cup E_{k}.
\end{array}
$$
Let $\{I_{i^{\prime}}\}=\{I_{i}\}\setminus\mathcal{R}(\{E_{j}\})$,
then we have
$$
\begin{array}{c}
f_{\{U\}\cup\{I_{i}\}}(z,\zeta)\diagup(D_{j}\cup E_{j})          \\
=f_{U\setminus\mathcal{R}(\{D_{j}\}),\star}(z)\star
f_{\{I_{i^{\prime}}\}}(\zeta)\star 
f_{\{D_{j}\cup\mathcal{R}(E_{j})\},\star}(\zeta)  \\
=f_{V,\star}(z)\diagup(M_{\mu}),       
\end{array}
$$
where $(M_{\mu})=(I_{i^{\prime}})\cup
(D_{j}\cup\mathcal{R}(E_{j}))$. Particularly, if
$\{I_{i}\}=\mathcal{R}(\{E_{j}\})$, then
$(M_{\mu})=(D_{j}\cup\mathcal{R}(E_{j}))$
and
$$
\begin{array}{c}
f_{\{U\}\cup\{I_{i}\}}(z,\zeta)\diagup(D_{j}\cup E_{j})          \\
=f_{U\setminus\mathcal{R}(\{D_{j}\}),\star}(z)\star 
f_{\{D_{j}\cup\mathcal{R}(E_{j})\},\star}(\zeta) \\  
=f_{U\cup\mathcal{R}(\{I_{i}\}),\star}(z)
\diagup(D_{j}\cup\mathcal{R}(E_{j})).    
\end{array}
$$

Let $\{K_{\lambda}\cup L_{\lambda}\}\in
\mathcal{P}_{dis}(\Xi_{\underline{m}})$ such that
$(K_{\lambda}\cup L_{\lambda})\subset(D_{j}\cup E_{j})$,
then we have decomposition
$$
(K_{\lambda}\cup L_{\lambda})=
\bigcup\limits_{i}(K_{\lambda_{ij}}\cup L_{\lambda_{ij}}),
(K_{\lambda_{ij}}\cup L_{\lambda_{ij}})\subset D_{i}\cup E_{i}.
$$
It is natural for us to define the following quotient:
$$
f_{\{D_{i}\cup E_{i}\},\star}(z,\zeta)\diagup(K_{\lambda}\cup L_{\lambda})
=\prod\limits_{i}f_{D_{i}\cup E_{i},\star}(z,\zeta)\diagup
(K_{\lambda_{ij}}\cup L_{\lambda_{ij}}).
$$
If $\mathcal{R}(\{E_{j}\})=\mathcal{R}(\{L_{\lambda}\})$,
from above discussion we know that
$$
f_{\{D_{i}\cup E_{i}\},\star}(z,\zeta)\diagup(K_{\lambda}\cup L_{\lambda})
=\prod\limits_{i}f_{D_{i}\cup\mathcal{R}(E_{i}),\star}(z)\diagup
(K_{\lambda_{ij}}\cup\mathcal{R}(L_{\lambda_{ij}})).
$$

\subsection{Hopf algebra}

The discussions in subsection 5.3 indicate that
there is a map 
$F_{\star}:\mathcal{P}_{dis}(\Xi_{\underline{m}})
\to\mathbf{C}_{\mathcal{A},\hbar}^{\infty}$,
thus, a linear extension of $F_{\star}$ 
denoted by $F_{\star}$ also,
$F_{\star}:\mathbf{Span}_{\mathbb{K}}
(\mathcal{P}_{dis}(\Xi_{\underline{m}}))
\to\mathbf{Span}_{\mathbb{K}}
(\mathbf{C}_{\mathcal{A},\hbar}^{\infty})$,
where $\mathbb{K}=\mathbb{R}$ or $\mathbb{C}$
and

$$
\mathbf{C}_{\mathcal{A},\hbar}^{\infty}=
\{\sum\limits_{k\geq 0}\hbar^{k}c_{k}
\partial^{\alpha_{k}}f_{I_{k}}(\mathbf{z}_{I_{k}})
\partial^{\beta_{k}}f_{J_{k}}(\zeta_{J_{k}})|
c_{k}\in\mathcal{A},\,I_{k}\subset\underline{m},\,
J_{k}\in\mathcal{P}_{dis}^{2}(\underline{m})\}.
$$
Above facts suggests us to construct
the coproduct in the situation of the star 
product based on the discussions in section 3.

\begin{definition}
\begin{itemize}
\item
Let $\{U\}\cup\{I_{i}\}\in\Xi_{\underline{m}}$,
$\{D_{j}\cup E_{j}\}\in\mathcal{P}_{dis}(\Xi_{\underline{m}})$,
$(D_{j}\cup E_{j})\subset U\cup\{I_{i}\}$, 
$\mathcal{R}(\{E_{j}\})=\{I_{i}\}$, we have

\begin{equation}
\triangle_{(D_{j}\cup E_{j})}f_{\{U\}\cup\{I_{i}\},\star}(z,\zeta)
=f_{\{D_{j}\cup E_{j}\},\star}(z,\zeta)\otimes f_{V,\star}(z)\diagup(M_{j}),
\end{equation}
where $V=U\cup\mathcal{R}(\{I_{i}\})$ and
$$
\{M_{j}\}=(id\times\mathcal{R}_{1})
(U\cup\{I_{i}\}\diagup(D_{j}\cup E_{j}))
=\{D_{j}\cup\mathcal{R}(E_{j})\}.
$$
\item Let $\{K_{\lambda}\cup L_{\lambda}\},\,\{D_{j}\cup E_{j}\}
\in\mathcal{P}_{dis}(\Xi_{\underline{m}})$,
$(K_{\lambda}\cup L_{\lambda})\subset(D_{j}\cup E_{j})$,
$\mathcal{R}(\{L_{\lambda}\})=\mathcal{R}(\{E_{j}\})$,
we have

\begin{equation}
\triangle_{(K_{\lambda}\cup L_{\lambda})}
f_{\{D_{j}\cup E_{j}\},\star}=f_{\{K_{\lambda}\cup L_{\lambda}\},\star}
\otimes\prod\limits_{i}f_{D_{i}\cup E_{i},\star}\diagup
(K_{\lambda_{ij}}\cup L_{\lambda_{ij}}),
\end{equation}
where 
$$
(K_{\lambda}\cup L_{\lambda})=\bigcup\limits_{i}
(K_{\lambda_{ij}}\cup L_{\lambda_{ij}}),\,
(K_{\lambda_{ij}}\cup L_{\lambda_{ij}})
\subset D_{i}\cup E_{i}.
$$
\end{itemize}
\end{definition}

\begin{definition}
\begin{itemize}
\item Let $\{U\}\cup\{I_{i}\}\in\Xi_{\underline{m}}$,
we define

\begin{equation}
\begin{array}{c}
\triangle f_{U\cup\{I_{i}\},\star}(z,\zeta)=
f_{U\cup\{I_{i}\},\star}(z,\zeta)\otimes 1+
1\otimes f_{U\cup\{I_{i}\},\star}(z,\zeta) \\
+\sum\limits_{(D_{j}\cup E_{j})\subset U\cup\{I_{i}\},\,
\mathcal{R}(\{E_{j}\})=\{I_{i}\}}\triangle_{(D_{j}\cup E_{j})}
f_{U\cup\{I_{i}\},\star}(z,\zeta).
\end{array}
\end{equation}
\item Let $\{D_{j}\cup E_{j}\}\in
\mathcal{P}_{dis}(\Xi_{\underline{m}})$,
we define

\begin{equation}
\begin{array}{c}
\triangle f_{\{D_{j}\cup E_{j}\},\star}          
=f_{\{D_{j}\cup E_{j}\},\star}\otimes 1
+1\otimes f_{\{D_{j}\cup E_{j}\},\star} \\
+\sum\limits_{(K_{\lambda}\cup L_{\lambda})\subset(D_{j}\cup E_{j}),
\,\mathcal{R}(\{L_{\lambda}\})=\mathcal{R}(\{E_{j}\})}
\triangle_{(K_{\lambda}\cup L_{\lambda})} 
f_{\{D_{j}\cup E_{j}\},\star}.       
\end{array}
\end{equation}
\end{itemize}
\end{definition}

By the same way as the contents in
section 3, we can prove the coproduct
in defintion 5.8 is well defined.
Hence, $\mathbf{Span}_{\mathbb{K}}
(\mathbf{C}_{\mathcal{A},\hbar}^{\infty})$
is a coalgebra. Furthermore, the reduced
coproduct is conilpotent, therefore,
$T(\mathbf{Span}_{\mathbb{K}}
(\mathbf{C}_{\mathcal{A},\hbar}^{\infty}))$
and $S(\mathbf{Span}_{\mathbb{K}}
(\mathbf{C}_{\mathcal{A},\hbar}^{\infty}))$
are two Hopf algebras. On the other hand, we know that
the elments in $S(\mathbf{Span}_{\mathbb{K}}
(\mathbf{C}_{\mathcal{A},\hbar}^{\infty}))$
can be identified with the polynomials of the formal power
series in $\mathbf{C}_{\mathcal{A},\hbar}^{\infty}$
under the multiplication of the formal power
series, thus we have

$$
S(\mathbf{C}_{\mathcal{A},\hbar}^{\infty})
=\mathcal{A}(\mathbf{C}_{\mathcal{A},\hbar}^{\infty}),
$$
where $\mathcal{A}(\mathbf{C}_{\mathcal{A},\hbar}^{\infty})$
is the algebra generated by
$\mathbf{C}_{\mathcal{A},\hbar}^{\infty}$
with the multiplication of the formal power
series.

\end{document}